\def\Ac{{\cal A}}
\def\Gc{{\cal G}}
\def\Ec{{\cal E}}
\def\Rc{{\cal R}}
\def\Zc{{\cal Z}}
\def\vect#1{\mbox{\boldmath{$#1$}}}
\def\mR{\mathbb{R}}
\def\mC{\mathbb{C}}
\def\wG{\widehat{G}}
\def\wU{\widehat{U}}
\def\wV{\widehat{V}}
\def\wf{\widehat{f}}
\def\wg{\widehat{g}}
\def\wP{\widehat P}
\def\hxi{\hat\psi}
\def\Rho{{X}}
\def\sinc{\operatorname{sinc}}
\def\diag{\operatorname{diag}}
\newcommand{\argmin}{\mathop{\mbox{argmin}}}
\newcommand{\ltr}{\left\langle}
\newcommand{\rtr}{\right\rangle}
\def\bfrho{\mbox{\boldmath$\rho$}}
\def\bfgamma{\mbox{\boldmath$\gamma$}}
\newtheorem{thm}{Theorem}[section]
\newtheorem{rem}[thm]{Remark}
\newtheorem{prop}[thm]{Proposition}
\newtheorem{lem}[thm]{Lemma}
\newtheorem{col}[thm]{Corollary}
\newenvironment{definition}[1][Definition]{\begin{trivlist}
\item[\hskip \labelsep {\bfseries #1}]}{\end{trivlist}}
\title{Imaging strong localized scatterers with sparsity promoting optimization\footnote{This version: August 1, 2013. M. M. and G. P. were supported by AFOSR grant FA9550-11-1-0266.}} 
\author[1]{Anwei Chai\thanks{anwei@math.stanford.edu}}
\author[2]{Miguel Moscoso\thanks{moscoso@math.uc3m.es}}
\author[1]{George Papanicolaou\thanks{papanico@math.stanford.edu}}
\affil[1]{Department of Mathematics, Stanford University, California 94305, USA}
\affil[2]{Gregorio Mill\'{a}n Institute, Universidad Carlos III de Madrid, Madrid 28911, Spain}
\date{}
\begin{document}
\maketitle

\begin{abstract}
We study active array imaging of small but strong scatterers in homogeneous media when multiple scattering
between them is important. We use the Foldy-Lax equations to model wave propagation with multiple scattering when
the scatterers are small relative to the wavelength. 
In active array imaging we seek to locate the positions and reflectivities of the scatterers, that is, to determine the support of the reflectivity
vector and the values of its nonzero elements from echoes recorded on the array.
This is a nonlinear inverse problem because of
the multiple scattering. We show in this paper how to avoid the nonlinearity and form images non-iteratively through a two-step process which involves
$\ell_1$ norm minimization. However, under certain illuminations imaging may be
affected by screening, where some scatterers are obscured by multiple scattering.
This problem can be mitigated by using multiple and diverse illuminations.
In this case, we determine 
solution vectors that have a common support. The uniqueness and stability of the support of the reflectivity vector
obtained with single or multiple illuminations 
are analyzed, showing that the errors are proportional to the 
amount of noise in the data with a proportionality factor 
dependent on the sparsity of the solution and the mutual coherence of the sensing matrix, which is determined by the geometry of the imaging array.
Finally, to filter out noise and improve the resolution of the images, we propose an approach that combines optimal illuminations using the
singular value decomposition of the response matrix together with  sparsity promoting optimization
jointly for all illuminations.
This work is an extension of our previous paper \cite{CMP13} on imaging using optimization
techniques where we now account for multiple scattering effects.
\end{abstract}

\smallskip\noindent\textbf{Keywords.} array imaging, joint sparsity, multiple scattering, Foldy-Lax equations

\section{Introduction}

Active array imaging when multiple scattering between the scatterers is important
is challenging because it is a nonlinear inverse problem. In most applications, for example, in seismic imaging, ultrasonic
non-destructive testing, synthetic aperture radar, etc., the imaging methods that are used ignore multiple scattering and
deal with a linear inverse problem. This may result in some loss of resolution, especially if the imaging setup provides
only partial information with, for example, a limited frequency range, limited illuminations or small arrays relative
to the distance from the scatterers. 
In this paper, we study active array imaging at only one frequency and with single and multiple illuminations. The arrays
considered are not small and could surround the scatterers. We consider the full nonlinear inverse problem when multiple
scattering is included but limit the analysis to the case of imaging when the scatterers are small compared
to the wavelength so that the Foldy-Lax approximation \cite{F45,L51,L52,BTPB02,martin06} can be used.
Given the array data, we formulate imaging as
an underdetermined optimization problem with nonlinear constraints. It is underdetermined because the set of possible locations
of the scatterers in the image regions we consider here is much larger than the array data set, as
is often the case.

In this paper, we formulate the nonlinear optimization problem for imaging in two steps.
In the first step, we treat the scatterers as equivalent sources of unknown locations whose strengths
are also unknown but are related in a known way to the illumination, to the multiple scattering and to the underlying unknown
reflectivities of the scatterers.
Under specific, if somewhat conservative, hypotheses about the array imaging setup and the
measurement noise level, we show that the location of the sources, that is, the scatterers, 
can be recovered exactly in the first step. This is because we employ an $\ell_1$ minimization method
that recovers the support of sparse solutions exactly.  In the second step, once the location of the scatterers is fixed,
their true reflectivities can be recovered using the known relationship to the source strengths obtained in step one.
This is an explicit relation that comes from the Foldy-Lax equations, given the scatterer locations. 
The key to this two-step approach is the possibility of recovering exactly the locations of the
scatterers in the first step. This effectively linearizes what is a nonlinear inverse problem.
Our theoretical analysis is mainly based on the work by Tropp \cite{TROPP04}. We give sufficient 
conditions on the imaging setup under which we can expect perfect reconstructions without noise, and conditions under which these
reconstructions are stable when the data is contaminated with additive noise.

We note that the two-step imaging method we have described is applied at first to array
data from a single illumination, in \S\ref{sec:SMV}. However, because of the {\em screening effects},
not all scatterers can be recovered from data generated by a single illumination, in general.
Moreover,
using a single illumination with array imaging configurations often used in practice 
is not robust when data is contaminated with noise.   These issues can be 
handled by applying the same two-step imaging method to data from multiple illuminations. For this case, a
matrix version of the $\ell_1$ minimization method is described and analyzed in \S\ref{sec:MMV}. 

When we have access to the full array response matrix, that is, when we have data for a full set of linearly independent illumination
vectors, it is possible to image using its singular value decomposition (SVD).
This is done in \cite{Gruber04,DMG05,MG06} where the authors show that, despite of the nonlinearity of the mapping from the reflectivities of the scatterers to the
data received on the array, one can form an image and find the locations of the individual point scatterers by beamforming with the significant singular vectors. This
is the MUSIC (MUltiple SIgnal Classification) method, which is essentially the same as beamforming or migration using the significant singular
vectors as illuminations. These illumination vectors are optimal in the sense that they result in array data with maximal power, which is proportional to the
associated singular values.

We also use optimal illuminations in the $\ell_1$ based minimization approach we introduce here.
This increases the efficiency of the data collection process and lowers the impact of the noise in the data, 
as we show with numerical simulations in \S\ref{simulation}.
This is so even when only a few optimal illuminations are used. We used
optimal illuminations in \cite{CMP13} for a proposed hybrid $\ell_1$ method in the context of array
imaging when multiple scattering is negligible. The hybrid $\ell_1$ method not only uses the optimal
illumination but also projects the data into appropriate subspaces. This last step cannot be carried out when multiple
scattering is important because after projection the sensing matrix contains unknown factors related to the multiple scattering.

Finally,  the performance of the imaging method we propose 
depend on the mutual coherence of the sensing matrix, defined in \S\ref{sec:SMV}.
We, therefore, carry out an analysis of the mutual coherence of the sensing matrix upon different imaging configurations,
with planar and spherical arrays, in \S\ref{sec:arrayconfig}. We show that spherical arrays give images with better resolutions
and smaller upper bound of the corresponding sensing matrix than planar arrays. 

We note that the formulation for imaging with non-negligible multiple scattering can also
be combined with the optimization approach for imaging problems with intensity only measurements, which is 
studied by the authors in \cite{CMP11}. 

The organization of the paper is as follows. In the rest of this section we introduce some basic
notations used throughout the paper. In \S\ref{sec:formulation}, we formulate the array imaging problem with 
multiple scattering between the scatterers using the Foldy-Lax equations. In \S\ref{sec:SMV}, we analyze the imaging problem with a
single illumination, and with and without additive noise. In \S\ref{sec:MMV}, we discuss active array imaging with multiple illuminations, where we also discuss the use of optimal illuminations and give an
efficient algorithm for solving multiple measurement vector (MMV) problems. In \S\ref{sec:arrayconfig}, we discuss the 
impact of the array configuration on the performance of the methods proposed in the paper.
In \S\ref{simulation}, we illustrate our results with various numerical examples under conditions much less conservative
than the ones required by the theory. The proofs of all
the theoretical results are given in the appendices at the end of the paper.

\subsection*{Notation}
Throughout the paper, vectors are denoted by lower case letters in boldface, and matrices by capital letters in boldface.
Given a vector $\vect v$ of length $N$, the $i^\mathrm{th}$ entry is denoted by $v_i$. For a matrix $\mathbf{Y}$ of
size $M\times N$, the $i^\mathrm{th}$ row is denoted by $Y_{i\cdot}$, the $j^\mathrm{th}$ column 
by $Y_{\cdot j}$, and the $(i,j)$ entry by $Y_{ij}$. 

We will use several different norms. For $p\ge1$, $q\ge1$, the $\ell_p$ norm of vector $\vect v$ is
defined by \[\|\vect v\|_{\ell_p}=\left(\sum_{i=1}^n|v_i|^p\right)^{1/p},\]
while the Frobenius norm of a matrix $\mathbf{Y}$ is given by
\[\|\mathbf{Y}\|_F=\left(\sum_{i=1}^m\sum_{j=1}^n|Y_{ij}|^2\right)^{1/2}=\sqrt{\operatorname{trace}(\mathbf{Y} \mathbf{Y}^\ast)}.\]
Here, $\ast$ is the conjugate transpose operator of vectors and matrices. We will use the superscript $T$ for the transpose only operator.
We will also need to use the operator norm of a matrix, defined as
\[
\|\mathbf{Y}\|_{p\rightarrow q}=\max_{\vect v\neq0}\frac{\|\mathbf{Y}\vect v\|_{\ell_q}}{\|\vect v\|_{\ell_p}},
\]
and the $(p,q)$-norm function $J_{p,q}(\cdot)$ defined as
\begin{equation}
J_{p,q}(\mathbf{Y})=\left(\sum_{i=1}^m\|Y_{i\cdot}\|_{\ell_p}^q\right)^{1/q},
\label{eq:Jpq}
\end{equation}
which is simply the $\ell_q$ norm of the vector formed by $\ell_p$ norm of all rows of a matrix.

\section{Formulation of active array imaging}\label{sec:formulation}
In active array imaging we seek to locate the positions and reflectivities of a set of scatterers using the data recorded on an array ${\cal A}$. 
By an active array, we mean a collection of $N$ transducers that emit spherical wave signals from positions $\vect x_s\in{\cal A}$ and record the echoes
with receivers at positions $\vect x_r\in{\cal A}$. The transducers are placed at distance $h$ between them, which is of
the order of the wavelength $\lambda=2\pi c_0/\omega$, where $c_0$ is the wave speed in the medium and $\omega$ is the frequency of the probing signal.

We now introduce the direct and inverse scattering problems for imaging 
point-like scatterers with an active array in a homogeneous medium. We 
consider the case in which multiple scattering among $M$ scatterers is not negligible.
The scatterers, with unknown reflectivities $\alpha_j\in\mC$ and positions $\vect y_{n_j}$, $j=1,\ldots,M$,
are assumed to be located within a region of interest called the image window (IW), which is centered at a distance
$L$ from the array. We discretize the IW using a uniform grid of $K$ points $\vect y_j$, $j=1,\ldots,K$, 
and assume that each scatterer is located at one of these $K$ grid points so that
$$\{\vect y_{n_1},\ldots,\vect y_{n_M}\}\subset\{\vect y_1,\ldots,\vect y_K\}.$$
Furthermore, we assume that near-field multiple scattering effects are negligible because the scatterers are sufficiently far apart.
Thus, we can use the far-field approximation 
\begin{equation}\label{greenfunc}
\wG_0(\vect y,\vect x,\omega)=\frac{\exp(i \kappa|\vect x-\vect y|)}{4\pi|\vect x-\vect y|}\, ,
\end{equation}
with $\kappa=\omega/c_0$, for the free-space Green's function $\wG_0(\vect y,\vect x,\omega)$ to characterize
wave propagation from point $\vect x$ to point $\vect y$ in the homogeneous medium. 

We formulate the scattered wave field in a homogeneous medium in terms of the Foldy-Lax equations \cite{F45,L51,L52}.
In this model framework, the scattered wave received at transducer $\vect x_r$ due to a narrow band signal of angular frequency $\omega$
sent from $\vect x_s$ can be written as the sum of all scattered waves from the $M$ scatterers
\begin{equation}\label{eq:P}
\wP(\vect x_r, \vect x_s)=\sum_{j=1}^M\hxi_{j}^s(\vect x_r; \vect y_{n_1},\ldots,\vect y_{n_M}).
\end{equation}
Here, and in all that follows, we will drop the dependence of waves and measurements on the frequency $\omega$.
In \eqref{eq:P}, $\hxi_{j}^s(\vect x_r;\vect y_{n_1},\ldots,\vect y_{n_M})$ represents the scattered wave observed at
$\vect x_r$ due to the scatterer at position $\vect y_{n_j}$.
It actually depends on the positions of all the scatterers
$\vect y_{n_j}$, $j=1,\ldots,M$, and it is given by
\begin{equation}\label{eq:scattered-field}
\hxi_{j}^s(\vect x_r; \vect y_{n_1},\ldots,\vect y_{n_M})=\alpha_j\wG_0(\vect x_r,\vect y_{n_j})\hxi_{j}^e(\vect y_{n_1},\ldots,\vect y_{n_M}),
\end{equation}
where $\hxi_{j}^e(\vect y_{n_1},\ldots,\vect y_{n_M})$ represents the exciting field at the scatterer located at $\vect y_{n_j}$.
Because we ignore self-interacting fields, the exciting field at $\vect y_{n_j}$ is equal to the sum of the incident field
$\hxi^{inc}_j:=\hxi^{inc}(\vect y_{n_j},\vect x_s)$ at $\vect y_{n_j}$ and the scattered fields at $\vect y_{n_j}$ due to all scatterers except for the one at $\vect y_{n_j}$. Hence, it is given by 
\begin{equation}
\hxi_{j}^e(\vect y_{n_1},\ldots,\vect y_{n_M})= \hxi^{inc}(\vect y_{n_j},\vect x_s)  +
\sum_{k\neq j}\alpha_k\wG_0(\vect y_{n_j},\vect y_{n_k})\hxi_{k}^e(\vect y_{n_1},\ldots,\vect y_{n_M}),\quad j=1,2,\ldots,M\, .
\label{eq:effectivefields}
\end{equation}
This is a self-consistent system of $M$ equations for the $M$ unknown exciting fields
$$\hxi_1^e:=\hxi_{1}^e(\vect y_{n_1},\ldots,\vect y_{n_M}),\ldots, \hxi_M^e:=\hxi_{M}^e(\vect y_{n_1},\ldots,\vect y_{n_M}), $$
which  can be written in matrix form as
\begin{equation}\label{eq:FL}
\mathbf{Z}_M (\vect\alpha) \, \mathbf{\Phi}^e=\mathbf{\Phi}^{inc} \,\, .
\end{equation}
In \eqref{eq:FL}, $\mathbf{\Phi}^e = [\hxi^e_1,\ldots,\hxi^e_M]^T$ and
$\mathbf{\Phi}^{inc} =[\hxi^{inc}_1,\ldots,\hxi^{inc}_M]^T$ are 
vectors whose components are the exciting and incident fields on the $M$ scatterers, respectively, and
\begin{equation}
\label{eq:Zm}
\big(Z_M (\vect\alpha)\big)_{ij}=\begin{cases}1,& i=j\\
-\alpha_j\wG_0(\vect y_{n_i},\vect y_{n_j}),& i\neq j\, ,
\end{cases}
\end{equation}
is the $M\times M$ Foldy-Lax matrix which depends on  the reflectivities $\vect\alpha=(\alpha_1,\ldots,\alpha_M)$.
With the solution of \eqref{eq:FL}, we  use \eqref{eq:scattered-field} and
\eqref{eq:P} to compute the scattered data received at the array. 

Note that the exciting fields $\mathbf{\Phi}^e$ depend on the incident fields $\mathbf{\Phi}^{inc}$ and, hence, they depend on the illumination sent from the array.
To characterize it, we define the illumination vector $\vect\wf=[\wf_1,\ldots,\wf_N]^T$ whose entries denote the strength of the signals sent 
from each of the $N$ transducers in the array. We will assume that the illumination vectors are normalized, so $\|\vect\wf\|_{\ell_2}=1$.

To write the data received on the array in a more compact form, we define the Green's function or steering vector $\vect\wg_0(\vect y)$ 
at location $\vect y$ in the IW as 
\begin{equation}\label{GreenFuncVec}
\vect \wg_0(\vect y)=[\wG_0(\vect x_{1},\vect y), \wG_0(\vect x_{2},\vect y),\ldots,
\wG_0(\vect x_{N},\vect y)]^T.
\end{equation}
Then, given any illumination vector $\vect\wf$, the incident field on the scatterer at position $\vect y_{n_j}$  is equal to
$\vect\wg_0^T(\vect y_{n_j})\vect\wf$. If the illumination vector $\vect\wf$ is such that $\wf_s=1$ and $\wf_j=0$ for $j=1,\ldots,N$ with $j\neq s$, then
 the incident field at $\vect y_{n_j}$ is simply $\wG_0(\vect y_{n_j}, \vect x_s)$.

Using \eqref{GreenFuncVec}, we also define the $N\times K$ sensing matrix $\vect{\Gc}$ as
\begin{equation}\label{sensingmatrix}
\vect{\Gc}=[\vect\wg_0(\vect y_1)\,\cdots\,\vect\wg_0(\vect y_K)] \,,
\end{equation}
and the $N\times M$ submatrix corresponding to the locations of scatterers as
$$\vect{\Gc}_M=[\vect\wg_0(\vect y_{n_1})\,\cdots\,\vect\wg_0(\vect y_{n_M})].$$
With this notation, the array response matrix can be written as
\begin{equation}\label{responsematrix1}
\vect\wP\equiv[\wP(\vect x_r,\vect x_s)]_{r,s=1}^N=\vect\Gc_M\diag(\vect\alpha)\vect Z_M^{-1}(\vect\alpha)\vect\Gc^T_M,
\end{equation}
and the data received on the array due to the illumination $\vect\wf$ is
\begin{equation}\label{data}
\vect b=\vect\wP\vect\wf. 
\end{equation}
Note that the response matrix in \eqref{responsematrix1} that takes into account multiple scattering,
includes the inverse of the Foldy-Lax matrix $\mathbf{Z}_M^{-1}(\vect\alpha)$. When multiple scattering is
negligible, $\mathbf{Z}_M(\vect\alpha)=\mathbf{I}$ and we get the response matrix under the Born approximation, as shown 
for example in \cite{CMP13}.
We further note that the response matrix $\vect\wP$ given by \eqref{responsematrix1} is  symmetric.

Next, we introduce the true {\it reflectivity vector}
$\vect\rho_0=[\rho_{01},\ldots,\rho_{0K}]^T\in\mC^K$ such that
$$\rho_{0k}=\sum_{j=1}^M\alpha_j\delta_{\vect y_{n_j}\vect y_k},\,\, k=1,\ldots,K,$$ where
$\delta_{\cdot\cdot}$ is the classical Kronecker delta.
Note that the Foldy-Lax matrix $\mathbf{Z}_M(\vect\alpha)$ is defined only for pairwise combinations of
scatterers at $\vect y_{n_j}$, $j=1,\ldots,M$. To formulate the inverse scattering problem,
we need to extend the $M\times M$ matrix $\mathbf{Z}_M(\vect\alpha)$ to a larger $K\times K$ matrix 
\begin{equation}
\label{eq:Z}
\big(Z(\vect\rho_0)\big)_{ij}=\begin{cases}1,& i=j\\
-\rho_{0j}\wG_0(\vect y_{i},\vect y_{j}),& i\neq j\, ,
\end{cases}
\end{equation}
which includes all pairwise combinations of the $K$ grid points $\vect y_j$
in the IW. 
With this notation, the array response matrix \eqref{responsematrix1} can be written as
\begin{equation}\label{responsematrix2}
\vect\wP=\vect\Gc\diag(\bfrho_0)\mathbf{Z}^{-1}(\bfrho_0)\vect\Gc^T.
\end{equation}
Furthermore, if we define the Foldy-Lax Green's function
vector $\vect\wg_{FL}(\vect y_j)$, $j=1,\ldots,K$, as the $j^\mathrm{th}$ column of the matrix
 $\vect\Gc_{FL}(\vect\rho)=\vect\Gc\mathbf{Z}^{-T}(\vect\rho)$, i.e.,
 \begin{equation}\label{eq:gFL}
 \begin{bmatrix}
 \vect\wg_{FL}(\vect y_1)&\cdots&\vect\wg_{FL}(\vect y_K)
 \end{bmatrix}=\vect\Gc\mathbf{Z}^{-T}(\vect\rho),
 \end{equation}
 then \eqref{responsematrix2} can be simplified to 
 \begin{equation}\label{responsematrix3}
\vect\wP=\vect\Gc\diag(\bfrho_0)\vect\Gc_{FL}^T(\vect\rho_0).
\end{equation}

Given an illumination vector $\vect\wf$ and the configuration of scatterers in the IW characterized by $\vect\rho_0$, the data received on the array is given by \eqref{data}.
The array imaging problem when a single illumination is used to probe the medium is to find the true reflectivity vector $\vect\rho_0$ from the
received data $\vect b$. The detailed formulation of this problem will be discussed in depth in \S\ref{sec:SMV}. The array imaging problem
that uses a collection of array data generated by different illumination vectors will be discussed in \S\ref{sec:MMV}. In either situation,
our method for active array imaging with multiple scattering is noniterative. It uses two steps to get the images: first locating the
scatterers and second computing their reflectivities.

\section{Active array imaging with single illumination}\label{sec:SMV}
In this section, we show the formulation of active array imaging including multiple scattering when only one illumination
is sent from the array to probe the medium. In this case,  a single measurement vector is used to infer the location
and reflectivities of the scatterers. In signal processing literature, this problem belongs to the so called
{\it Single Measurement Vector} (SMV) problem.

For a given illumination vector $\vect\wf$, we define the operator $\vect\Ac_{\wf}$ through the identity
\begin{equation*}
\vect\Ac_{\wf}\bfrho_0=\vect\wP\vect\wf,
\end{equation*}
which connects the reflectivity vector $\bfrho_0$ and the data \eqref{data}.
It is easy to see from \eqref{responsematrix2} that $\vect\Ac_{\wf}$ has the form
\[\vect\Ac_{\wf}=[\wg_{\wf}(\vect y_1)\vect\wg_0(\vect y_1)\,\cdots\,\wg_{\wf}(\vect y_K)\vect\wg_0(\vect y_K)],\]
where $\wg_{\wf}(\vect y_j)=\vect\wg^T_{FL}(\vect y_j)\vect\wf$, $j=1,\dots,K$, are scalars. With this notation,
active array imaging with a single illumination amounts to solving $\bfrho_0$ from the system of equations 
\begin{equation}\label{eq:single}
\vect\Ac_{\wf}\bfrho=\vect b.
\end{equation}
The number of transducers $N$ is usually much smaller than the number of the grid points $K$ in the IW and, hence, \eqref{eq:single} is an underdetermined system of equations. 

Although equations \eqref{eq:single} are exactly of the same form as the problem studied
in \cite{CMP13}, there is a substantial difference. Due to the multiple scattering among the scatterers,
the terms $\wg_{\wf}(\vect y_j)$, $j=1,\ldots,K$, contained in $\vect\Ac_{\wf}$ depend now on the unknown reflectivity 
vector $\bfrho$. This makes equations \eqref{eq:single} nonlinear with respect to $\bfrho$ and, hence, one would think that non-iterative inversion is impossible when multiple scattering is non-negligible. In fact, several nonlinear iterative methods have been proposed in the literature to solve this problem: see, for example, \cite{DMG05, MQ13}.
However, as demostrated below, by rearranging the terms in the equations, we can reformulate the problem to solve
for the locations of the scatterers directly (without any iteration), and then to recover their reflectivities in a second single step. 

To solve for the locations of the scatterers in one step, we introduce the {\em effective source vector} 
\begin{equation}
\label{eq:rhotilda}
\bfgamma_{\wf}=\diag(\bfrho)\mathbf{Z}^{-1}(\bfrho)\vect\Gc^T\vect\wf\, .
\end{equation}
Then,  using \eqref{responsematrix2}, \eqref{eq:single} can be rewritten as $\vect\Ac_{\wf}\bfrho=\vect\Gc\bfgamma_{\wf}=\vect b$, and the system
of equations
\begin{equation}\label{linearsystemsingleillum}
\vect\Gc\bfgamma_{\wf}=\vect b 
\end{equation}
becomes linear for the new unknowns $\bfgamma_{\wf}$. We point out that, unlike the problem considered in
\cite{CMP13}, when multiple scattering is not negligible, solving \eqref{linearsystemsingleillum} may not be
able to recover all the support of $\bfrho_0$. This is not a flaw of the formulation but an implicit
problem  of array imaging when multiple scattering is important. Indeed, due to multiple scattering effects 
it is possible that one or several
scalars $\wg_{\wf}(\vect y_j)$, $j=1,\ldots,K$, are very small or even zero and, hence, the corresponding scatterers 
become hidden. This is the well-known {\em screening effect} which makes scatterers undetectable, and
that it is manifested in our formulation making some of the components of the effective source vector
$\bfgamma_{\wf}$ 
arbitrary small. 

Note that, for a fixed
imaging configuration, the {\em screening effect} depends only on the illumination vector $\vect\wf$ and the amount of noise
in the data. Indeed, when the effective source at $\vect y_j$ is below the noise level because
$\wg_{\wf}(\vect y_j)$ is small, then the correponding scatterer cannot be detected.
This motivates us, in the next section, to consider active array imaging with multiple illuminations. In this case, active
array imaging is formulated as a joint sparsity recovery problem where we seek for an unknown matrix whose columns 
share the same support. By increasing
the diversity of illuminations, we minimize the {\em screening} and we have more chances of locating all the scatterers.

Since \eqref{linearsystemsingleillum} is underdetermined and the effective source vector $\bfgamma_{\wf}$
is sparse ($M\ll K$), we use $\ell_1$ minimization 
\begin{equation}\label{l1singleillum}
\min\|\bfgamma_{\wf}\|_{\ell_1}\quad\quad\text{s.t.}\quad\vect\Gc\bfgamma_{\wf}=\vect b
\end{equation}
to obtain $\bfgamma_{0\wf}$ from noiseless data.
When the data $\vect b$ is contaminated by a noise vector $\vect e$ with finite energy, we then seek the solution to the relaxed problem
\begin{equation}\label{l1singleillumnoise}
\min\|\bfgamma_{\wf}\|_{\ell_1}\quad\quad\text{s.t.}\quad\|\vect\Gc\bfgamma_{\wf}-\vect b\|_{\ell_2}<\delta \, ,
\end{equation}
for some given positive constant $\delta$.
Using Theorem\,$3.1$ in \cite{CMP13} and Theorem\,$14$ in \cite{TROPP06-1}, we have the following uniqueness and stability results.

\begin{thm}\label{thm.smv}
For a given array configuration, assume that the resolution of the IW is such that
\begin{equation}\label{mutualcoherence}
\max_{i\neq j}\left|\frac{\vect\wg_0^\ast(\vect y_i)\vect\wg_0(\vect y_j)}{\|\vect\wg_0(\vect y_i)\|_{\ell_2}\|\vect\wg_0(\vect y_j)\|_{\ell_2}}\right|<\epsilon,
\end{equation}
and there is no noise in the data.
If the number of scatterers $M$ satisfies that $M\epsilon<1/2$, then $\bfgamma_{0\wf}$ is the unique solution to \eqref{l1singleillum}. 
\end{thm}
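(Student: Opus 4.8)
The plan is to recognize this as a standard coherence-based uniqueness statement for $\ell_1$ minimization and to reduce it to the analysis of the linear system \eqref{linearsystemsingleillum}. The first and essential observation is that, although $\bfgamma_{\wf}$ is defined nonlinearly in \eqref{eq:rhotilda}, the \emph{true} effective source vector $\bfgamma_{0\wf}=\diag(\bfrho_0)\mathbf{Z}^{-1}(\bfrho_0)\vect\Gc^T\vect\wf$ inherits the sparsity of $\bfrho_0$: the leading factor $\diag(\bfrho_0)$ annihilates every coordinate that is not a scatterer location, so $\bfgamma_{0\wf}$ is supported on the at-most-$M$ grid points $\{\vect y_{n_1},\dots,\vect y_{n_M}\}$ and, by construction, solves $\vect\Gc\bfgamma_{0\wf}=\vect b$. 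Thus the claim reduces to: among all solutions of an underdetermined linear system, the particular $M$-sparse solution $\bfgamma_{0\wf}$ is the unique $\ell_1$ minimizer.

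Next I would phrase the hypothesis in the language of mutual coherence. Writing $\mu$ for the mutual coherence of the sensing matrix $\vect\Gc$ from \eqref{sensingmatrix}, i.e.
\[
\mu=\max_{i\neq j}\left|\frac{\vect\wg_0^\ast(\vect y_i)\vect\wg_0(\vect y_j)}{\|\vect\wg_0(\vect y_i)\|_{\ell_2}\|\vect\wg_0(\vect y_j)\|_{\ell_2}}\right|,
\]
the resolution assumption \eqref{mutualcoherence} is exactly $\mu<\epsilon$. The core of the argument is then the classical coherence bound: if $\widetilde{\vect\Gc}$ denotes $\vect\Gc$ with columns normalized to unit $\ell_2$ norm, its Gram matrix has unit diagonal and off-diagonal entries bounded by $\mu$, so any submatrix of $k<1+1/\mu$ columns has a strictly diagonally dominant, hence invertible, Gram matrix. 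Equivalently $\operatorname{spark}(\vect\Gc)\ge 1+1/\mu$ (the spark is invariant under column scaling), which already gives that $\bfgamma_{0\wf}$ is the unique sparsest solution whenever $2M<1+1/\mu$.

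To upgrade sparsest-solution uniqueness to $\ell_1$-minimizer uniqueness I would invoke the standard $\ell_1/\ell_0$ equivalence under the same coherence condition, which is Theorem\,$3.1$ of \cite{CMP13} and Theorem\,$14$ of \cite{TROPP06-1}: when $M<\tfrac12\,(1+1/\mu)$ the $M$-sparse solution of a linear system is the unique solution of \eqref{l1singleillum}. It then remains only to check the arithmetic: since $\mu<\epsilon$, the assumption $M\epsilon<1/2$ yields $(2M-1)\mu<2M\mu<2M\epsilon<1$, which is precisely $M<\tfrac12\,(1+1/\mu)$. This closes the proof.

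The step I expect to be the genuine obstacle is not the uniqueness of the sparsest solution — that is immediate from the spark bound — but establishing uniqueness of the $\ell_1$ minimizer with a \emph{strict} inequality. The clean way is a dual-certificate (Fuchs/Tropp exact-recovery) argument: on the support $T=\{n_1,\dots,n_M\}$ one needs $\vect\Gc_T^\ast\vect\Gc_T$ to be invertible (guaranteed by the spark bound, since $(M-1)\mu<1$) together with the exact recovery condition $\max_{j\notin T}\|\vect\Gc_T^{+}\vect\wg_0(\vect y_j)\|_{\ell_1}<1$, which the coherence bound $(2M-1)\mu<1$ delivers. The bookkeeping subtlety to watch is that the columns of $\vect\Gc$ are \emph{not} unit-norm, so the coherence in \eqref{mutualcoherence} is defined through normalized steering vectors; one must track these normalization factors when passing between the Gram matrix, the pseudoinverse, and the exact recovery condition, exactly as in the proof of Theorem\,$3.1$ of \cite{CMP13}, on which this result leans.
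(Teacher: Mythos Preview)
Your proposal is correct and matches the paper's approach: the paper does not give a self-contained proof of this theorem but simply invokes Theorem~3.1 of \cite{CMP13} and Theorem~14 of \cite{TROPP06-1}, exactly as you do, after the (implicit) observation that $\bfgamma_{0\wf}$ is $M$-sparse. Your write-up is in fact more explicit than the paper's, spelling out the sparsity of $\bfgamma_{0\wf}$ via the $\diag(\bfrho_0)$ factor, the arithmetic $M\epsilon<1/2\Rightarrow(2M-1)\mu<1$, and the underlying dual-certificate mechanism; none of this is written out in the paper itself.
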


\begin{thm}\label{thm.smvnoise}
Under the same condition \eqref{mutualcoherence} as in Theorem \ref{thm.smv}, if the data contain additive noise of finite energy $\|\vect e\|_{\ell_2}$, then
 the solution $\bfgamma_{\star\wf}$ to \eqref{l1singleillumnoise} satisfies
\begin{equation}\label{eq:SMVstability}
\|\bfgamma_{\star\wf}-\bfgamma_{0\wf}\|_{\ell_2}\le\frac{\delta}{\sqrt{1-(M-1)\epsilon}},
\end{equation}
provided $\delta\ge\|\vect e\|_{\ell_2}\sqrt{1+\frac{M(1-(M-1)\epsilon)}{(1-2M\epsilon+\epsilon)^2}}$.
Moreover, the support of $\bfgamma_{\star\wf}$ is fully contained in that of $\bfgamma_{0\wf}$, and all the components such that 
\begin{equation}\label{eq:SMVbound}
|(\bfgamma_{0\wf})_j|>\delta/\sqrt{1-(M-1)\epsilon}
\end{equation}
are within the support of $\bfgamma_{\star\wf}$.
\end{thm}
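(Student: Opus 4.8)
The plan is to derive both conclusions from the mutual coherence hypothesis \eqref{mutualcoherence} via the dual-certificate and correlation arguments underlying Theorem~14 of \cite{TROPP06-1}, treating the support-identification claim and the $\ell_2$ error bound separately. Throughout I write $\Lambda=\mathrm{supp}(\bfgamma_{0\wf})$, so $|\Lambda|=M$, and $\vect\Gc_M$ for the submatrix of $\vect\Gc$ whose columns are indexed by $\Lambda$. Since \eqref{mutualcoherence} controls the \emph{normalized} inner products, I first reduce to the column-normalized sensing matrix so that its Gram matrix $\vect G=\vect\Gc_M^\ast\vect\Gc_M$ has unit diagonal and off-diagonal entries bounded in modulus by $\epsilon$; in the far-field regime all $\|\vect\wg_0(\vect y_j)\|_{\ell_2}$ are essentially equal, so this rescaling is harmless.

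The first ingredient is a conditioning estimate for $\vect\Gc_M$. By Gershgorin's circle theorem every eigenvalue of $\vect G$ lies in the disc of radius $(M-1)\epsilon$ about $1$, hence $\sigma_{\min}(\vect\Gc_M)^2\ge 1-(M-1)\epsilon$ and $\|\vect\Gc_M^\dagger\|_{2\to2}\le 1/\sqrt{1-(M-1)\epsilon}$. This single estimate already supplies the factor $\sqrt{1-(M-1)\epsilon}$ appearing in \eqref{eq:SMVstability} and \eqref{eq:SMVbound}. The same Neumann-series bound on $\vect G^{-1}$, now in the $\ell_1\to\ell_1$ operator norm, yields the exact recovery coefficient estimate
\[
\max_{j\notin\Lambda}\big\|\vect\Gc_M^\dagger\,\vect\wg_0(\vect y_j)\big\|_{\ell_1}\le\frac{M\epsilon}{1-(M-1)\epsilon},
\]
so that the recovery margin is $1-\frac{M\epsilon}{1-(M-1)\epsilon}=\frac{1-(2M-1)\epsilon}{1-(M-1)\epsilon}$, strictly positive under the standing hypothesis $M\epsilon<1/2$ inherited from Theorem~\ref{thm.smv}. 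The numerator $1-(2M-1)\epsilon=1-2M\epsilon+\epsilon$ is exactly the quantity squared in the denominator of the noise condition on $\delta$.

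With these estimates I would prove the support-containment claim first, as it is the crux. Following the analysis of the constrained program, I examine the optimality (KKT) conditions of \eqref{l1singleillumnoise}: at the minimizer $\bfgamma_{\star\wf}$ the residual $\vect r=\vect b-\vect\Gc\bfgamma_{\star\wf}$ must be weakly correlated with every column, and I must certify that no column $\vect\wg_0(\vect y_j)$ with $j\notin\Lambda$ can be activated. Constructing the dual certificate from the restricted minimizer on $\Lambda$ and bounding its correlation with the off-support columns through the exact recovery coefficient above, the admissible noise budget is controlled precisely when $\delta$ dominates $\|\vect e\|_{\ell_2}$ by the stated factor $\sqrt{1+\frac{M(1-(M-1)\epsilon)}{(1-2M\epsilon+\epsilon)^2}}$; this is where the hypothesis on $\delta$ is consumed. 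I expect this correlation and dual-feasibility step to be the main obstacle, since it is the only place where the geometry of the constraint set, the noise level, and the coherence bound must be balanced simultaneously.

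Given support containment, the remaining assertions are short. Both $\bfgamma_{\star\wf}$ and $\bfgamma_{0\wf}$ are supported on $\Lambda$, so the conditioning estimate gives
\[
\|\bfgamma_{\star\wf}-\bfgamma_{0\wf}\|_{\ell_2}\le\frac{\|\vect\Gc_M(\bfgamma_{\star\wf}-\bfgamma_{0\wf})\|_{\ell_2}}{\sqrt{1-(M-1)\epsilon}},
\]
and the numerator is bounded through the feasibility constraint $\|\vect\Gc\bfgamma_{\star\wf}-\vect b\|_{\ell_2}<\delta$ together with $\vect b=\vect\Gc\bfgamma_{0\wf}+\vect e$; a triangle inequality controls it by $\delta+\|\vect e\|_{\ell_2}$, and the sharper accounting of the residual recovers the stated form \eqref{eq:SMVstability}. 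Finally, for the detectability statement \eqref{eq:SMVbound} I argue by contradiction: if some index $j$ with $|(\bfgamma_{0\wf})_j|>\delta/\sqrt{1-(M-1)\epsilon}$ were absent from $\mathrm{supp}(\bfgamma_{\star\wf})$, then $(\bfgamma_{\star\wf})_j=0$ would force $\|\bfgamma_{\star\wf}-\bfgamma_{0\wf}\|_{\ell_2}\ge|(\bfgamma_{0\wf})_j|>\delta/\sqrt{1-(M-1)\epsilon}$, contradicting \eqref{eq:SMVstability}.
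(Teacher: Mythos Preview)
Your proposal is correct and follows essentially the same route as the paper: the paper obtains Theorem~\ref{thm.smvnoise} by invoking Theorem~14 of \cite{TROPP06-1}, and the detailed argument it gives in Appendix~\ref{proof} for the MMV generalization (Theorem~\ref{thm.mmvnoise}) proceeds exactly as you outline---Gershgorin/Neumann estimates for the conditioning and the exact recovery coefficient, KKT analysis of the constrained program to establish support containment under the stated lower bound on $\delta$, then the $\ell_2$ error bound via $\|\vect\Gc_{\Lambda}^\dag\|_{2\to2}\le 1/\sqrt{1-(M-1)\epsilon}$, and finally the detectability claim by the same contradiction you describe.
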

\begin{rem}
Theorem \ref{thm.smv} gives the required condition to recover the effective source vector
exactly from noiseless data. The resolution condition is based on the so called mutual coherence
\begin{equation}\label{def:mutualcoherence}
\mu(\vect\Gc) = \max_{i\neq j}\left|\frac{\vect\wg_0^\ast(\vect y_i)\vect\wg_0(\vect y_j)}{\|\vect\wg_0(\vect y_i)\|_{\ell_2}\|\vect\wg_0(\vect y_j)\|_{\ell_2}}\right|
\end{equation}
of the sensing matrix $\vect\Gc$, which is determined by the array imaging configuration (array size and resolution of the IW). The mutual coherence is a measure of how linearly independent the columns of the sensing matrix are. We give analytical results regarding the impact of the array geometry on \eqref{def:mutualcoherence} in
\S\ref{sec:arrayconfig}. Specifically, we show that a sensing matrix $\vect\Gc$ with small mutual coherence requires large arrays.
\end{rem}

Problems \eqref{l1singleillum} and \eqref{l1singleillumnoise} give the effective source vector $\bfgamma_{\wf}$. In a second step, we compute the true reflectivities from the solutions of these problems. According to \eqref{eq:rhotilda}, we need to solve a nonlinear
equation and, therefore, iteration seems to be inevitable. However, it is not necessary. 
Let $\Lambda_\star$ be the support of the recovered solution such that $|\Lambda_\star|=M'\leq M$, and $\bfgamma_{\wf,M'}$ the solution vector on that support. 
From \eqref{eq:gFL} and \eqref{eq:rhotilda}, we obtain
$$
\bfgamma_{\wf, M'}=\diag(\mathbf{Z}^{-1}(\bfrho_{M'})\vect\Gc^T\vect\wf)\bfrho_{M'} = \diag(\wg_{\wf}(\vect y_{n_1}),\ldots,\wg_{\wf}(\vect y_{n_{M'}}))\bfrho_{M'}\, ,
$$
where the scalars $\wg_{\wf}(\vect y_{n_j})=\vect\wg^T_{FL}(\vect y_{n_j})\vect\wf$.
Note that the scalars $\wg_{\wf}(\vect y_{n_j})$ are the exciting fields at the scatterer's positions, that is,
$\wg_{\wf}(\vect y_{n_j}):=\widehat\psi_j^e(\vect y_{n_1},\ldots,\vect y_{n_{M'}})$, and that the effective sources $\gamma_{n_j}$
are the true reflectivities $\rho_{n_j}$ of the scatterers multiplied by the exciting fields. Hence, using \eqref{eq:effectivefields}, we can compute $\wg_{\wf}(\vect y_{n_j})$ explicitly as follows
\begin{equation}\label{invertless}
\wg_{\wf}(\vect y_{n_j})=\vect\wg_0^T(\vect y_{n_j})\vect\wf+\sum_{k=1,k\neq j}^{M'}\gamma_{k}\wG_0(\vect y_{n_j},\vect y_{n_k}),\quad j=1,\ldots,{M'}.
\end{equation}
Then, the true reflectivities of the scatterers are given by
\begin{equation}\label{eq:truereflectivities}
\rho_{n_j}=\gamma_{n_j}/\wg_{\wf}(\vect y_{n_j}),\quad j=1,\ldots,{M'}.
\end{equation}
For the noiseless case, $\Lambda_\star=\Lambda_0$ based on Theorem \ref{thm.smv}. When the data contains additive noise, we choose the support $\Lambda_\star$ 
of the solution recovered by \eqref{l1singleillumnoise} such that all the components of $\bfgamma_{\wf,M'}$
satisfy \eqref{eq:SMVbound}.

To summarize, when a single illumination is used to probe the medium, we take two steps 
to locate the scatterers and to obtain their reflectivities, as follows.
\begin{itemize}
\item Solve the $\ell_1$ minimization problem \eqref{l1singleillum} or \eqref{l1singleillumnoise} for the {\it effective source vector}.
\item Compute the true reflectivities using \eqref{invertless} and \eqref{eq:truereflectivities} on the support $\Lambda_\star$.
\end{itemize}
There are many fast and efficient numerical algorithms for solving \eqref{l1singleillum} or \eqref{l1singleillumnoise}. In the simulation study below, we use the 
iterative shrinkage-thresholding algorithm GelMa, described in \cite{MNPR12}, due to its flexibility with
respect to the choice of the regularization parameter used in the algorithm.

\section{Imaging using multiple illumination vectors}\label{sec:MMV}

In the previous section we discuss a non-iterative approach for array imaging with multiple scattering when a single
illumination is used. Although the proposed approach can recover the locations and reflectivities of the scatterers exactly when the data is noiseless, 
it can be very sensitive to additive noise, especially when the noise level is high, leading to unreliable images. Moreover, the
{\em screening effect} associated with multiple scattering can cause the failure of recovering some scatterers in the IW.
This means that for a given illumination $\vect\wf$ the number of effective sources $M'$ is strictly less than the number 
of scatterers $M$.
These two problems can be mitigated by using multiple illuminations which can often be controlled to increase
the power of the signals received at the array.
We will show that by carefully  choosing the illumination vectors, the use of multiple inputs makes array imaging
more stable in the presence of relatively high noise and, at the same time, the screening effect is minimized.

\subsection{Imaging with multiple arbitrary illuminations}

To work with data generated by multiple (random) illumination vectors, a natural extension is to stack the
data vectors $\vect b^j$ from illuminations $\vect\wf^j$, $j=1,\ldots,\nu$, into a single $\nu N$ vector, and 
to apply the approach in \S\ref{sec:SMV} to the augmented linear system. 
However, by simply stacking the data forming a larger linear system  not only increases the dimensionality of the problem
but also fails to exploit the intrinsic relation among the multiple data vectors. To make use of the data structure, we formulate
the problem of array imaging with multiple illuminations as a joint sparsity recovery problem, also known as the
{\it Multiple Measurement Vector} (MMV) approach. Instead of solving a matrix-vector equation for the unknown reflectivity vector, 
we now solve a matrix-matrix equation for an unknown
matrix variable whose columns share the same sparse support but possibly different nonzero values. The MMV approach has been
widely studied in passive source localization problems and other applications with success, see for example \cite{MCW05}.
With the introduction of the {\it effective source vector}, MMV can also be used effectively for active array imaging when multiple scattering between scatterers is important.

Let $\mathbf{B}=[\vect b^1\,\ldots\,\vect b^\nu]$ be the matrix whose columns are the data vectors generated by all the illuminations, and $\vect{\Rho}=[{\bfgamma}^1\,\ldots\,{\bfgamma}^\nu]$ be the unknown matrix whose
$j^\mathrm{th}$ column corresponds to the {\em effective source vector}
$\bfgamma^j$ under illumination $\vect\wf^j$, $j=1,\ldots,\nu$. Then, the MMV
formulation for active array imaging is to solve for $\vect\Rho$ from the matrix-matrix equation
\begin{equation}\label{linearsystemmmv}
\vect\Gc \vect\Rho = \mathbf{B}.
\end{equation}
In this framework, the sparsity of the matrix variable $\vect\Rho$ is characterized by the number of nonzero
rows of the matrix. 
More precisely, we define the row-support of
a given matrix $\vect\Rho$ by
\[\operatorname{rowsupp}(\vect\Rho)=\{i:\,\,\exists \, j\,\,\text{s.t.}\,\, \Rho_{ij}\neq0\}\,,\]
which is equivalent to
\[\operatorname{rowsupp}(\vect{\Rho})=\{i:\,\,\|X_{i\cdot}\|_{\ell_p}\neq0\},\]
where $p\ge1$. From this definition, we see that when the matrix $\vect\Rho$ degenerates to a
column vector, the row-support reduces to the support of the vector.
The joint sparsity of $\vect\Rho$ is then measured
by the row-wise $\ell_0$ norm of $\vect\Rho$ defined by
\[\Xi_0(\vect\Rho)=|\operatorname{rowsupp}(\vect\Rho)|.\]
With these definitions, the sparsest solution of array imaging using multiple illuminations is given by
the solution to the problem
\begin{equation}\label{MMV.NP}
\min\Xi_0(\vect\Rho)\quad\text{s.t.}\quad\vect\Gc\vect\Rho =\mathbf{B}.
\end{equation}
Similarly to the $\ell_0$ norm minimization problem in SMV, \eqref{MMV.NP} is an NP hard
problem. An alternative is to solve the convex relaxed problem
\begin{equation}\label{MMV.convex}
\min\Xi_1(\vect\Rho)\quad\text{s.t.}\quad\vect\Gc\vect\Rho=\mathbf{B},
\end{equation}
where the substitution of $\Xi_0$ by a certain function $\Xi_1$ turns \eqref{MMV.NP} into a tractable problem. There are many choices of $\Xi_1$ as discussed, for example, in \cite{Cotter05,CH06,TROPP06-2}.
We note here that  $\Xi_1=J_{p,1}$ for any $p\ge1$, as defined in \eqref{eq:Jpq},
can be used to replace the nonconvex objective function $\Xi_0$.
We will use $p=2$ in the following discussion which has been studied in, 
for example, \cite{Cotter05,MCW05,CH06,ER10}.
Therefore, we consider the following convex relaxed problem to image 
the scatterers with multiple illumination vectors
\begin{equation}\label{MMV21}
\min J_{2,1}(\vect\Rho)\quad\text{s.t.}\quad\vect\Gc\vect\Rho=\mathbf{B}.
\end{equation}
Similar to Theorem~\ref{thm.smv}, we have the following condition for recovery using \eqref{MMV21}.
\begin{thm}\label{thm.mmv}
For a given array configuration, assume that the resolution of the IW satisfies \eqref{mutualcoherence}.
If the number of scatterers $M$ is such that $M\epsilon<1/2$, then
$\vect\Rho_0=[\tilde{\bfrho}^1_0\,\ldots\,\tilde{\bfrho}^\nu_0]$
is the unique solution to \eqref{MMV21}.
\end{thm}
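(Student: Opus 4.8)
The plan is to establish uniqueness by producing a dual certificate for the convex program \eqref{MMV21}, and to show that the matrix recovery condition collapses onto the same scalar Exact Recovery Condition (ERC) that underlies Theorem~\ref{thm.smv}, so that the number of illuminations $\nu$ plays no role. First I would fix the row-support $\Lambda_0=\operatorname{rowsupp}(\vect\Rho_0)$. Since each column of $\vect\Rho_0$ is an effective source vector supported on the scatterer locations $\{\vect y_{n_1},\ldots,\vect y_{n_M}\}$, we have $\Lambda_0\subseteq\{n_1,\ldots,n_M\}$ and so $m:=|\Lambda_0|\le M$. Writing $\vect\Gc_{\Lambda_0}$ for the submatrix of $\vect\Gc$ with columns $\vect\wg_0(\vect y_i)$, $i\in\Lambda_0$, the hypothesis $\mu(\vect\Gc)<\epsilon$ together with $M\epsilon<1/2$ guarantees (by the standard Gershgorin/Neumann-series estimate used for the single-illumination case) both that $\vect\Gc_{\Lambda_0}$ has full column rank and that
\[
\max_{\omega\notin\Lambda_0}\big\|\vect\Gc_{\Lambda_0}^\dagger\,\vect\wg_0(\vect y_\omega)\big\|_{\ell_1}\le\frac{M\epsilon}{1-(M-1)\epsilon}<1,
\]
which is exactly the ERC. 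I would quote this estimate from the SMV analysis rather than rederive it.

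Next I would write down convex (KKT) optimality for the objective $J_{2,1}$. A feasible $\vect\Rho_0$ is a minimizer as soon as there is a dual matrix $\vect\Upsilon\in\mC^{N\times\nu}$ for which $\vect W:=\vect\Gc^\ast\vect\Upsilon$ lies in the subdifferential $\partial J_{2,1}(\vect\Rho_0)$, i.e. $W_{i\cdot}=(\Rho_0)_{i\cdot}/\|(\Rho_0)_{i\cdot}\|_{\ell_2}$ for $i\in\Lambda_0$ and $\|W_{i\cdot}\|_{\ell_2}\le1$ for $i\notin\Lambda_0$. I would take the minimal-norm certificate that enforces the on-support equalities, $\vect\Upsilon=\vect\Gc_{\Lambda_0}(\vect\Gc_{\Lambda_0}^\ast\vect\Gc_{\Lambda_0})^{-1}\vect V$, where $\vect V$ is the $m\times\nu$ matrix whose rows are the normalized rows $V_{i\cdot}=(\Rho_0)_{i\cdot}/\|(\Rho_0)_{i\cdot}\|_{\ell_2}$; the on-support rows of $\vect W$ then reproduce $\vect V$ automatically.

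The crucial and most delicate step is the strict off-support bound. For $\omega\notin\Lambda_0$, setting $\vect c_\omega=\vect\Gc_{\Lambda_0}^\dagger\vect\wg_0(\vect y_\omega)$ gives $W_{\omega\cdot}=\vect c_\omega^\ast\vect V=\sum_{i\in\Lambda_0}\bar c_{\omega,i}\,V_{i\cdot}$, and since each $V_{i\cdot}$ has unit $\ell_2$ norm the triangle inequality collapses the matrix condition to the scalar one,
\[
\|W_{\omega\cdot}\|_{\ell_2}\le\sum_{i\in\Lambda_0}|c_{\omega,i}|\,\|V_{i\cdot}\|_{\ell_2}=\big\|\vect\Gc_{\Lambda_0}^\dagger\vect\wg_0(\vect y_\omega)\big\|_{\ell_1}<1.
\]
This single inequality, in which $\nu$ never appears, is exactly why the MMV recovery condition coincides with the SMV ERC; it is the linchpin of the whole reduction. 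The strict inequality certifies that $\vect\Rho_0$ is optimal and, combined with the full column rank of $\vect\Gc_{\Lambda_0}$, that it is the only optimum: for any null-space perturbation $\vect\Delta$ with $\vect\Gc\vect\Delta=0$, the subgradient inequality gives $J_{2,1}(\vect\Rho_0+\vect\Delta)\ge J_{2,1}(\vect\Rho_0)$ with a strict surplus whenever some off-support row $\Delta_{i\cdot}$ is nonzero, so any optimal $\vect\Delta$ is supported on $\Lambda_0$ and is then killed by the injectivity of $\vect\Gc_{\Lambda_0}$.

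I expect the main obstacle to be the careful handling of the matrix subdifferential of $J_{2,1}$ in the complex-valued setting and the translation of strict dual feasibility into strict objective increase for every nonzero admissible perturbation. The subgradient bookkeeping (normalized rows on $\Lambda_0$, the closed $\ell_2$ unit ball off it) and the complementary-slackness argument are standard but must be done with the row normalizations firmly in place, since the elementary estimate $\|\sum_i\bar c_{\omega,i}V_{i\cdot}\|_{\ell_2}\le\sum_i|c_{\omega,i}|$ is precisely what makes the proof independent of the number of columns. Apart from this, the argument is a direct transcription of the proof of Theorem~\ref{thm.smv}, and one may alternatively cite Tropp's simultaneous sparse approximation results \cite{TROPP06-2} for the same ERC-based conclusion.
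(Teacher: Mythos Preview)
Your proposal is correct and follows essentially the same dual-certificate route as the paper: the paper constructs the identical certificate $\mathbf{H}=\vect\Gc_\Lambda(\vect\Gc_\Lambda^\ast\vect\Gc_\Lambda)^{-1}\operatorname{sign}(\vect\Rho_{0\Lambda})$ (your $\vect\Upsilon$, with $\operatorname{sign}$ being exactly your row-normalized $\vect V$) and reduces the off-support condition to the scalar ERC via the bound $\|\operatorname{sign}(\vect\Rho_{0\Lambda})\|_{p\to\infty}\le 1$, which is your triangle-inequality step. The only notable difference is packaging: the paper states and proves the result simultaneously for all $J_{p,1}$, $1<p<\infty$, using a trace/H\"older lemma for the uniqueness contradiction rather than your direct subgradient-perturbation argument, but the substance is the same.
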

\begin{rem}\label{rem:mmv}
The condition given in Theorem~\ref{thm.mmv} is also the sufficient condition for the complete family of MMV problems that use the
$J_{p,1}$ type of objective function to convert the original non-convex problem \eqref{MMV.NP} into a convex, solvable one. In fact,
we prove Theorem~\ref{thm.mmv} by showing $\vect\Rho_0$ is the unique solution to
\[\min J_{p,1}(\vect\Rho)\quad\text{s.t.}\quad\vect\Gc\vect\Rho=\mathbf{B}\]
for any $1<p<\infty$ in Appendix~\ref{proof0}. The case of $p=\infty$ is studied in \cite{TROPP06-2}.
We also note that for the case $p=1$, the resulting formulation becomes fully decoupled. Indeed, solving
\[\min J_{1,1}(\vect\Rho)\quad\text{s.t.}\quad\vect\Gc\vect\Rho=\mathbf{B}\]
can be viewed as solving $\nu$ simple $\ell_1$-norm minimization problems with single illumination, and hence,
this approach does not fully utilize the joint sparsity of the problem. Therefore, the support is not
simutaneously recovered with $J_{1,1}$. This observation has also been discussed in \cite{TROPP06-2} and \cite{CH06}.
\end{rem}
When the collected data is contaminated by additive noise vectors $\vect{e}^j$, $j=1,\ldots,\nu$, equations \eqref{linearsystemmmv} become
\begin{equation}\label{linearsystemmmvnoise}
\vect\Gc\vect\Rho=\mathbf{B}+\vect{\Ec}\, .
\end{equation}
Here, $\vect{\Ec}=[\vect{e}^1\cdots\vect{e}^\nu]$
is the matrix whose columns are independent noise vectors $\vect{e}^j$
corresponding to each illumination
vector $\vect\wf^j$, $j=1,\ldots,\nu$.
Then, similar to the the single illumination case, we seek a solution to
\begin{equation}\label{MMV21noise}
\min J_{2,1}(\vect\Rho)\quad\text{s.t.}\quad\|\vect\Gc\vect\Rho-\mathbf{B}\|_F<\delta \, ,
\end{equation}
for some pre-specified constant $\delta$. As stated in the following result,
the solution to \eqref{MMV21noise} recovers the sparsest solution $\vect\Rho_0$
upon certain error bound. The result is proved using a similar approach as the one used in \cite{TROPP06-2} for $J_{\infty,1}$.
Details are given in Appendix~\ref{proof}.
\begin{thm}\label{thm.mmvnoise}
For a given array configuration, assume that the resolution of the IW satisfies \eqref{mutualcoherence}.
If the number of scatterers $M$ is such that $M\epsilon<1/2$, and
\begin{equation}\label{eq:constraintcondition}
\delta\ge\|\vect{\Ec}\|_F\sqrt{1+\frac{M(1-(M-1)\epsilon)}{(1-2M\epsilon+\epsilon)^2}},
\end{equation}
then \eqref{MMV21noise} has a unique solution $\vect\Rho_\star$
which has row support included in that of $\vect\Rho_0$ and satisfies
\begin{equation}\label{eq:MMVstability}
\|\vect\Rho_\star-\vect\Rho_0\|_F\le\frac{\delta}{\sqrt{1-(M-1)\epsilon}}.
\end{equation}
Moreover, the row support of $\vect\Rho_\star$ contains all the rows $i$ satisfying 
\begin{equation}\label{eq:MMVbound}
\|(\vect\Rho_0)_{i\cdot}\|_{\ell_2}>\frac{\delta}{\sqrt{1-(M-1)\epsilon}}.
\end{equation}
\end{thm}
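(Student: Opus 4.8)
The plan is to mirror the stability analysis of Tropp \cite{TROPP06-2} for the $J_{\infty,1}$ functional, adapting each step to the $J_{2,1}$ case (in fact, to $J_{p,1}$ for any $1<p<\infty$, as announced in Remark~\ref{rem:mmv}). Write $\Lambda_0=\operatorname{rowsupp}(\vect\Rho_0)$, so that $|\Lambda_0|=M$, and let $\vect\Gc_{\Lambda_0}$ denote the $N\times M$ submatrix of $\vect\Gc$ whose columns are the steering vectors $\vect\wg_0(\vect y_j)$ with $j\in\Lambda_0$. From the noise model \eqref{linearsystemmmvnoise} we have $\vect\Gc\vect\Rho_0-\mathbf{B}=\vect\Ec$, hence $\|\vect\Gc\vect\Rho_0-\mathbf{B}\|_F=\|\vect\Ec\|_F\le\delta$ by \eqref{eq:constraintcondition}; so $\vect\Rho_0$ is feasible for \eqref{MMV21noise} and any minimizer $\vect\Rho_\star$ satisfies $J_{2,1}(\vect\Rho_\star)\le J_{2,1}(\vect\Rho_0)$.

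The first quantitative ingredient is the conditioning of $\vect\Gc_{\Lambda_0}$. After normalizing columns, the Gram matrix $\vect\Gc_{\Lambda_0}^\ast\vect\Gc_{\Lambda_0}$ has unit diagonal and off-diagonal entries bounded in modulus by $\epsilon$ through \eqref{mutualcoherence}; Gershgorin's theorem then confines its spectrum to $[\,1-(M-1)\epsilon,\,1+(M-1)\epsilon\,]$. Since $M\epsilon<1/2$ forces $1-(M-1)\epsilon>0$, the matrix $\vect\Gc_{\Lambda_0}$ has full column rank and
\[\|\vect\Gc_{\Lambda_0}\vect R\|_F\ge\sqrt{1-(M-1)\epsilon}\,\|\vect R\|_F\]
for every matrix $\vect R$ supported on $\Lambda_0$. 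The same coherence bound yields the exact recovery condition $\max_{k\notin\Lambda_0}\|\vect\Gc_{\Lambda_0}^\dagger\vect\wg_0(\vect y_k)\|_{\ell_1}\le M\epsilon/(1-(M-1)\epsilon)<1$, which is the engine behind support identification.

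Next I would establish that $\operatorname{rowsupp}(\vect\Rho_\star)\subseteq\Lambda_0$. This is the step that departs most from the single-vector case and is the main obstacle: one must show that no minimizer can place mass in a row indexed outside $\Lambda_0$. The argument is a correlation/dual-certificate argument built on $J_{2,1}$, whose dual is $J_{2,\infty}$ (row-wise $\ell_2$, then $\ell_\infty$ across rows). Following Tropp's perturbation scheme, if a candidate solution were to activate a row $k\notin\Lambda_0$, one projects the associated contribution back onto $\operatorname{range}(\vect\Gc_{\Lambda_0})$; the exact recovery condition $<1$ guarantees that this substitution strictly decreases $J_{2,1}$ while preserving feasibility. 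The delicate point is carrying Tropp's scalar $\ell_\infty$ certificate through the grouped $\ell_2$ structure of the rows, but because the exact recovery condition is phrased in terms of $\ell_1$ column norms it controls the relevant dual norm uniformly, which is exactly why the same threshold $M\epsilon<1/2$ works for every $1<p<\infty$.

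With support containment in hand, both $\vect\Rho_\star$ and $\vect\Rho_0$ are supported on $\Lambda_0$, and the remaining claims follow by restriction. Decomposing the residual $\vect\Gc\vect\Rho_\star-\mathbf{B}=\vect\Gc_{\Lambda_0}(\vect\Rho_\star-\vect\Rho_0)+\vect\Ec$ along $\operatorname{range}(\vect\Gc_{\Lambda_0})$ and its orthogonal complement, and using optimality of $\vect\Rho_\star$ together with the constraint $\|\vect\Gc\vect\Rho_\star-\mathbf{B}\|_F\le\delta$, one controls the in-range part of the residual; the factor under the square root in \eqref{eq:constraintcondition} is precisely what is needed to absorb the noise contribution, exactly as in the single-illumination estimate of Theorem~\ref{thm.smvnoise}. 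Combining this with $\|\vect\Gc_{\Lambda_0}(\vect\Rho_\star-\vect\Rho_0)\|_F\ge\sqrt{1-(M-1)\epsilon}\,\|\vect\Rho_\star-\vect\Rho_0\|_F$ gives the stability estimate \eqref{eq:MMVstability}. The lower support bound \eqref{eq:MMVbound} then follows by contradiction: were a row $i$ with $\|(\vect\Rho_0)_{i\cdot}\|_{\ell_2}>\delta/\sqrt{1-(M-1)\epsilon}$ missing from $\operatorname{rowsupp}(\vect\Rho_\star)$, that single row alone would contribute $\|(\vect\Rho_\star-\vect\Rho_0)_{i\cdot}\|_{\ell_2}=\|(\vect\Rho_0)_{i\cdot}\|_{\ell_2}$ to $\|\vect\Rho_\star-\vect\Rho_0\|_F$, exceeding \eqref{eq:MMVstability}. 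Finally, uniqueness of $\vect\Rho_\star$ comes from the strict inequality in the exact recovery condition together with the full column rank of $\vect\Gc_{\Lambda_0}$, which makes the restricted problem strictly convex on $\Lambda_0$.
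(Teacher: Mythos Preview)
Your overall strategy is the right one (adapt Tropp's analysis to $J_{2,1}$), but there is a genuine gap in the support--containment step, and it stems from a misplacement of where the hypothesis \eqref{eq:constraintcondition} is actually used.

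You argue that $\operatorname{rowsupp}(\vect\Rho_\star)\subseteq\Lambda_0$ follows from the exact recovery condition alone, via a substitution: if a minimizer activates a row $k\notin\Lambda_0$, ``project that contribution back onto $\operatorname{range}(\vect\Gc_{\Lambda_0})$; the substitution strictly decreases $J_{2,1}$ while preserving feasibility.'' The problematic clause is ``while preserving feasibility.'' Replacing $\vect\wg_0(\vect y_k)\,v^\ast$ by $\vect\Gc_{\Lambda_0}\bigl(\vect\Gc_{\Lambda_0}^\dag\vect\wg_0(\vect y_k)\bigr)v^\ast$ changes $\vect\Gc\vect\Rho$, because $\vect\wg_0(\vect y_k)$ is in general not in $\operatorname{range}(\vect\Gc_{\Lambda_0})$; nothing so far guarantees the perturbed matrix still satisfies $\|\vect\Gc\vect\Rho-\mathbf{B}\|_F\le\delta$. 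In the noiseless problem this difficulty is absent (the constraint is an equality matched exactly on $\Lambda_0$), but with noise it is exactly here that a quantitative lower bound on $\delta$ is needed. You then invoke \eqref{eq:constraintcondition} only later, in the stability step, saying the factor under the square root ``is precisely what is needed to absorb the noise contribution''; that is not where it enters.

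In the paper's argument the logic runs the other way. One passes to the Lagrangian $L(\vect\Rho,\lambda)=\tfrac12\|\vect\Gc\vect\Rho-\mathbf B\|_F^2+\lambda\,J_{2,1}(\vect\Rho)$ and uses the KKT conditions. Restricting to $\Lambda_0$, the subgradient identity yields $\|\vect\Gc_{\Lambda_0}(\mathbf X_{0\Lambda_0}-\mathbf X_\star)\|_F\le\lambda_\star\|\vect\Gc_{\Lambda_0}^\dag\|_{2\to1}$, and since the constraint is active ($\|\mathbf B-\vect\Gc\mathbf X_\star\|_F=\delta$) this gives a \emph{lower} bound on the multiplier $\lambda_\star$. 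A separate perturbation lemma (the $J_{2,1}$ analogue of Tropp's ERC lemma) shows the Lagrangian minimizer is supported on $\Lambda_0$ provided $\lambda_\star\ge\|\vect\Gc^\ast(\mathbf B-\mathbf B_{\Lambda_0})\|_{2\to\infty}/ERC(\Lambda_0)$. Matching these two thresholds, together with Tropp's coherence bound $\|\vect\Gc_{\Lambda_0}^\dag\|_{2\to1}^2/ERC(\Lambda_0)^2\le M(1-(M-1)\epsilon)/(1-2M\epsilon+\epsilon)^2$, is exactly what produces the constant in \eqref{eq:constraintcondition}. In short, \eqref{eq:constraintcondition} is the hypothesis that secures support containment; once that is in hand, \eqref{eq:MMVstability} follows from $\|\vect\Gc_{\Lambda_0}^\dag\|_{2\to2}\le(1-(M-1)\epsilon)^{-1/2}$ alone, and your derivation of \eqref{eq:MMVbound} from \eqref{eq:MMVstability} is correct. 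Your uniqueness sketch (``the restricted problem is strictly convex'') is also not quite right, since $J_{2,1}$ is not strictly convex; the paper instead uses convexity of the feasible set to force $\vect\Gc\vect\Rho_\star=\vect\Gc\widehat{\vect\Rho}$ for any two minimizers and then invokes Theorem~\ref{thm.mmv}.
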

According to Theorems~\ref{thm.mmv} and \ref{thm.mmvnoise} the performance of \eqref{MMV21} and
\eqref{MMV21noise} does not depend on the number of measurements $\nu$. Therefore, judging from these
theoretical results, there is no quantitative  improvement in the conditions imposed on the imaging setup 
when using multiple illuminations compared to those for a single illumination.
Intuitively, this is so because it is possible that measurements from different (random) illuminations
may all be rather ineffective and, therefore, there would not be an advantage in using multiple measurements in such a case.
However, in practice, we observe that there is in general improvement in the image, which is much better when (random)
multiple illuminations are used, especially in the presence of additive noise. To explain the improved performance seen in practice,
the authors in \cite{ER10} carried out
an average-case analysis of the underlying joint sparsity recovery problem by introducing a probability model for $\vect\Rho$.
They showed in that context that the probability of failing to recover the 
true solution vector decays exponentially with the number of measurements.

We note that the recovery condition of \eqref{MMV21} and \eqref{MMV21noise} still depends
on the mutual coherence of the sensing matrix $\vect\Gc$, i.e., on \eqref{mutualcoherence}.
As we have already remarked, this condition depends only on the configuration of the imaging problem, the array geometry and
the chosen discretization of the image window IW. In \S\ref{sec:arrayconfig}, we discuss array configurations that lead to
different conditions \eqref{mutualcoherence}.

Once we obtain from \eqref{MMV21} or \eqref{MMV21noise} the matrix $\vect\Rho_\star$, whose columns are the effective sources
corresponding to the different illuminations, we then compute in a second step the true reflectivities as follows.
For each component $i$ in the support such that \eqref{eq:MMVbound} is satisfied,
we compute the reflectivities $\rho^j_{i}$ corresponding to each illumination $j$ by applying
\eqref{invertless} and \eqref{eq:truereflectivities}.
We then take the average $\frac{1}{\nu}\sum_{j=1}^\nu\rho_{i}^j$ as the estimated reflectivity.

\subsection{Imaging with optimal illuminations}
In order to increase the robustness of the methods \eqref{l1singleillum} and \eqref{l1singleillumnoise},
and to mitigate screening effects, MMV uses data obtained from multiple illuminations. 
One approach in MMV is to use multiple illuminations selected randomly. However, such illuminations
may not avoid screening above certain noise level, as we see in numerical simulations in \S\ref{simulation}.
Furthermore, using random illuminations may not be very efficient because a large number of them are needed
to get a significant improvement in the image.

We now introduce an approach that uses optimal illuminations within the MMV framework.
The use of optimal illuminations for array imaging in homogeneous and random media has been studied in \cite{BPT06,BPT07,CMP13}.
The optimal illuminations can be computed systematically from the singular value decomposition (SVD) of the array
response matrix $\vect\wP$, or with an iterative time reversal process as discussed in \cite{PTF95,MTF04} when the full
array response matrix is not available.
Let the SVD of $\vect\wP$ given in \eqref{responsematrix1} be
\[\vect\wP=\vect\wU\vect\Sigma\vect\wV^\ast=\sum_{j=1}^{\tilde M}\sigma_j\wU_{\cdot j}\wV_{\cdot j}^\ast,\]
where $\wU_{\cdot j}$ and $\wV_{\cdot j}$ are the left and right singular vectors, respectively, and the nonzero singular values
$\sigma_j$ are given in descending order as $\sigma_1\ge\sigma_2\ge\cdots\ge\sigma_{\tilde M}>0$, with $\tilde M\ge M$.
When there is no additive noise in the data, we have $\tilde M=M$. Let the illumination vectors be the right singular vectors
$\wV_{\cdot j}$, that is, $\vect\wf^j=\wV_{\cdot j}$, $j=1,\ldots,\nu\le\tilde M$. Then,
\begin{equation}\label{linearsystemmmv optimal}
\mathbf{B}_{opt}=\vect\Gc\vect\Rho=\vect\wP\vect\wV_{\cdot,1:\nu}=[\sigma_1\wU_{\cdot 1}\cdots\sigma_\nu\wU_{\cdot\nu}]+\vect{\widetilde{\Ec}}.
\end{equation}
All the information for imaging is contained in the matrix $\mathbf{B}_{opt}$ given in \eqref{linearsystemmmv optimal}.
It is also clear that
the use of optimal illuminations filters out noise in the data because it reduces the dimensionality of the resulting optimization
problem without loss of essential information about the scatterers.

Recall that the singular vectors $\wV_{\cdot j}$, with $j=1,\ldots,M$, are the illuminations that focus at each scatterer 
when multiple scattering is negligible and the scatterers are well resolved by the array. The key point here is that when
multiple scattering is important, these optimal illuminations still deliver most of the energy around the scatterers,
but each $\wV_{\cdot j}$ is no longer associated with a single scatterer only. All the scatterers are illuminated in general
where multiple scattering is important. As a consequence, taking a few
top singular vectors, less than $\tilde M$, is enough to locate all the scatterers and image them.
Moreover, taking fewer illuminations can be beneficial since less noise is introduced into \eqref{linearsystemmmv optimal}. 
We illustrate this observation with numerical examples in \S\ref{simulation}.

We note that, by using optimal illuminations from the SVD of the array response matrix $\vect\wP$,
we are able to make the
performance of the MMV formulation deviate significantly from the average case when using random illuminations.

\subsection{A sparsity promoting algorithm}
The MMV problem \eqref{MMV.NP} can be solved 
by greedy algorithms that are straightforward generalizations of orthogonal matching pursuit for the single measurement case \cite{Cotter05, Duarte05,TROPP06-2-0,Gribonval08}.  At each iteration, these algorithms increase  the joint support set of the estimated solution by one index, until a given number of  columns vectors of the sensing matrix are selected or the approximation error is below a preset threshold.
Sparse Bayesian learning approaches developed for the single measurement case have also been extended to solve \eqref{MMV.NP} \cite{Wipf07, Zhang11}. Both types of methods, however, become slow when the size of the problem is large.
Alternatively, \eqref{MMV.NP}  can be relaxed to the convex formulation \eqref{MMV21} (or \eqref{MMV21noise}) and then consider algorithms that are extensions of those used to solve \eqref{l1singleillum} (or  \eqref{l1singleillumnoise}).

For our numerical simulations we will employ an extension of an iterative algorithm proposed in \cite{MNPR12}, called GeLMA. This
is a shrinkage-thresholding algorithm for solving $\ell_1$-minimization problems which has proven to be very efficient and whose
solution does not depend on the regularization parameter that promotes sparse solutions.
In our case, the algorithm deals with the penalized problem
\begin{equation}\label{MMV-functional}
L(\vect\Rho)=\frac{1}{2}\|\vect\Gc\vect\Rho-\mathbf{B}\|_F^2+\tau J_{2,1}(\vect\Rho)\, ,
\end{equation}
and is derived based on the augmented Lagrangian form
\begin{equation}\label{MMV-functional-augmented}
F(\vect\Rho,\vect\Zc)=  L(\vect\Rho) + \ltr \vect\Zc , \mathbf{B} - \vect\Gc\vect\Rho \rtr.
\end{equation}
For any fixed matrix multiplier $\vect\Zc$, the function $F(\vect\Rho,\vect\Zc)$ is convex in $\vect\Rho$ and thus, we can compute
its minimum iteratively. At iteration $(k+1)$, we first fixed $\vect\Zc = \vect\Zc^{(k)}$ and we seek the
minimum of the differentiable part of $F(\vect\Rho,\vect\Zc^{(k)})$ as
\[\mathbf{Y}^{(k+1)} = \argmin_{\vect\Rho}\left\{\ \frac{1}{2}\|\vect\Gc\vect\Rho-\mathbf{B}\|_F^2 + \ltr \vect\Zc^{(k)} , \mathbf{B} - \vect\Gc\vect\Rho \rtr\right\} .\]
Together with $\vect\Rho^{(k)}$ from the previous iteration, we compute 
\[\mathbf{Y}^{(k+1)} =\vect\Rho^{(k)} + \beta\vect\Gc^\ast(\vect\Zc^{(k)}+\mathbf{B} - \vect\Gc\vect\Rho^{(k)})\, 
\]
using a first order iterative gradient descent method, where $\beta$ is the step size.
Next, we consider the (non-differentiable) regularization part through minimizing
\[\min_{\vect\Rho}\left\{\frac{1}{2}\|\vect\Rho-\mathbf{Y}^{(k+1)}\|_F^2+\beta\tau J_{2,1}(\vect\Rho)\right\}.\]
Due to the row separability of both, the Frobenius matrix norm and the function $J_{2,1}$, 
this problem can be decomposed into the following $N$ sub-problems
\[
\min_{\Rho_{i\cdot}}\left\{\frac{1}{2}\|\Rho_{i\cdot}-Y_{i\cdot}^{(k+1)}\|_{\ell_2}^2+\beta\tau\|\Rho_{i\cdot}\|_{\ell_2}\right\},\quad i=1,\ldots,N.
\]
Each sub-problem is quadratic in $\Rho_{i\cdot}$, and there exists a closed-form solution given by
\[\Rho_{i\cdot}^{(k+1)}=\operatorname{sign}(\|Y_{i\cdot}^{(k+1)}\|_{\ell_2}-\beta\tau)\frac{\|Y_{i\cdot}^{(k+1)}\|_{\ell_2}-\beta\tau}{\|Y_{i\cdot}^{(k+1)}\|_{\ell_2}}Y_{i\cdot}^{(k+1)},\quad i=1,\ldots,N \, ,\]
which involves only a simple shrinkage-thresholding step.
Finally, $\vect\Zc^{(k+1)}$ is found by applying a gradient ascent method as
\[\vect\Zc^{(k+1)}=\vect\Zc^{(k)} + \beta \, (\mathbf{B}- \vect\Gc\vect\Rho^{(k)}) .\]
For more details regarding the properties of this algorithm for the single measurement case, we refer to \cite{MNPR12} and references therein. We summarize it  for MMV problems in Algorithm~\ref{algo}.
\begin{algorithm}
\begin{algorithmic}
\REQUIRE Set $\vect\Rho=\vect0$, $\vect\Zc=\vect 0$ and pick the step size $\beta$, and the regularization parameter $\tau$
\REPEAT
\STATE Compute the residual $\vect\Rc= \mathbf{B} - \vect\Gc\vect\Rho$ 
\STATE $\vect\Rho\Leftarrow\vect\Rho + \beta\vect\Gc^\ast(\vect\Zc + \vect\Rc)$
\STATE $\Rho_{i\cdot}\Leftarrow\operatorname{sign}(\|\Rho_{i\cdot}\|_{\ell_2}-\beta\tau)\frac{\|\Rho_{i\cdot}\|_{\ell_2}-\beta\tau}{\|\Rho_{i\cdot}\|_{\ell_2}}\Rho_{i\cdot}$, $i=1,\ldots,K$
\STATE $\vect\Zc=\vect\Zc + \beta\vect\Rc$
\UNTIL{Convergence}
\end{algorithmic}
\caption{GelMa-MMV for solving \eqref{MMV-functional-augmented}}
\label{algo}
\end{algorithm}

\section{Array configuration and mutual coherence}\label{sec:arrayconfig}
We have already discussed that the performance of sparsity promoting algorithms strongly
depends on the mutual coherence of the sensing matrix, which is related to the array imaging
configuration. In this section, we give some analytical results for the mutual coherence
of two types of arrays that are often used in array imaging: planar arrays and
spherical arrays. The schemata of these two types are illustrated in Figure~\ref{fig:arrayschema}.
We show that under similar configurations of the IW (distance
to the array and the resolution), spherical arrays give smaller upper bounds of the
inner products of the normalized Green's function vectors than planar arrays in
condition \eqref{mutualcoherence}.
We give the proofs in Appendix~\ref{appendix:innerproduct}.

\begin{figure}
\begin{center}
\begin{tabular}{cc}
\includegraphics[scale=0.4]{./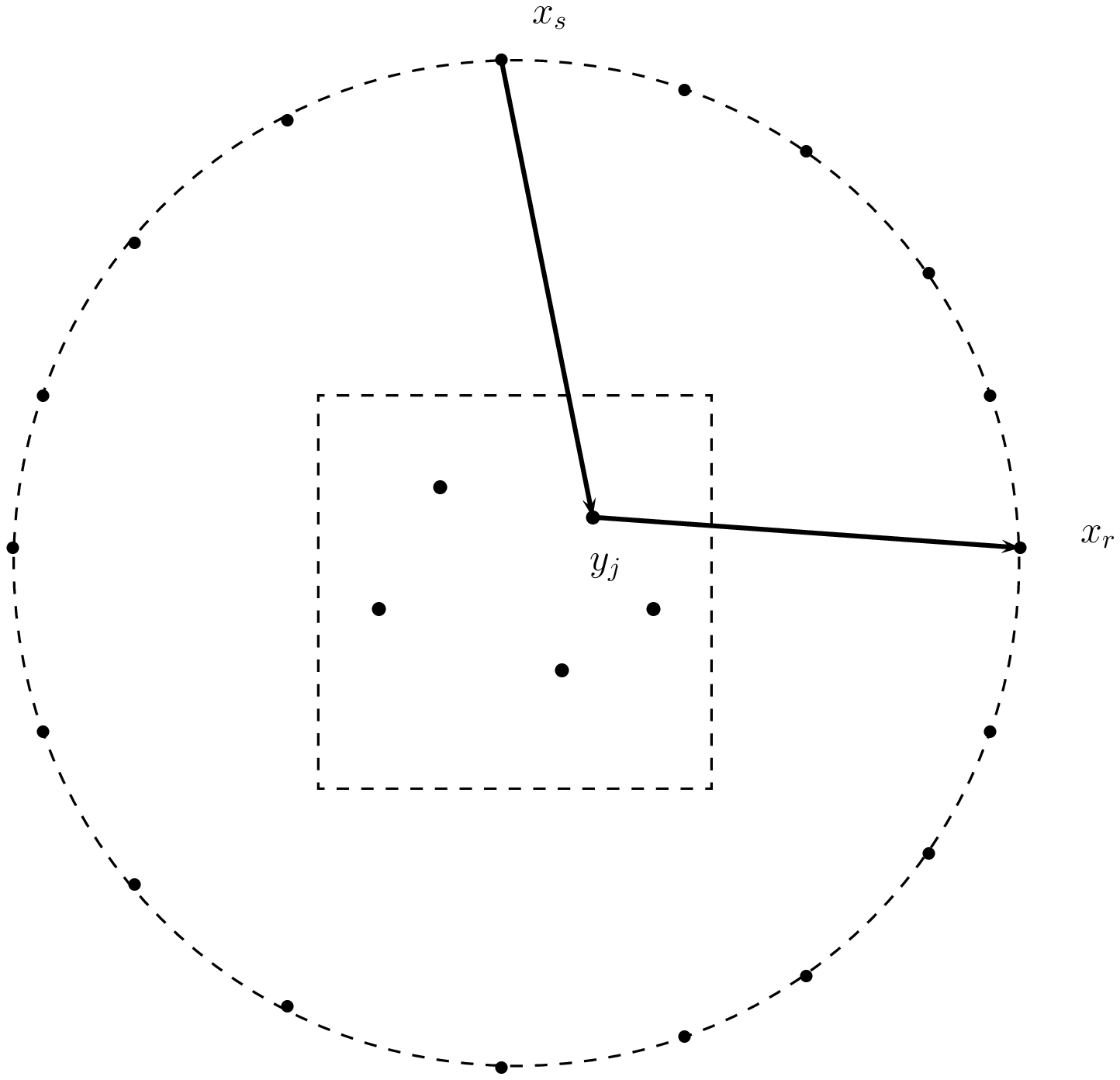}&
\includegraphics[scale=0.5]{./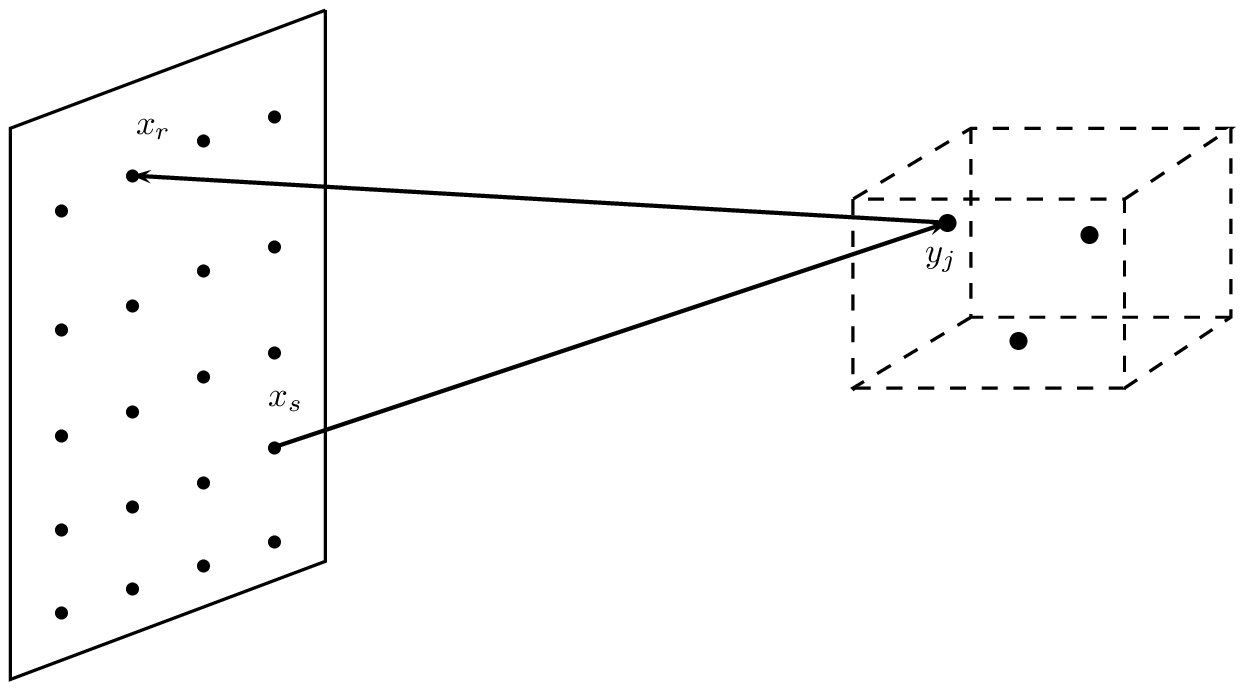}
\end{tabular}
\caption{Schemata of spherical (left) and planar (right) arrays}
\label{fig:arrayschema}
\end{center}
\end{figure}

The first result is on the estimate of the inner product when a spherical array surrounding the IW is used. It is
a well-known classical result. We state it here to be self-contained.
\begin{prop}\label{prop:spherical array}
Assume that the IW is fully surrounded by a spherical array of radius $L$.
Given any two points $\vect y_k$ and
$\vect y_{k'}$ in the IW such that $\lambda\ll|\vect y_k-\vect y_{k'}|\ll L$, we have 
\begin{equation}\label{eq:sphericalarrayinnerproduct}
\frac{\vect\wg_0^\ast(\vect y_k)\vect\wg_0(\vect y_{k'})}{\|\vect\wg_0(\vect y_k)\|_{\ell_2}\|\vect\wg_0(\vect y_{k'})\|_{\ell_2}}\approx\sinc(\kappa|\vect y_k-\vect y_{k'}|)\sim\frac{1}{\kappa|\vect y_k-\vect y_{k'}|}\, .
\end{equation}
Hence, the mutual coherence condition of the sensing matrix for spherical arrays is improved
at the rate of the pixel size relative to the wavelength.
\end{prop}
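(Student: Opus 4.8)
The plan is to reduce the normalized inner product to a classical oscillatory integral over the sphere, which evaluates to a sinc function. First I would write the numerator explicitly as a sum over the $N$ transducer positions $\vect x_n$ on the sphere of radius $L$,
\[
\vect\wg_0^\ast(\vect y_k)\vect\wg_0(\vect y_{k'})=\sum_{n=1}^N\overline{\wG_0(\vect x_n,\vect y_k)}\,\wG_0(\vect x_n,\vect y_{k'}),
\]
and insert the far-field Green's function \eqref{greenfunc}. Writing $\vect x_n=L\hat{\vect n}_n$ with $\hat{\vect n}_n$ a unit vector and using $|\vect y_k|,|\vect y_{k'}|\ll L$, I would Taylor expand the distances as $|\vect x_n-\vect y|\approx L-\hat{\vect n}_n\cdot\vect y$, keeping this linear correction only in the oscillatory phase and replacing each amplitude $1/(4\pi|\vect x_n-\vect y|)$ by $1/(4\pi L)$. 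The two phases then combine into $\exp\!\big(i\kappa\,\hat{\vect n}_n\cdot(\vect y_k-\vect y_{k'})\big)$, so that each summand reduces to $(4\pi L)^{-2}\exp(i\kappa\hat{\vect n}_n\cdot\vect d)$ with $\vect d=\vect y_k-\vect y_{k'}$.

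Next I would pass from the sum over transducers to an integral over the sphere. Since the array covers the sphere of radius $L$ uniformly, the transducer density per unit solid angle is $N/(4\pi)$, so the sum is approximated by $\tfrac{N}{4\pi}\int_{S^2}(\cdots)\,d\Omega(\hat{\vect n})$. The remaining integral is the standard one: aligning the polar axis with $\vect d$ gives
\[
\int_{S^2}e^{i\kappa\hat{\vect n}\cdot\vect d}\,d\Omega(\hat{\vect n})=2\pi\int_{-1}^{1}e^{i\kappa|\vect d|u}\,du=4\pi\,\sinc(\kappa|\vect d|),
\]
so the numerator is asymptotically $\tfrac{N}{(4\pi L)^2}\sinc(\kappa|\vect d|)$.

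For the denominator I would apply the same approximation with $\vect y_{k'}=\vect y_k$, i.e. $\vect d=\vect 0$ and $\sinc(0)=1$, which yields $\|\vect\wg_0(\vect y_k)\|_{\ell_2}^2\approx N/(4\pi L)^2$ and likewise for $\vect y_{k'}$. The common prefactor $N/(4\pi L)^2$ cancels between numerator and denominator, leaving exactly $\sinc(\kappa|\vect d|)=\sinc(\kappa|\vect y_k-\vect y_{k'}|)$, which establishes \eqref{eq:sphericalarrayinnerproduct}. The stated asymptotic $\sinc(x)\sim 1/x$ for $x=\kappa|\vect d|\gg1$, valid precisely because $|\vect d|\gg\lambda$, then shows that the coherence decays at the rate of the pixel size relative to the wavelength.

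The main obstacle is controlling the two approximations. The hypothesis $|\vect y_k-\vect y_{k'}|\ll L$ justifies truncating the distance expansion after the linear term: the neglected quadratic contribution produces a phase of order $\kappa|\vect y|^2/L$, which must be negligible, and it is this smallness that must be quantified. The hypothesis $\lambda\ll|\vect y_k-\vect y_{k'}|$ ensures $\kappa|\vect d|$ is large, so that the oscillatory integral genuinely decays like $1/(\kappa|\vect d|)$ rather than being swamped by its value at the origin. Finally, replacing the discrete array sum by the spherical integral requires the transducer spacing $h=O(\lambda)$ to resolve the oscillation $\exp(i\kappa\hat{\vect n}\cdot\vect d)$, whose angular period is of order $1/(\kappa|\vect d|)$; verifying that the Riemann-sum error is controlled in the regime $\lambda\ll|\vect y_k-\vect y_{k'}|\ll L$ is the delicate point of the argument.
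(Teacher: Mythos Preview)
Your proposal is correct and follows essentially the same route as the paper's proof: both use the far-field expansion $|\vect x-\vect y|\approx L-\hat{\vect x}\cdot\vect y$ in the phase, replace amplitudes by $1/(4\pi L)$, pass from the transducer sum to a spherical integral, and evaluate the latter in polar coordinates aligned with $\vect y_k-\vect y_{k'}$ to obtain the sinc. The only cosmetic difference is the normalization of the continuum approximation (the paper writes the Riemann sum with area element $h^{-2}\,\mathrm{d}\vect x$ and gets $\|\vect\wg_0(\vect y)\|_{\ell_2}^2\approx 1/(4\pi h^2)$, whereas you use the solid-angle density $N/(4\pi)$ and get $N/(4\pi L)^2$), but these prefactors cancel identically in the ratio.
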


The second result is on the estimate when a planar array is used.
\begin{prop}\label{prop:planar array}
Assume a planar array of finite size and let $\vect y_k$ and
$\vect y_{k'}$ be two points within the IW such that $\lambda\ll|\vect y_k-\vect y_{k'}|\ll L$. Then, we have
\begin{equation}\label{eq:planararrayinnerproduct}
\frac{\vect\wg_0^\ast(\vect y_k)\vect\wg_0(\vect y_{k'})}{\|\vect\wg_0(\vect y_k)\|_{\ell_2}\|\vect\wg_0(\vect y_{k'})\|_{\ell_2}}\sim\frac{1}{\sqrt{\kappa|\vect y_k-\vect y_{k'}|}}.
\end{equation}
Hence, the mutual coherence condition of the sensing matrix for planar arrays is improved
at the rate of square root of the pixel size relative to the wavelength.
\end{prop}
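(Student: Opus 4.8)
The goal is to estimate the normalized inner product $\vect\wg_0^\ast(\vect y_k)\vect\wg_0(\vect y_{k'})/(\|\vect\wg_0(\vect y_k)\|_{\ell_2}\|\vect\wg_0(\vect y_{k'})\|_{\ell_2})$ for a finite planar array and to show it decays like $1/\sqrt{\kappa|\vect y_k-\vect y_{k'}|}$. The plan is to pass from the discrete array sum to a continuous surface integral over the planar aperture, treating the transducer spacing $h\sim\lambda$ as fine enough that the Riemann-sum approximation is valid. Writing $\vect\wg_0(\vect y)$ componentwise via the far-field Green's function \eqref{greenfunc}, the numerator becomes (up to the slowly varying $1/(4\pi|\vect x-\vect y|)^2$ amplitude factor, which I would pull outside as roughly constant over the array) an integral of the form $\int_{\Ac}\exp\big(i\kappa(|\vect x-\vect y_{k'}|-|\vect x-\vect y_k|)\big)\,\md\vect x$, while the normalizing factors in the denominator reduce to the array area (the aperture's total power), since the self inner product $\|\vect\wg_0(\vect y_k)\|_{\ell_2}^2$ has no oscillatory phase.

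The central computation is the stationary-phase (or Fresnel) analysis of that oscillatory integral. First I would expand the phase $\Psi(\vect x)=\kappa(|\vect x-\vect y_{k'}|-|\vect x-\vect y_k|)$ in the paraxial regime $|\vect y_k-\vect y_{k'}|\ll L$, using $|\vect x-\vect y|\approx L+|\vect x_\perp-\vect y_\perp|^2/(2L)$ for points at range $\approx L$ from the array plane. This linearizes the phase difference in the transverse offset so that, after subtracting, the range-direction contribution and the transverse contribution separate. The transverse integral over the finite aperture is then a product of one-dimensional Fresnel-type integrals $\int\exp(i a\xi^2+ib\xi)\,\md\xi$ over the array extent; evaluating these (completing the square, or invoking the $\int e^{i\xi^2}\md\xi\sim\sqrt{\pi}$ asymptotic for a large aperture) produces the characteristic $1/\sqrt{\kappa|\vect y_k-\vect y_{k'}|}$ factor. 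The essential asymmetry with the spherical case is dimensional: a planar aperture supplies only a two-dimensional patch of directions, so one pair of stationary-phase directions is missing relative to a full sphere, and each ``missing'' dimension of integration over a non-stationary phase contributes a half-power of decay. This is exactly why the planar rate $1/\sqrt{\kappa|\vect y_k-\vect y_{k'}|}$ is the square root of the spherical rate $1/(\kappa|\vect y_k-\vect y_{k'}|)$ from Proposition~\ref{prop:spherical array}.

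I expect the main obstacle to lie in controlling the finite-aperture boundary terms in the Fresnel integral. Over an infinite plane the integral would localize cleanly at a stationary point, but a finite planar array truncates the oscillatory integrand, so the asymptotic order depends on whether the relevant stationary phase point (the specular direction between $\vect y_k$ and $\vect y_{k'}$) falls inside the aperture and on how the edge contributions compare to the interior contribution. The careful bookkeeping is to verify, under the hypotheses $\lambda\ll|\vect y_k-\vect y_{k'}|\ll L$, that the phase is genuinely oscillatory and stationary-phase-dominated rather than edge-dominated, so that the interior $1/\sqrt{\kappa|\vect y_k-\vect y_{k'}|}$ term is the leading asymptotic and the boundary corrections are lower order. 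A secondary technical point is justifying that the amplitude factors $1/(4\pi|\vect x-\vect y|)$ vary slowly enough across the aperture to be treated as constant; this should follow from $|\vect y_k-\vect y_{k'}|\ll L$, but I would state it explicitly to keep the asymptotic $\sim$ in \eqref{eq:planararrayinnerproduct} honest.
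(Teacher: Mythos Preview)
Your overall strategy of passing to a continuum surface integral and analyzing its large-$\kappa\eta$ asymptotics (with $\eta=|\vect y_k-\vect y_{k'}|$) is sound and matches the paper's starting point, but the mechanism you propose for the $1/\sqrt{\kappa\eta}$ decay does not work as stated. In the paraxial expansion $|\vect x-\vect y|\approx L+|\vect x_\perp-\vect y_\perp|^2/(2L)$, the quadratic-in-$\vect x_\perp$ terms \emph{cancel} when you form the phase difference for two points at the same range, leaving a phase that is linear in $\vect x_\perp$. So for purely cross-range separations your aperture integral is not a Fresnel integral at all but a Fourier-type integral $\int_{\Ac} e^{i\kappa(\vect y_k-\vect y_{k'})_\perp\cdot\vect x_\perp/L}\,\md\vect x_\perp$, and the stationary-phase rationale you give for the half-power collapses. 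For purely range separations the quadratic terms do survive (since $1/L_k-1/L_{k'}\neq 0$), but then the two-dimensional Fresnel integral yields decay like $1/(\kappa\eta)$, not $1/\sqrt{\kappa\eta}$. In short, your proposed Fresnel mechanism does not produce the claimed rate in either limiting geometry.

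The paper's proof takes a different route. It does not use the paraxial expansion; instead it parametrizes the planar aperture by the polar angle $\phi$ from the range axis out to $\phi_0=\arctan(a/2L)$, keeps the amplitude factor $1/|\vect x-\vect y|^2$ (which brings in a $\tan\phi$ weight rather than being pulled out as constant), and splits into cases according to whether $\vect y_k-\vect y_{k'}$ is perpendicular or parallel to the range direction. The perpendicular case is handled by quoting Proposition~3.1 of \cite{CMP13}, which already gives the $1/\sqrt{\kappa\eta}$ rate; the parallel case is reduced to the one-dimensional oscillatory integral $\int_{\cos\phi_0}^{1} z^{-1}e^{-i\kappa\eta z}\,\md z$ and analyzed by integration by parts. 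Since that phase has no interior stationary point on $[\cos\phi_0,1]$, the endpoint contributions dominate and give $O(1/(\kappa\eta))$, which is strictly faster. The stated $1/\sqrt{\kappa\eta}$ is therefore the slower of the two directional rates, governed entirely by the cross-range case. Your concern about edge versus interior contributions is apt for the range direction (it is indeed edge-dominated there), but the half-power you are after lives in the cross-range direction and does not arise from a quadratic stationary phase in the Fresnel expansion.
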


Based on these results, the upper bound of \eqref{def:mutualcoherence} is smaller for spherical arrays than for planar arrays.
The pixel size of the IW with which good images are obtained is smaller for spherical arrays than for planar arrays.
According to the analyses in \S\ref{sec:SMV} and \S\ref{sec:MMV}, array imaging with
spherical arrays can then locate more scatterers with higher resolution and is more robust with
respect to the additive noise than array imaging with planar arrays, provided all other conditions
are identical. This observation is supported by the numerical experiments.

\section{Numerical simulation}\label{simulation}
In this section we present numerical simulations  in two dimensions. The linear array 
consists of $100$ transducers that are one wavelength $\lambda$ apart. Five
scatterers are placed within an IW  of size $41\lambda\times41\lambda$ which is at a distance $L=100\lambda$ from the linear array. 
The amplitudes of the reflectivities of the scatterers, $|\alpha_j|$, are $2.96$, $2.76$, $2.05$, $1.54$ and $1.35$ (see
Fig.~\ref{original}).  Their phases are set randomly in each realization. We note that, 
given an illumination vector $\vect\wf$ and a scatterer configuration $\vect\rho_0$ with fixed amplitudes,
the exact amount of multiple scattering depends on the realization
of the phases in $\vect\rho_0$. For the amplitudes of the reflectivities chosen here, the amount of multiple scattering, defined by
 \begin{equation}\label{ams}
\frac{\|\vect\wP - \vect\wP_{ss}\|_F}{\|\vect\wP_{ss}\|_F}\times 100\, ,
\end{equation}
typically ranges between $50\%$  and $100\%$ in the simulations shown below. In \eqref{ams}, $\vect\wP_{ss}$ is
the response matrix without multiple scatterering, computed by replacing
$\vect\Gc_{FL}^T(\bfrho_0)$ by $\vect\Gc^T$ in \eqref{responsematrix3}, i.e., $\vect\wP_{ss}=\vect\Gc\diag(\bfrho)\vect\Gc^T$.

The five scatterers are within an IW that is discretized using a uniform lattice with points separated by one wavelength $\lambda$.
This results in a $41\times41$ uniform mesh. Hence, we have $1681$ unknowns and $100$ measurements.
In all the images shown below, we normalize the spatial units by the wavelength $\lambda$.
For this configuration of the IW, the mutual coherence \eqref{def:mutualcoherence} of the sensing matrix $\vect\Gc$ has a numerical
value equal to $0.98$. This, together with $M=5$ scatterers,
clearly violates the sufficient condition for stable reconstruction required by formulations using either single illumination
or multiple illuminations.
However, this condition is quite conservative and we will show that the images are still good when the noise level is low in the data. 
Finally, we note that the obtained images depend on the realization of the random phases of the scatterers.
In all the images shown below, we do not display the ones with the best quality we have seen in our numerical study.

\begin{figure}[htbp]
\begin{center}
\includegraphics[scale=0.35]{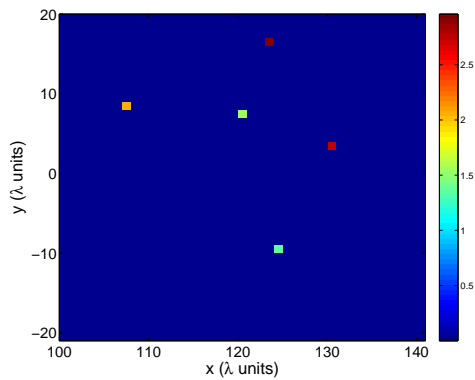}
\caption{Original configuration of the scatterers in a $41\times41$ image window with grid points separated by $1$.
The amplitudes of the reflectivities of the scatterers, $|\alpha_j|$, are $2.96$, $2.76$, $2.05$, $1.54$ and $1.35$.}
\label{original}
\end{center}
\end{figure}

Figure~\ref{SVM} shows the results of  $\ell_1$ norm minimization with $0\%$ (left), $10\%$ (middle) and $20\%$ noise (right)
when a single illumination coming from the center of the array is used. The exact locations of the scatterers in these images
are indicated with small white dots. When there is no noise in the data, $\ell_1$ norm minimization recovers the positions and
reflectivities of the scatterers accurately. However, when $10\%$ and $20\%$ of noise is added to the data, the method fails to
recover some of the scatterers and the images show some ghosts. Note that some scatterers are missing in the middle and right images of Figure~\ref{SVM}.

\begin{figure}
\centering
\begin{tabular}{ccc}
\includegraphics[scale=0.25]{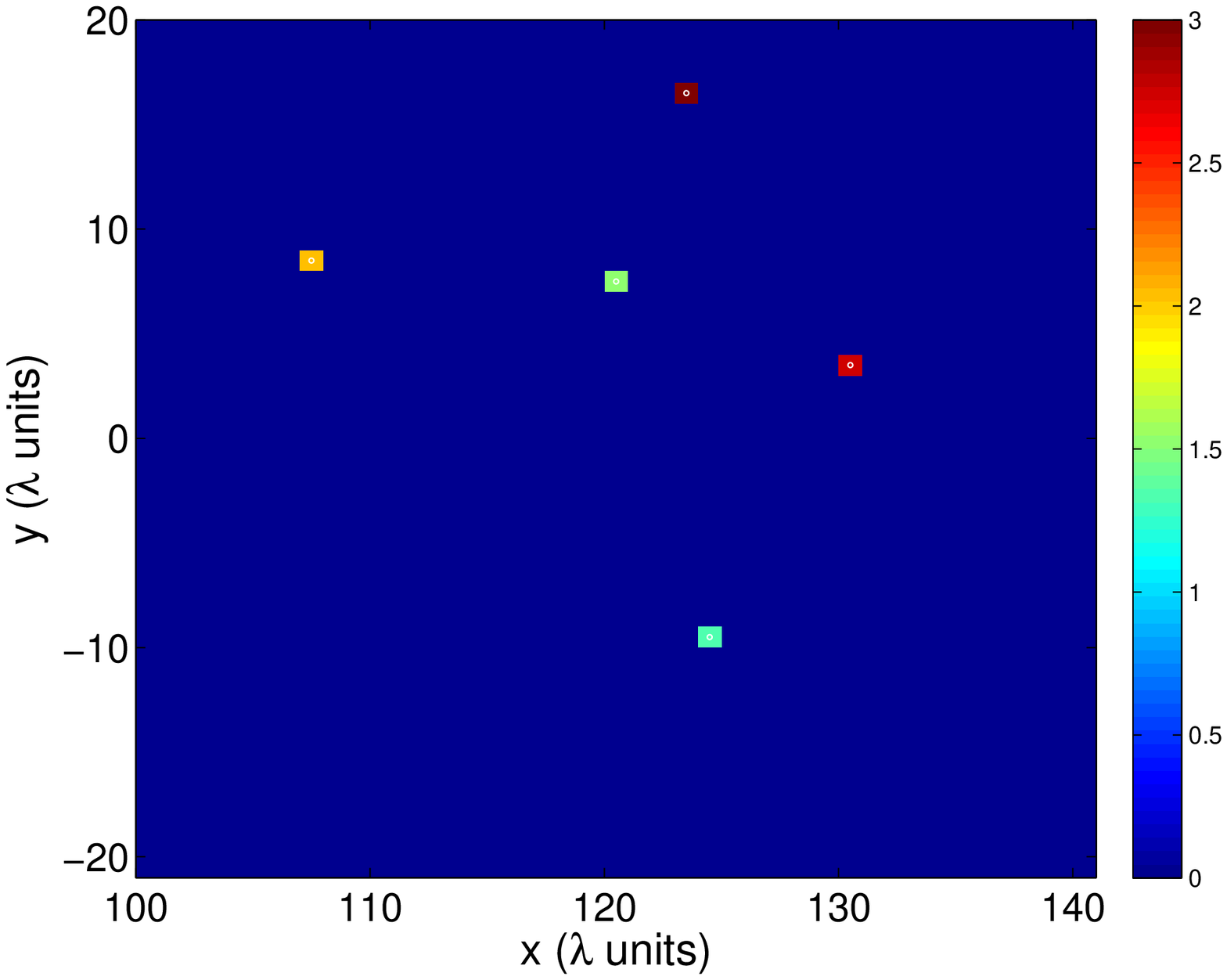} & 
\includegraphics[scale=0.25]{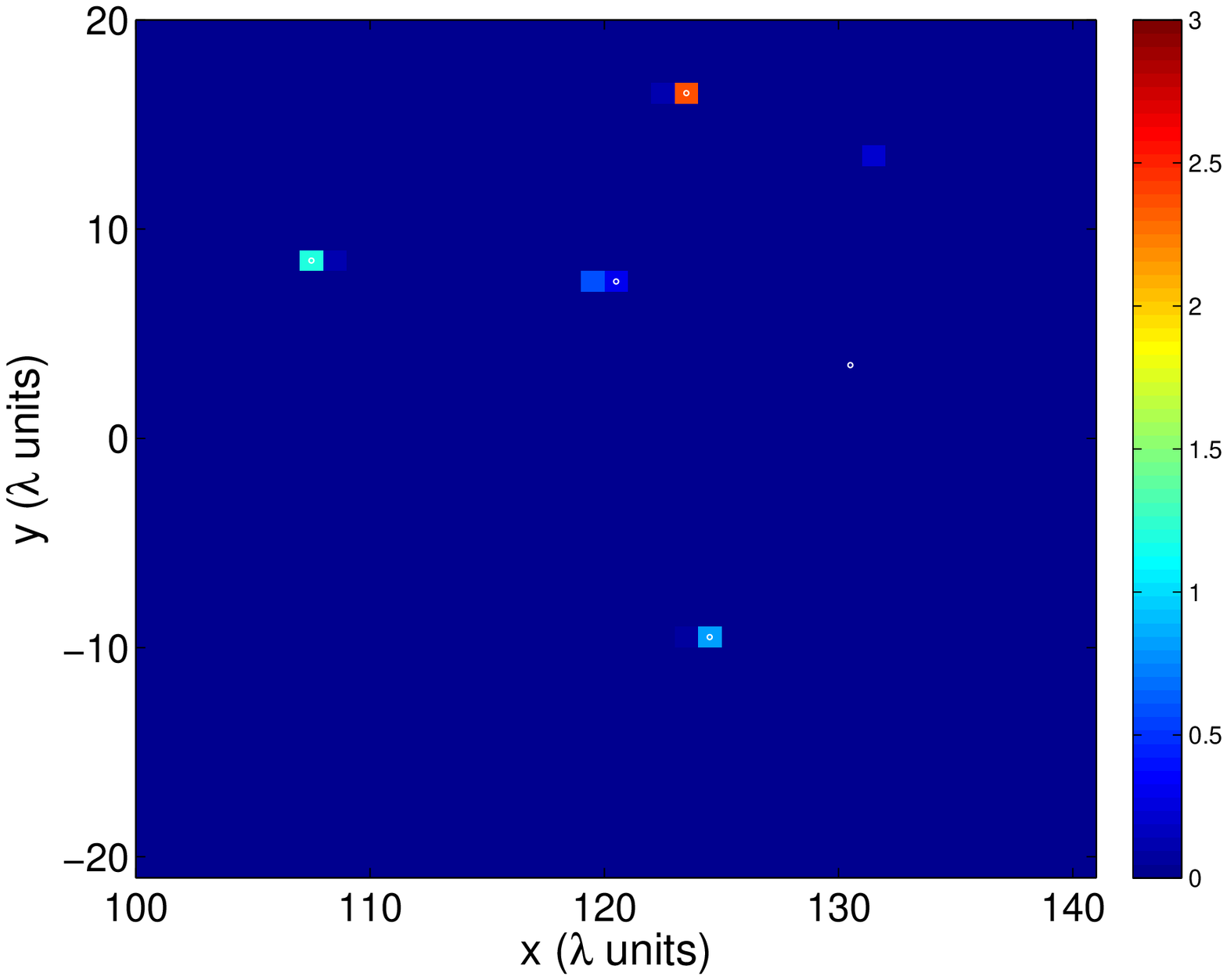} & 
\includegraphics[scale=0.25]{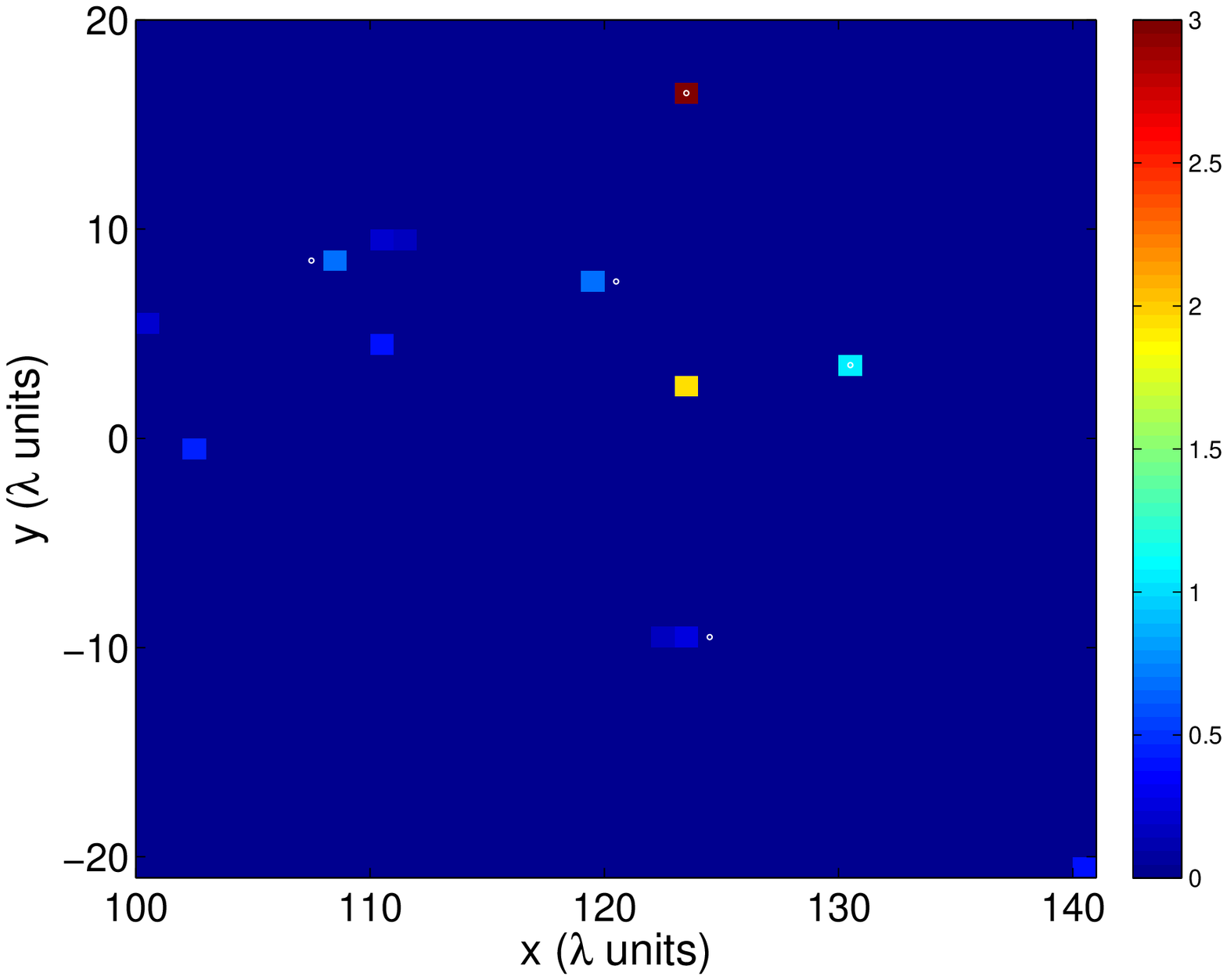}
\end{tabular}
\caption{Images reconstructed by solving \eqref{l1singleillum} and \eqref{l1singleillumnoise} when single illumination is used.
From left to right, there is $0\%$, $10\%$, and $20\%$ noise in the data.
}
\label{SVM}
\end{figure}

In order to stabilize the images, we study the improvement of the results when data collected with multiple illuminations are used.
We consider first the case where the illuminations are randomly selected. By random illuminations we mean several illuminations
coming, each one, from only one of the transducers on the array at a time, i.e., $\wf_{p}=1$ and  $\wf_q=0$ for $q\neq p$,
with $p$ chosen randomly at a time.
Figure~\ref{MMV} shows the results of the MMV algorithm when $5$ (top row) and $15$ (bottom row)
random illuminations are used. Additive noise at level $10\%$ (left column), $20\%$ (middle column) and $50\%$ (right column) is
added to the data in these numerical experiments. As expected, the images obtained with multiple illuminations are more stable with
respect to additive noise. In fact, only a small number of illuminations are needed to improve the imaging performance significantly.
However, it is not always true in general that more random illuminations always lead to better images. For instance, the image
obtained with $20\%$ noise and $15$ random illuminations (middle image of the bottom row) is worse than that obtained with $20\%$
noise and $5$ random illuminations (middle image of the top row). This is so because the illuminations are chosen randomly
and ``good" illuminations that lead to enough data diversity cannot be guaranteed. 

\begin{figure}
\centering
\begin{tabular}{ccc}
\includegraphics[scale=0.25]{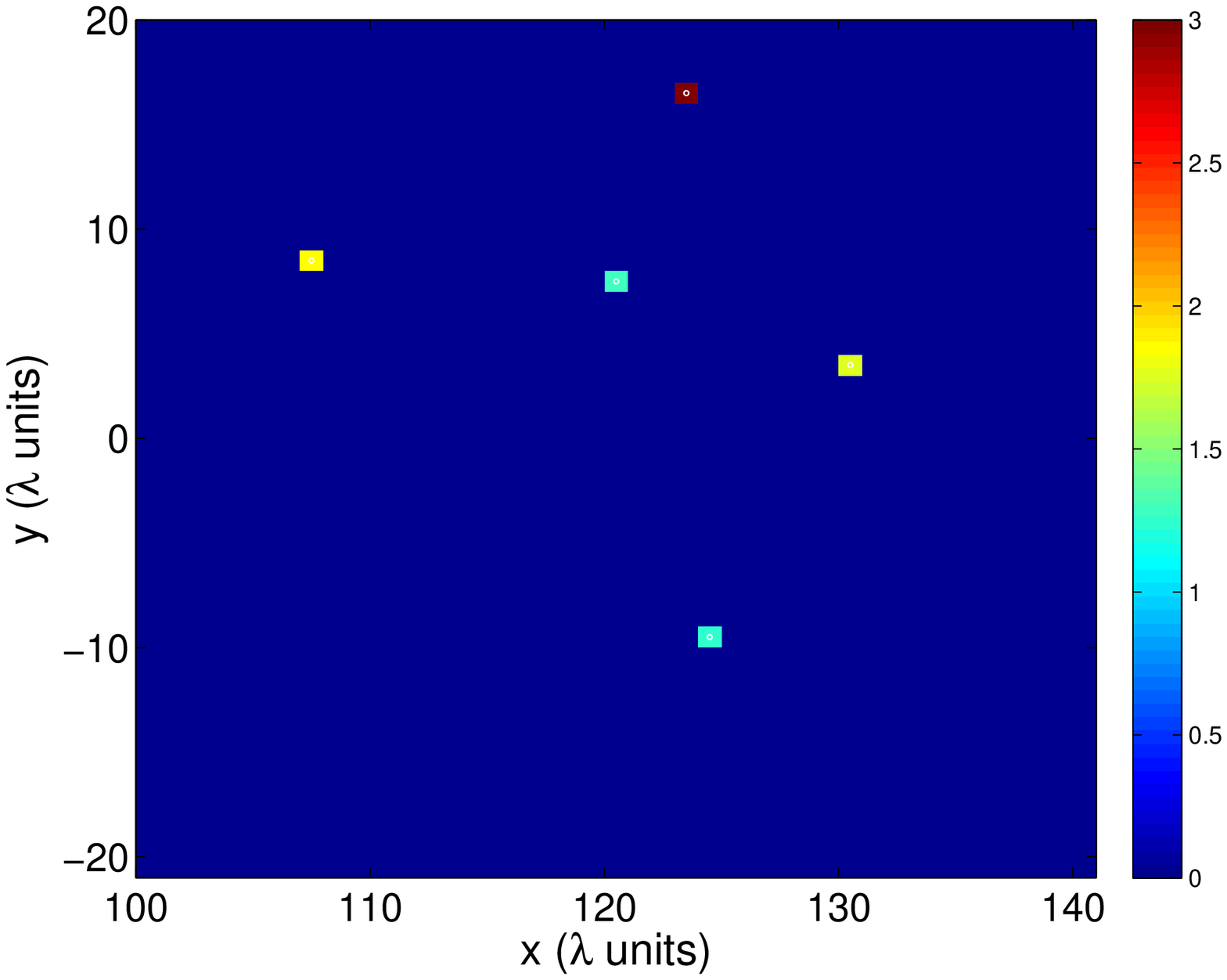} & 
\includegraphics[scale=0.25]{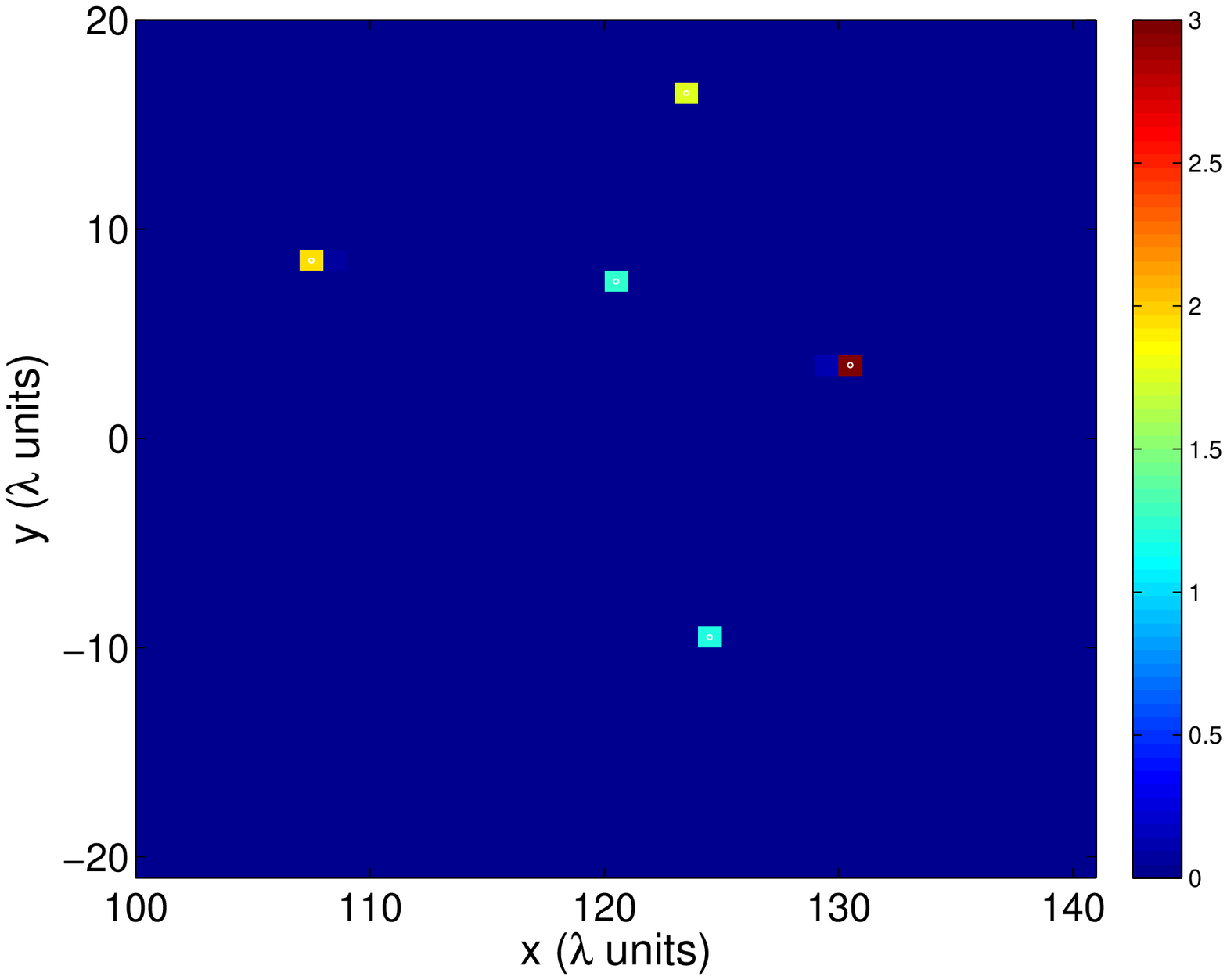} & 
\includegraphics[scale=0.25]{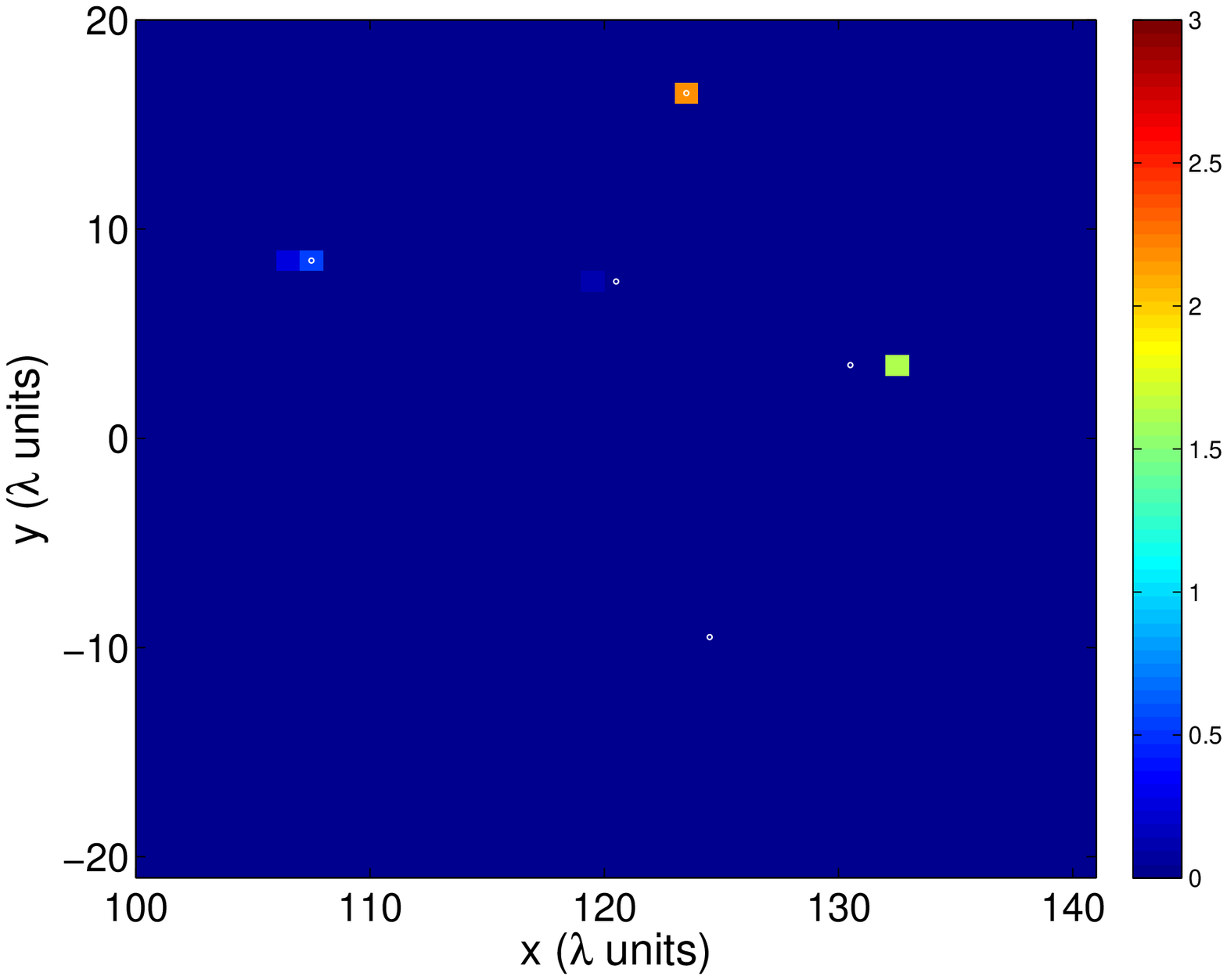} \\
\includegraphics[scale=0.25]{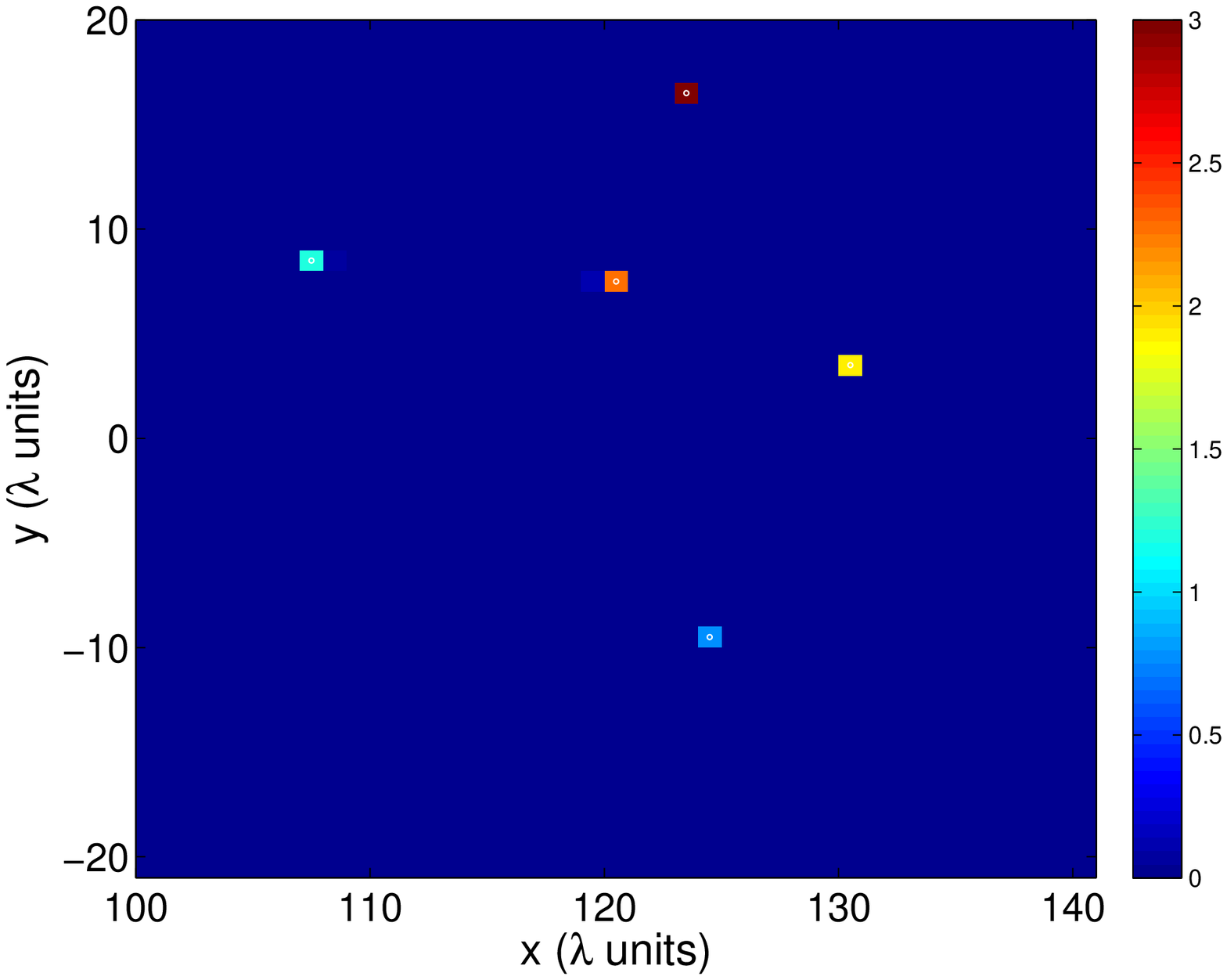} & 
\includegraphics[scale=0.25]{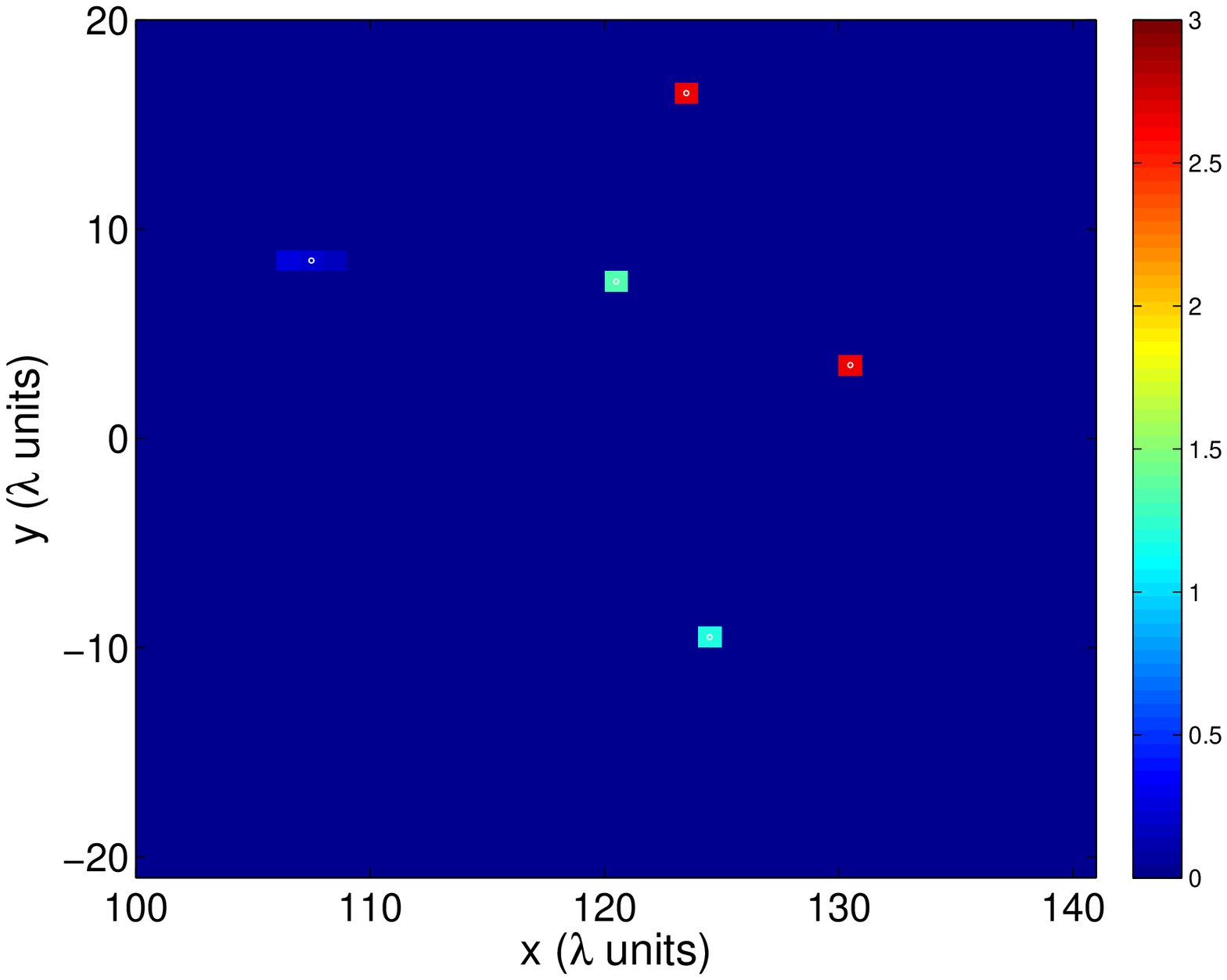} & 
\includegraphics[scale=0.25]{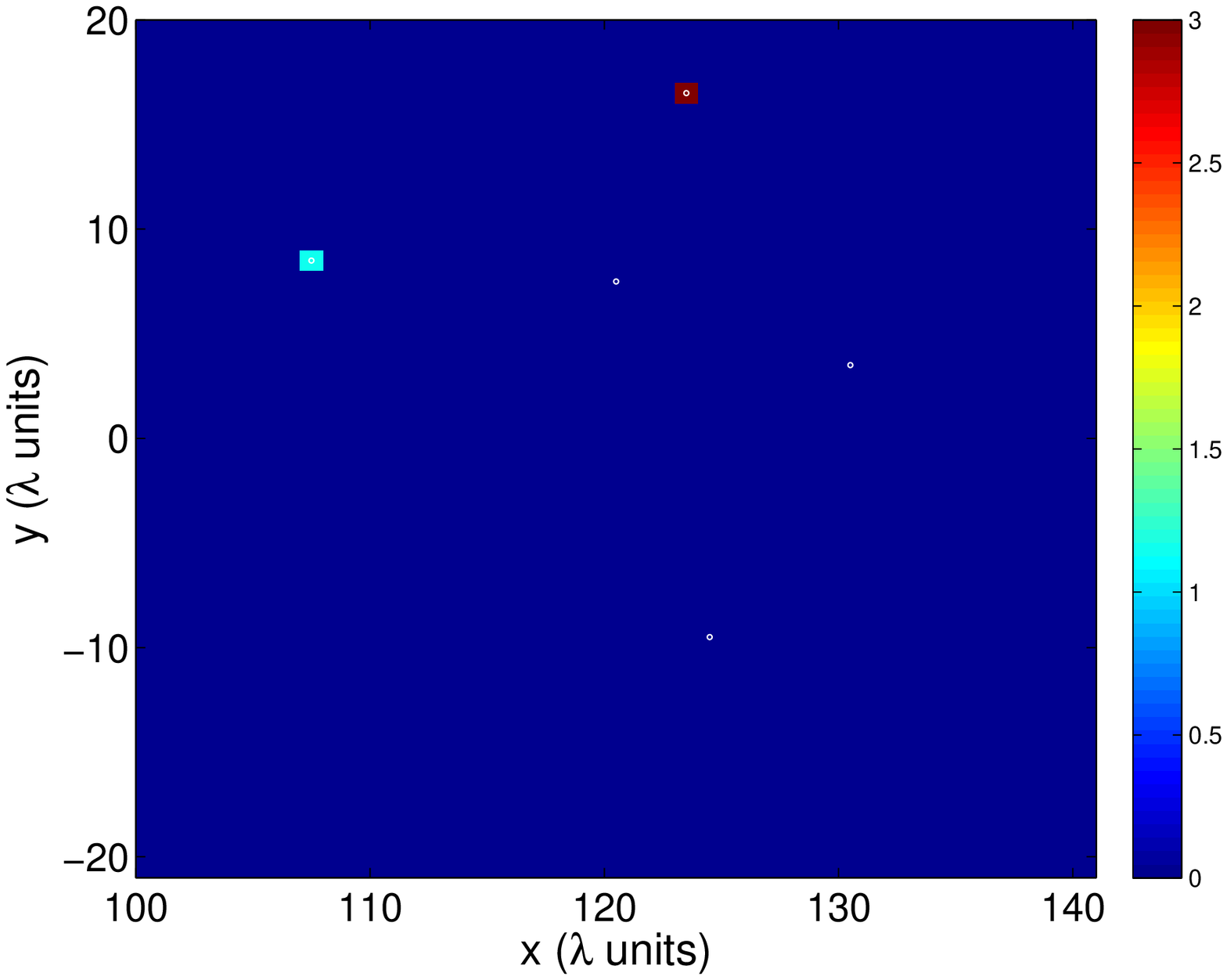}
\end{tabular}
\caption{Images reconstructed by solving \eqref{MMV21noise} when $5$ (top row) and $15$ (bottom row) random
illuminations are used. From left to right, there is  $10\%$, $20\%$, and $50\%$ noise in the data.
}
\label{MMV}
\end{figure}

Figure~\ref{MMV} indicates the importance of selecting ``good" illuminations in the MMV formulation and suggests the use of
optimal illuminations, especially when the signal-to-noise ratio (SNR) is low.
Using optimal illuminations means taking $\vect\wf^{j}=\vect\wV_{\cdot j}$, $j=1,\ldots,M$, as illuminations.
These vectors can be obtained through the SVD of the array response matrix $\vect\wP$ or by iterative time reversal.
Note that, by choosing the illuminations optimally, we (i) maximize the data diversity, which also means low unnecessary redundancy of the multiple illuminations; and (ii) we
reduce the noisy terms contained in the data matrix $\mathbf{B}$.
On the other hand, we point out that, in principle, this approach would require the prior knowledge of the number of scatterers $M$ if the
noise level is high and is difficult to determine the singular values that correspond to the signal space. Hence, it is important to investigate the robustness of this approach with respect to the number of optimal illuminations used
in the scheme. In Figure~\ref{MMVoptimal} we display the results when an increasing number of optimal illuminations are used.
From left to right, and from top to bottom, we use $1$, $2$, $3$, $4$, $5$, and $12$ illuminations associated to the corresponding
singular vectors $\vect\wV_{\cdot j}$, with $j=1,2,3,4,5,12$. We observe that this approach is very robust with respect to the
number of optimal illuminations used. It is remarkable that only a few of them ($2$ or $3$) are enough to achieve a 
significant improvement.
Furthermore, using many more singular vectors as illuminations does not deteriorate too much the images (see the right image in the bottom row,
where $12$ illuminations are use). Finally, we point out that when multiple scattering is negligible all the significant singular
vectors are necessary as shown in \cite{CMP13}. In that case, each optimal illumination aims at one scatterer at a time, provided
that the array is large enough.

\begin{figure}
\centering
\begin{tabular}{ccc}
\includegraphics[scale=0.25]{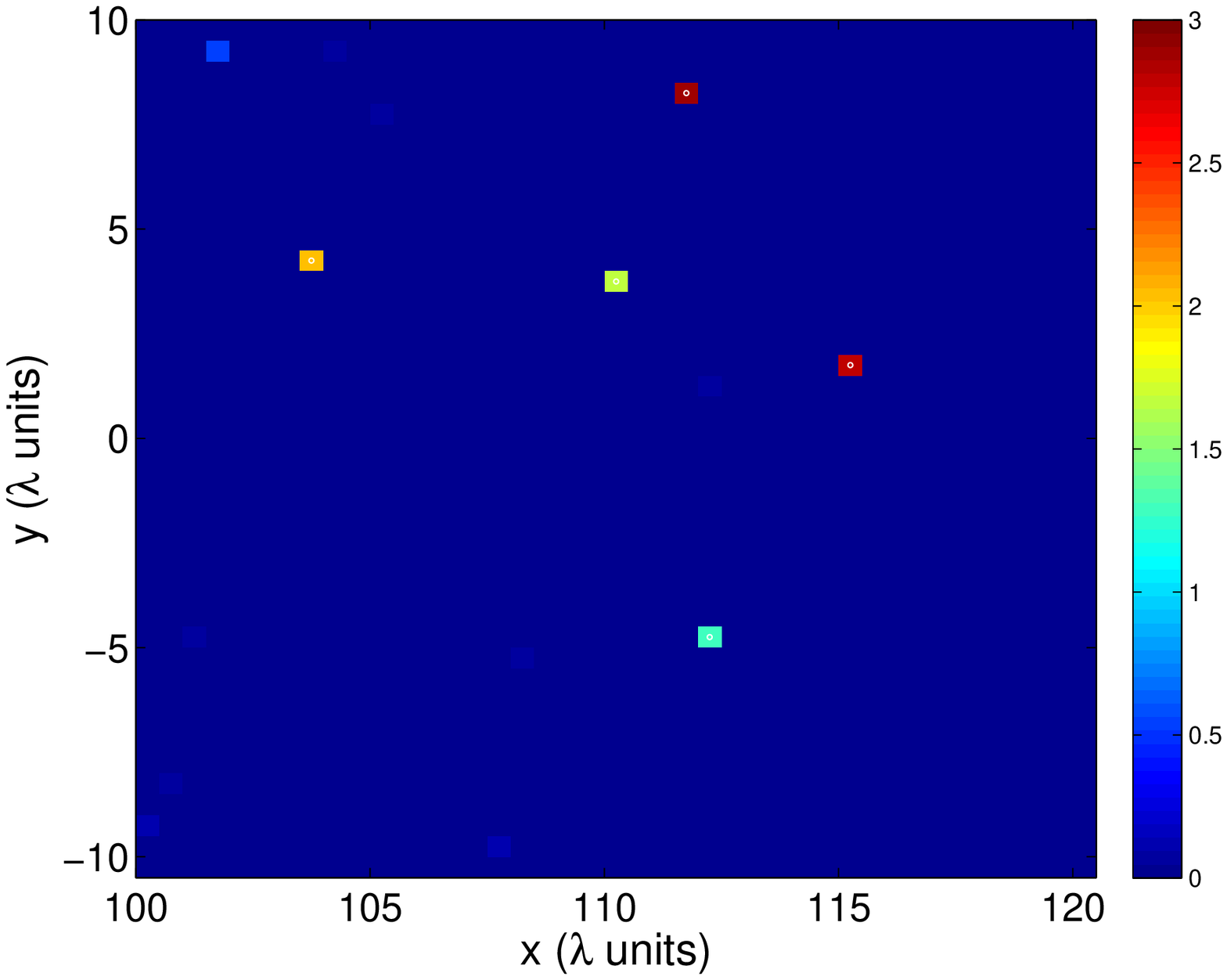} & 
\includegraphics[scale=0.25]{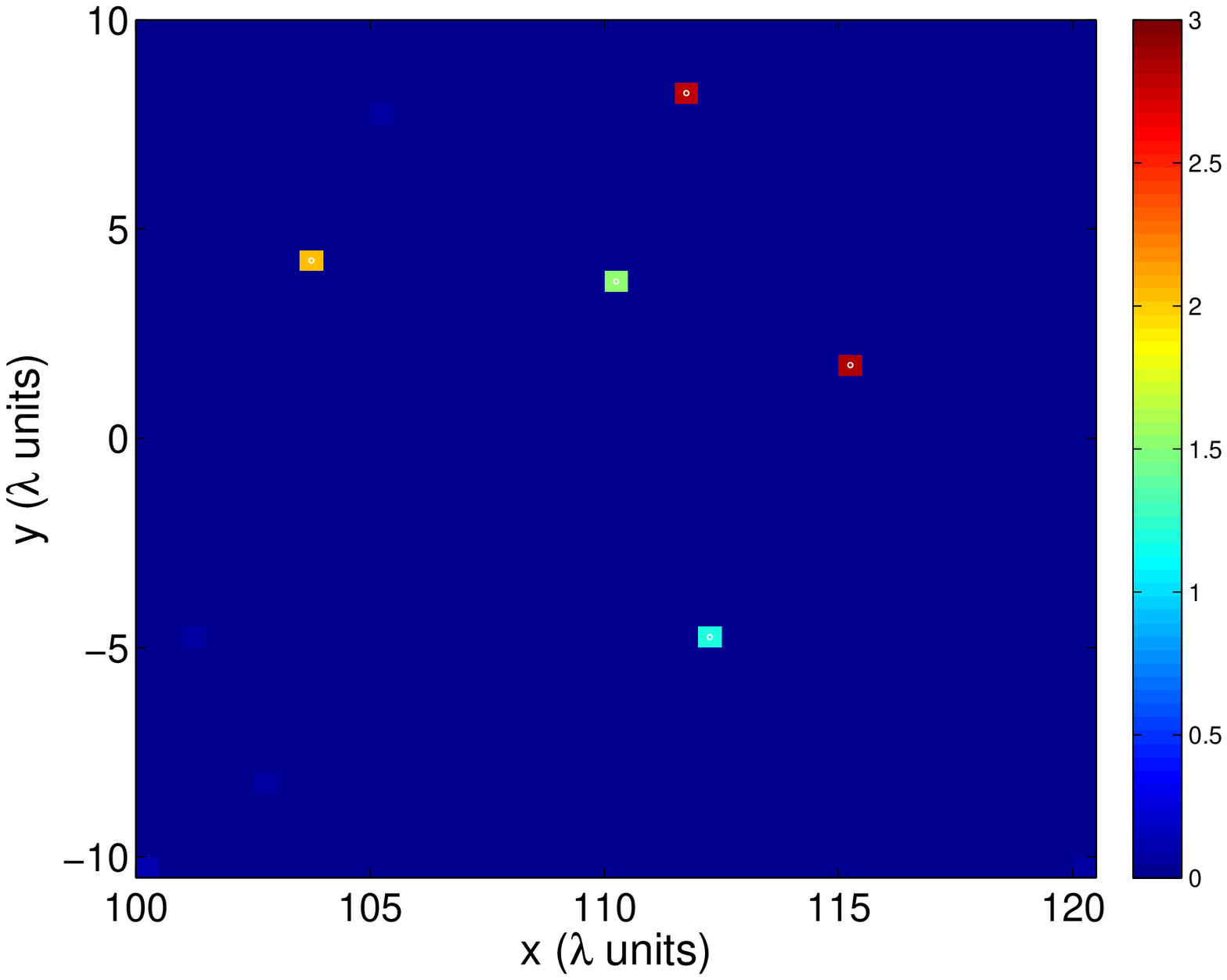} & 
\includegraphics[scale=0.25]{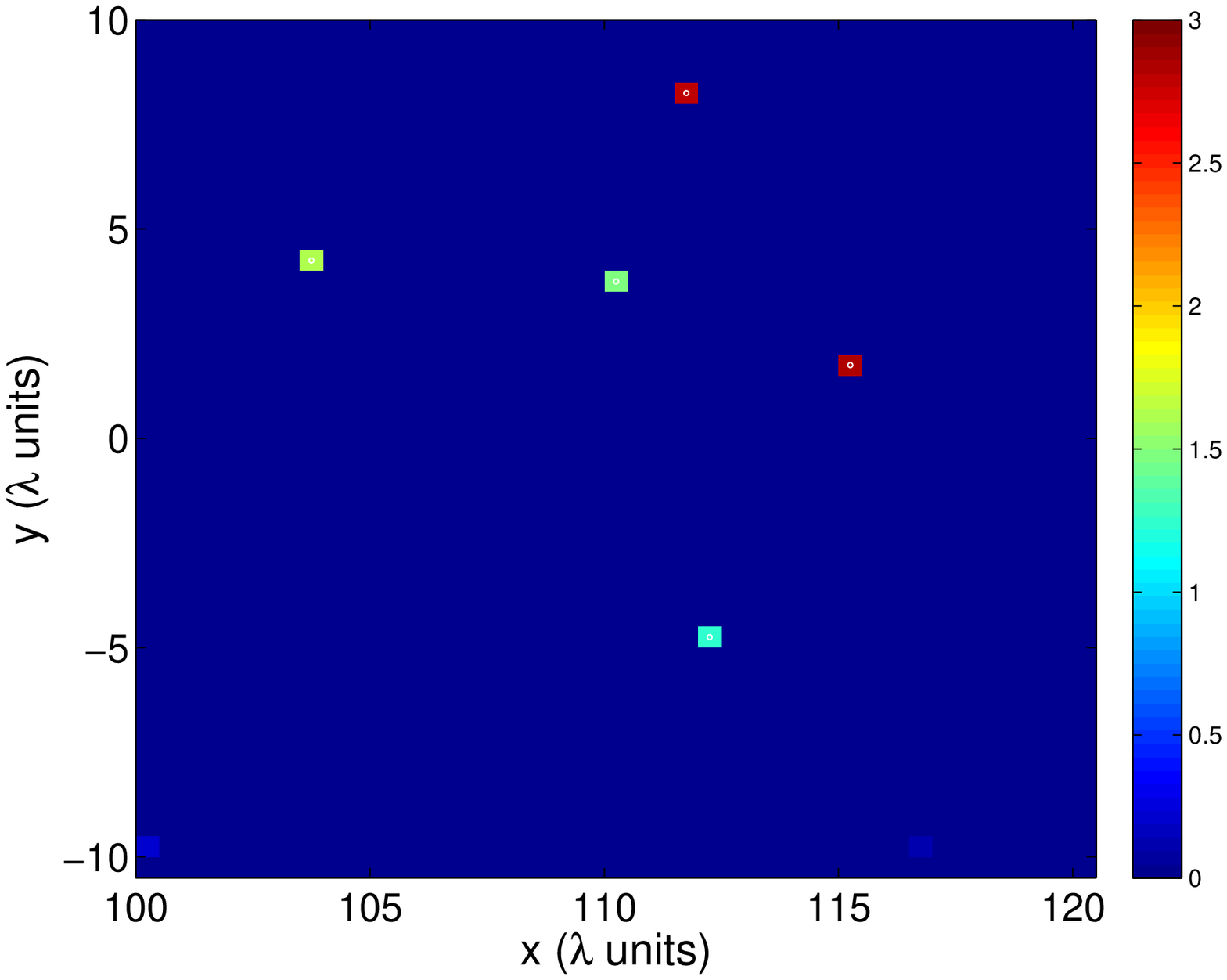}\\
\includegraphics[scale=0.25]{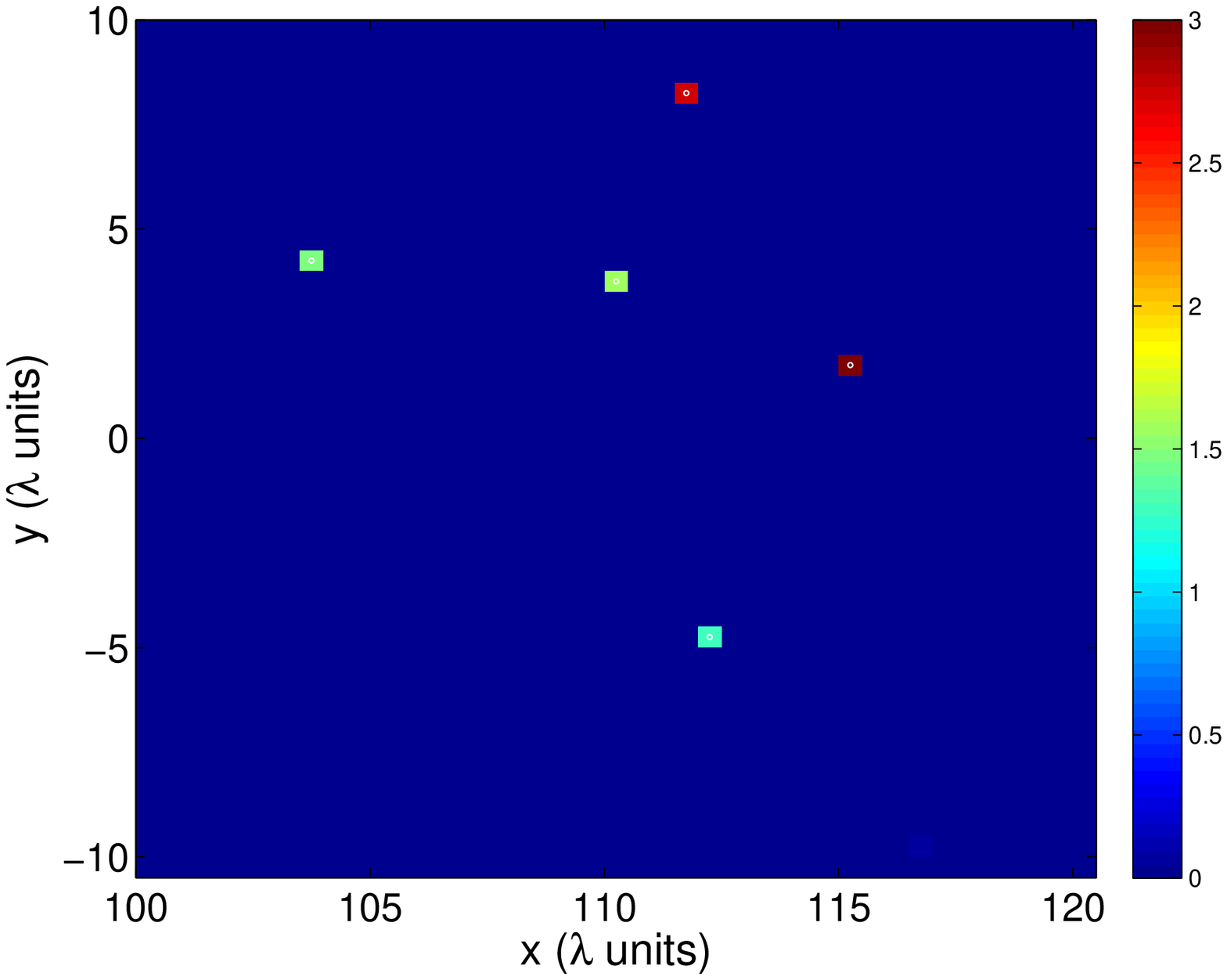} &
 \includegraphics[scale=0.25]{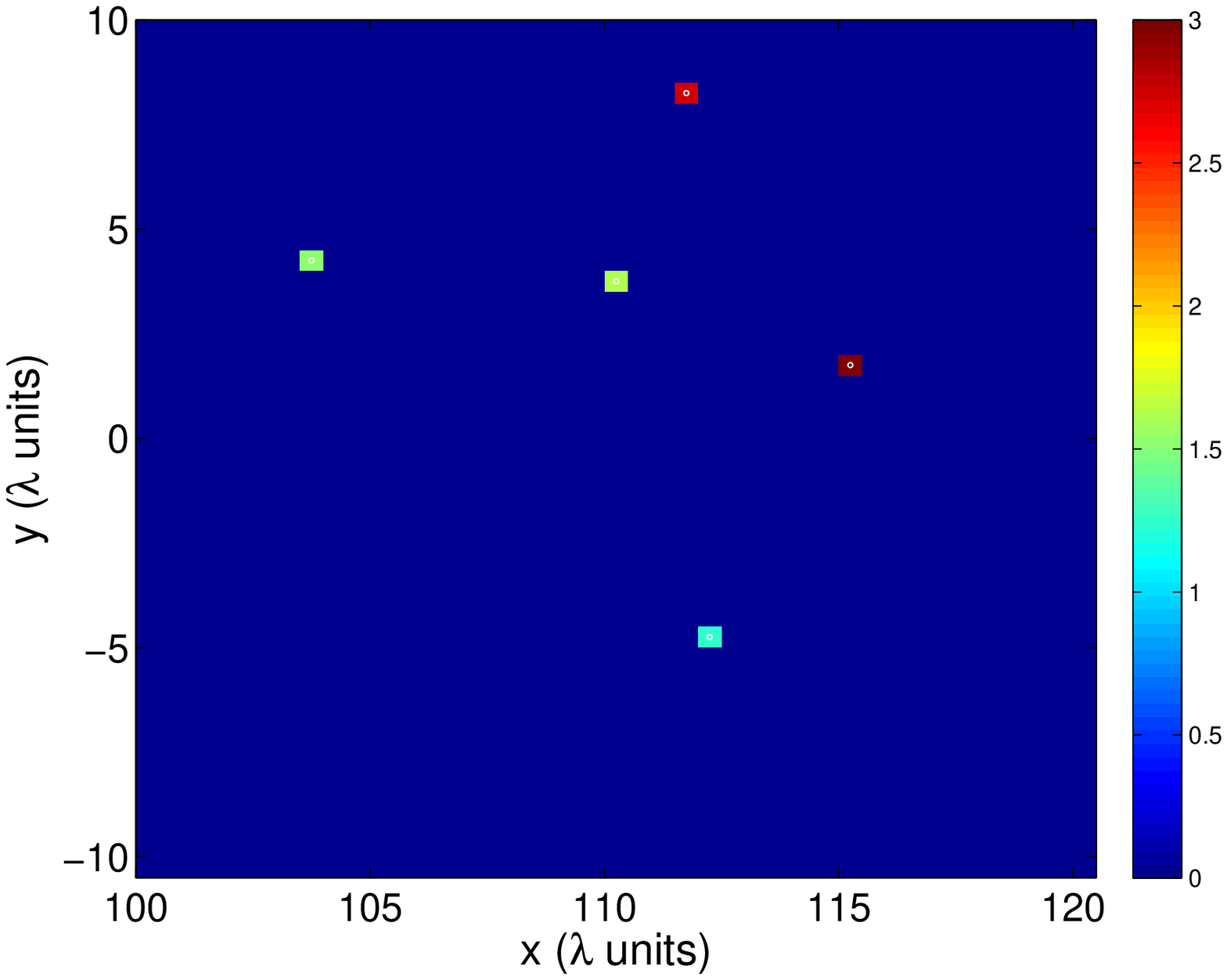} & 
 \includegraphics[scale=0.25]{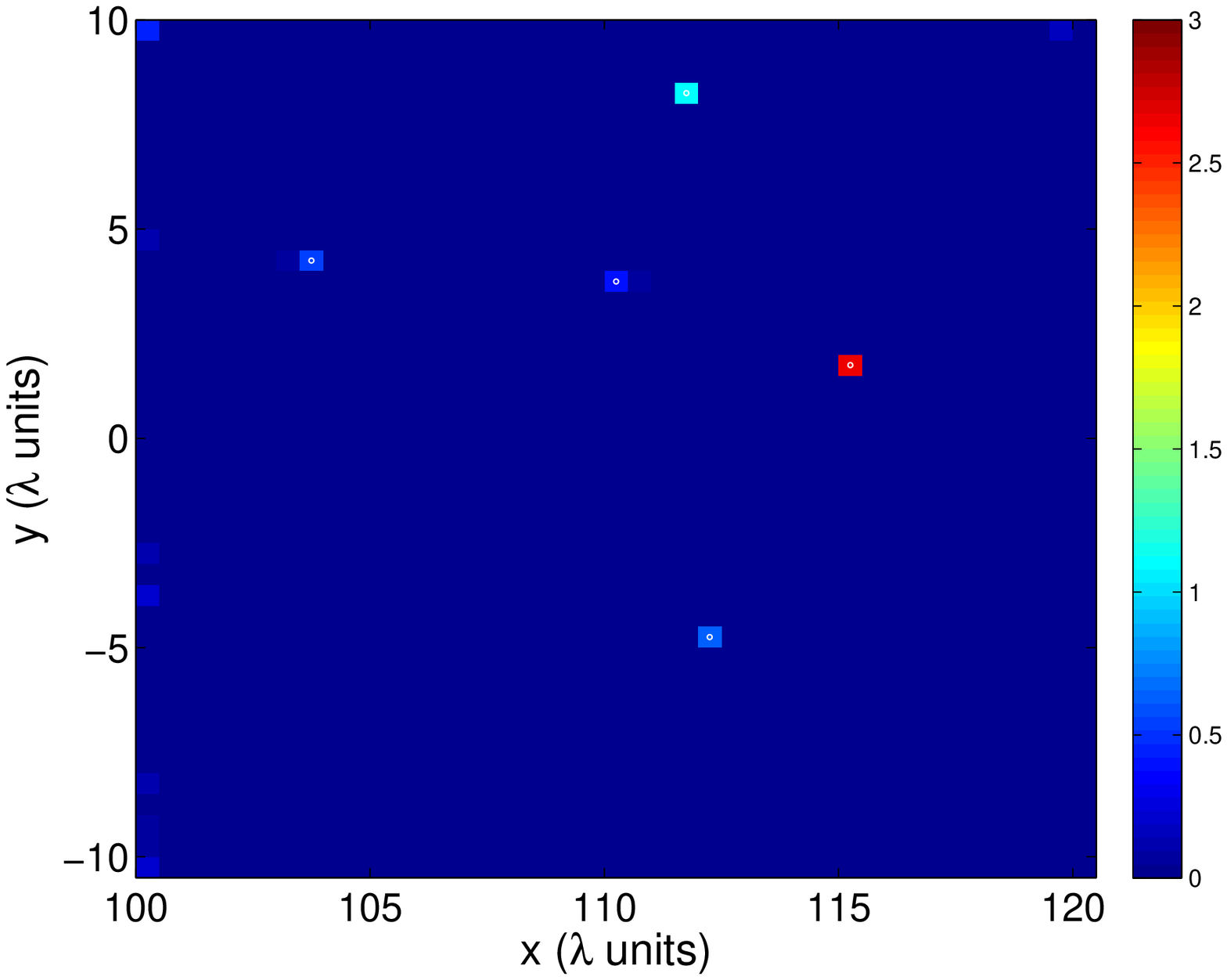}
\end{tabular}
\caption{Images reconstructed by solving \eqref{MMV21noise} when optimal illuminations are used.
There is $50\%$ noise in the data.
From left to right and top to bottom, images are reconstructed by using $1$, $2$, $3$, $4$, $5$, and $12$ top singular vectors.
}
\label{MMVoptimal}
\end{figure}

\begin{figure}
\centering
\begin{tabular}{ccc}
\includegraphics[scale=0.25]{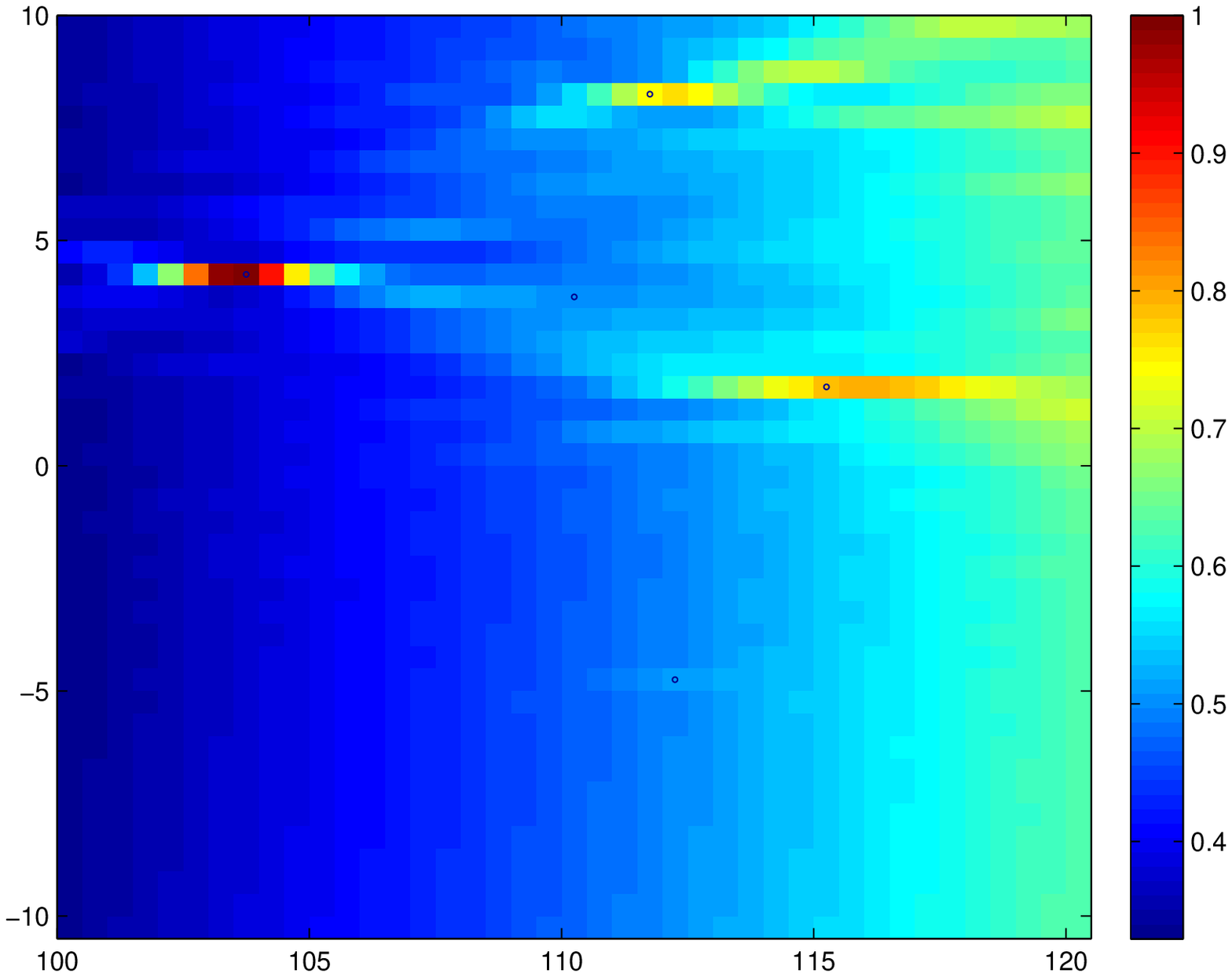} & 
\includegraphics[scale=0.25]{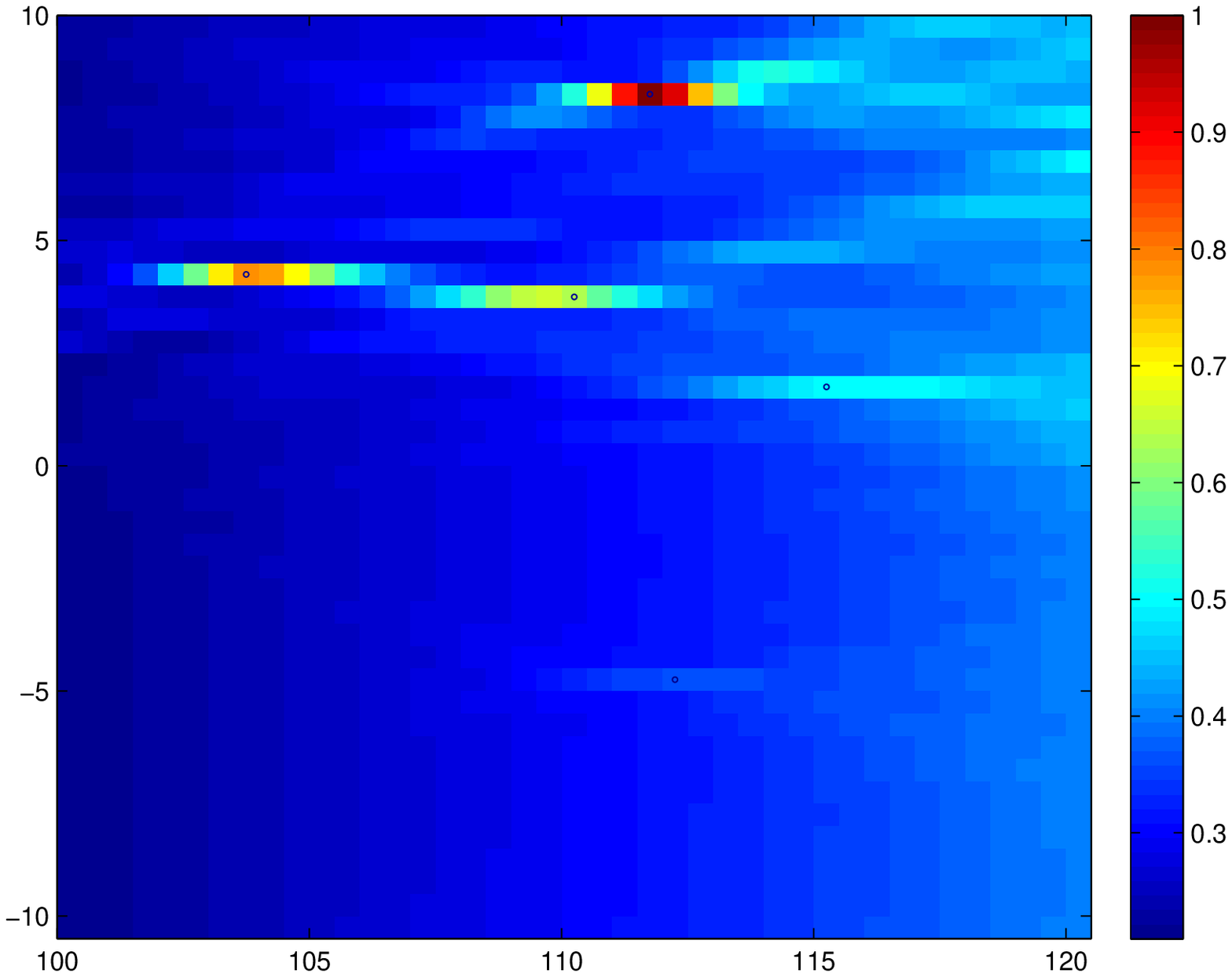} & 
\includegraphics[scale=0.25]{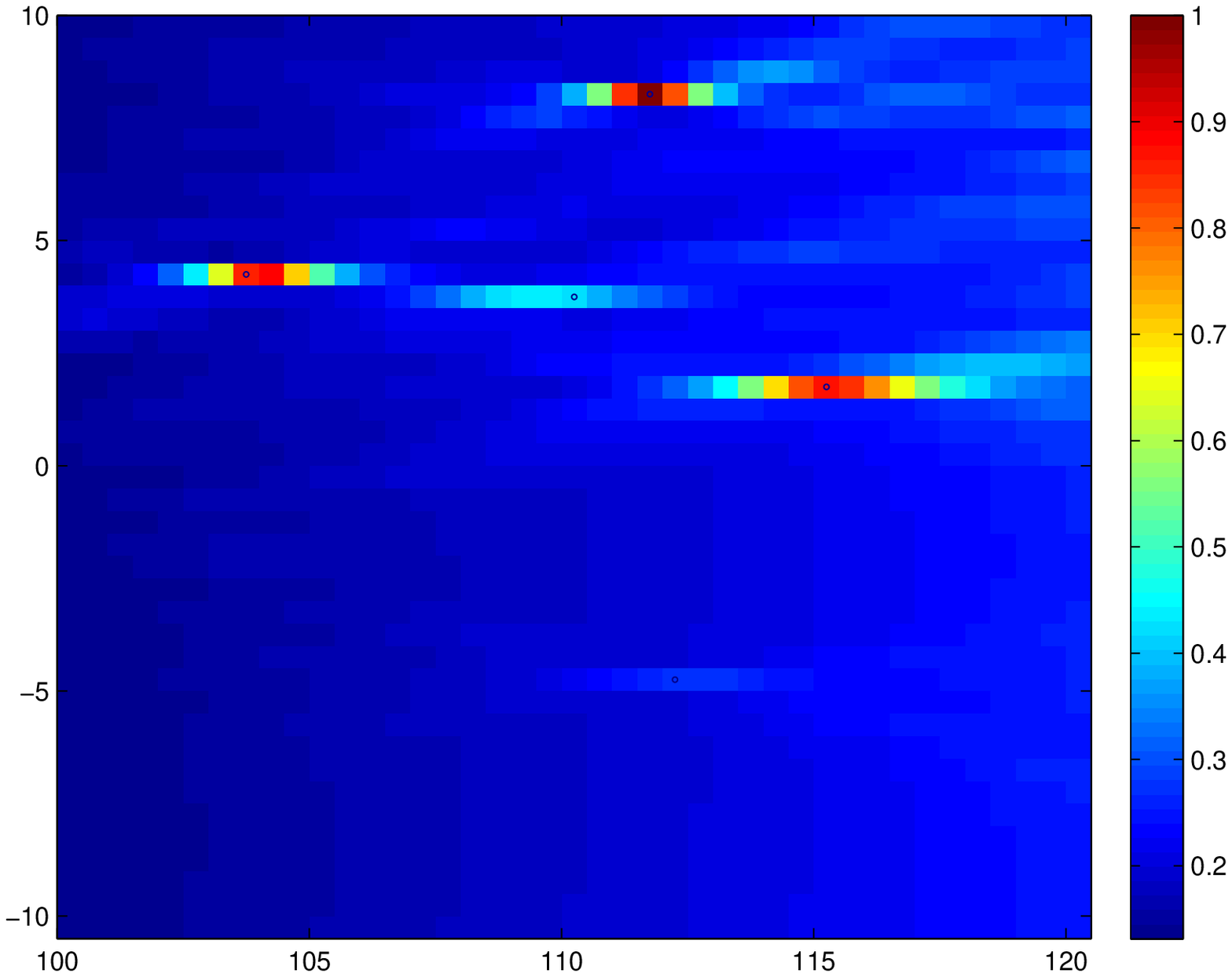}\\
\includegraphics[scale=0.25]{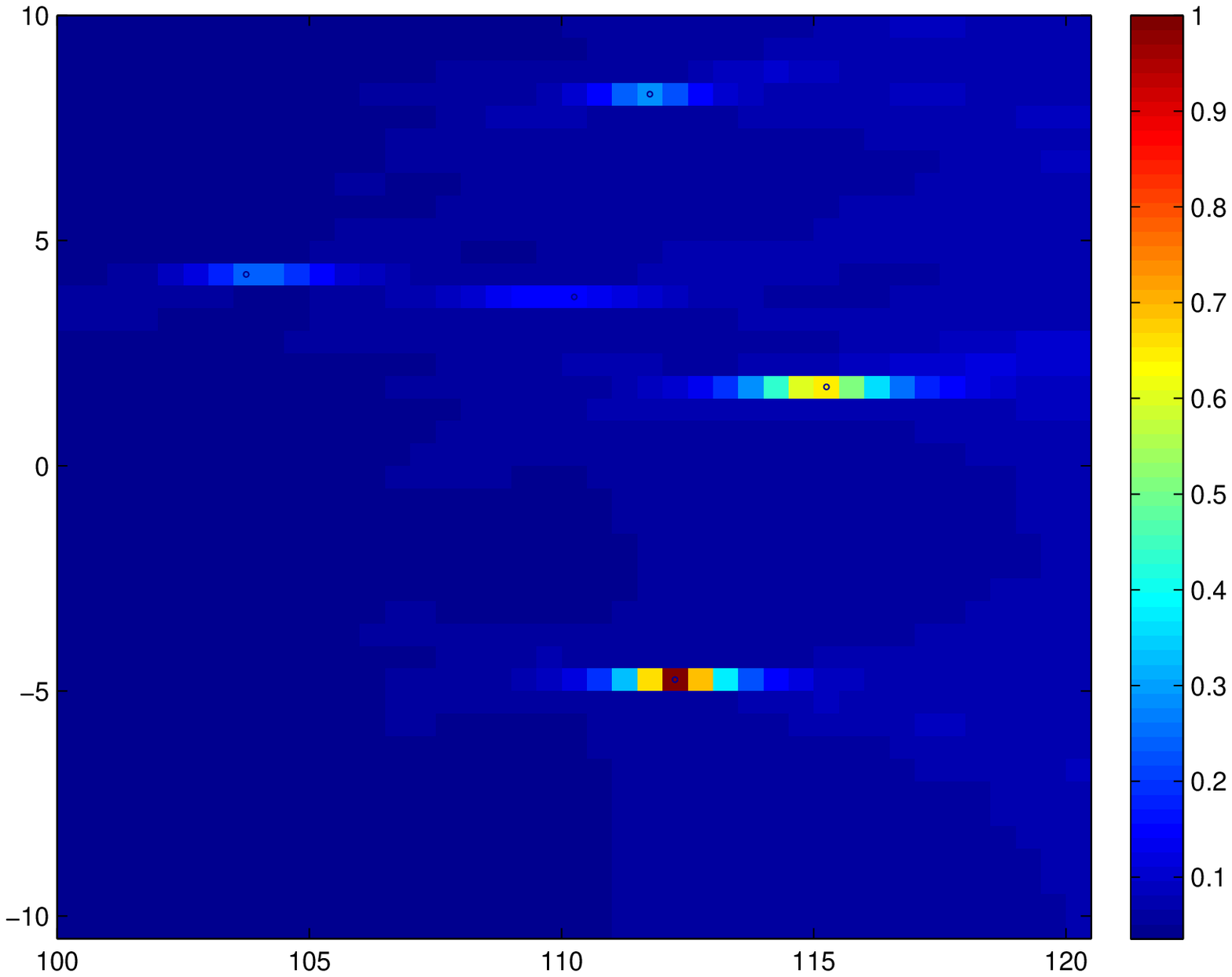} &
 \includegraphics[scale=0.25]{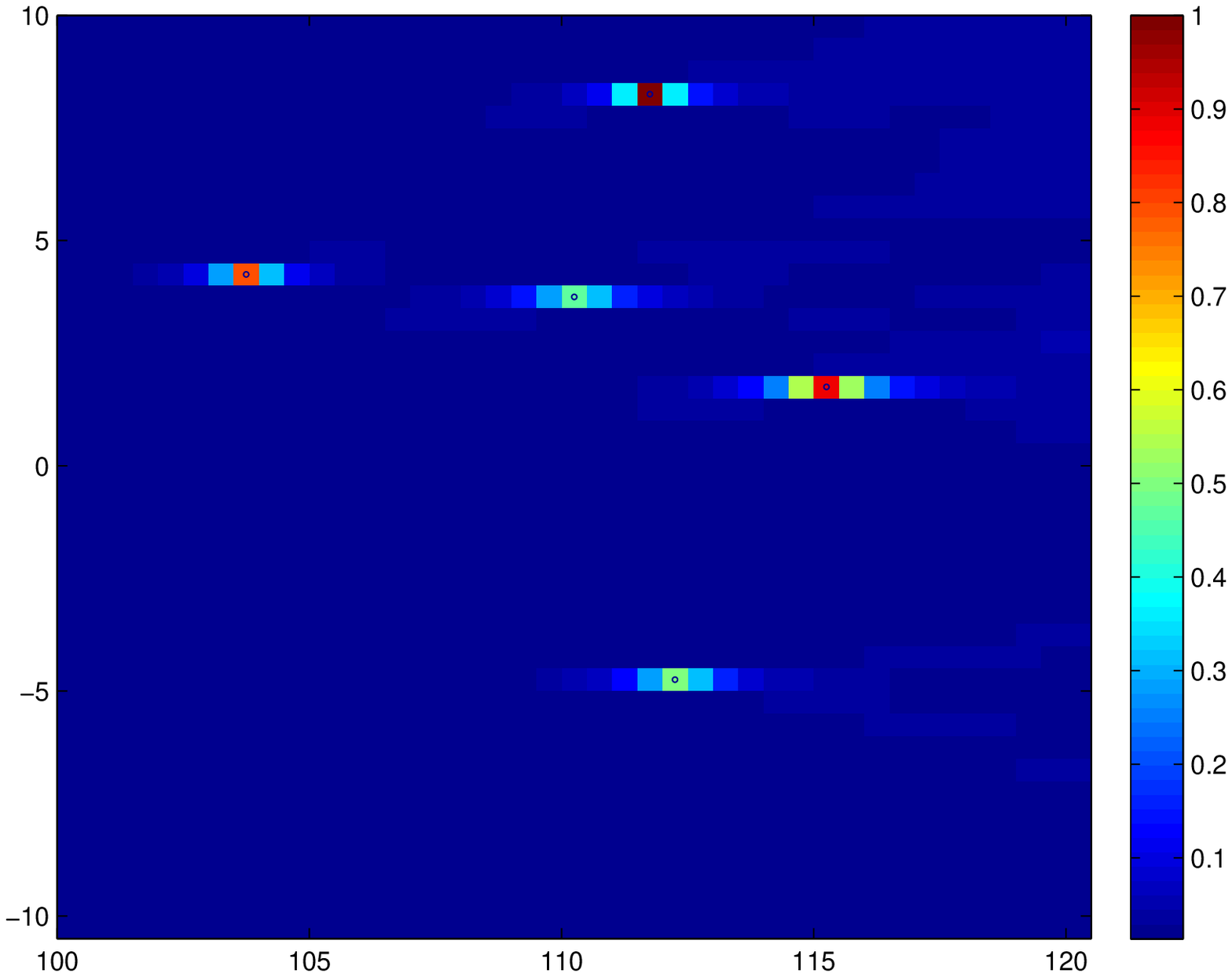} & 
 \includegraphics[scale=0.25]{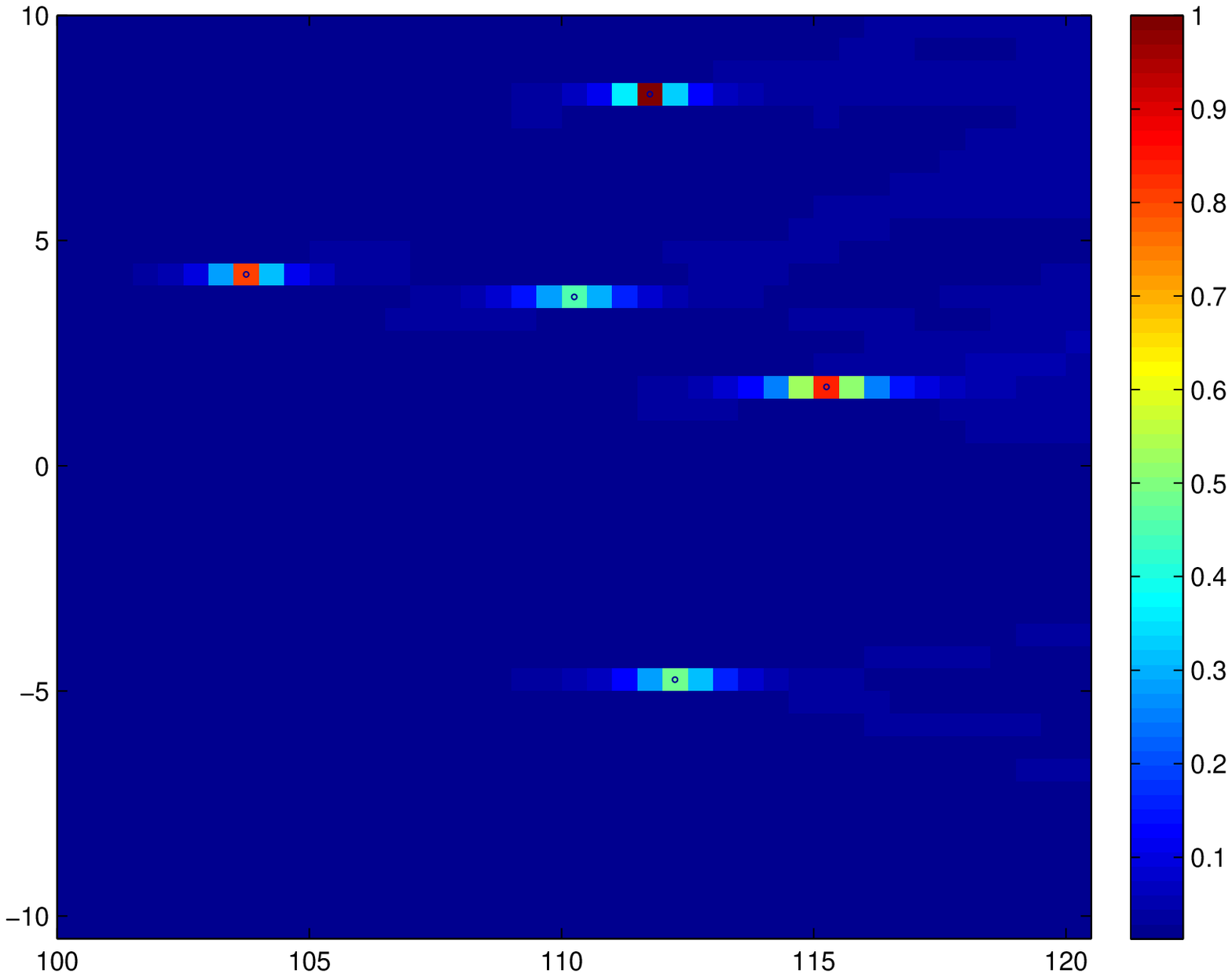}
\end{tabular}
\caption{Images reconstructed by MUSIC. There is $50\%$ noise in the data.
From left to right and top to bottom, images are reconstructed by using $1$, $2$, $3$, $4$, $5$, and $12$ top singular vectors. 
}
\label{MUSIC}
\end{figure}

We compare the images obtained with the MMV formulation and optimal illuminations, with those obtained
with MUSIC. These two methods assume knowledge of the significant singular vectors of the response matrix
$\vect\wP$ to form the images. Thus, the comparison between these two method is carried out with about the
same amount of data. In Figure~\ref{MUSIC} we show the images obtained with MUSIC when an increasing number
of significant singular vectors of the response matrix $\vect\wP$ are used. As expected, we observe that
MUSIC needs to use a number of singular vectors equal or greater to the number of scatterers. This is so,
because MUSIC is a subspace based imaging technique that needs all the significant singular vectors to
span the signal spaced. In other words, the complementary space representing the noise has to be correctly
constructed such that no true signals should fall into. We also observe that the images formed with MUSIC 
do not change much when the number of singular vectors used exceeds the number of scatterers. 
These differences between MMV and MUSIC when both use the essential data of the response matrix $\vect\wP$
is also observed in \cite{MCW05} in the context of DOA for the localization of sources with sensor arrays. 

\section{Conclusion}
We give a novel approach to imaging localized scatterers with non-negligible multiple
scattering between them. Our approach is non-iterative and solves the problem in  two steps using sparsity promoting optimization.
The uniqueness and stability of the formulations using both single and multiple illuminations are analyzed.
We also propose to apply optimal illuminations to improve the robustness of the imaging methods and the
resolution of the images. We show that the conditions under which the proposed methods work well
are related to the configuration of the imaging problems: spherical arrays are in general more favorable
than planar arrays. We illustrated the theoretical results with various numerical examples.

\section*{Appendix}
\begin{appendices}

\section{Proof of Theorem~\ref{thm.mmv}}\label{proof0}
We prove Theorem~\ref{thm.mmv} by proving a more general result given below.
\begin{thm}\label{thm.mmvp}
For a given array configuration, assume that the resolution of the IW satisfies \eqref{mutualcoherence}.
If the number of scatterers $M$ is such that $M\epsilon<1/2$, then
$\vect\Rho_0=[\bfgamma_0^1\,\ldots\,\bfgamma_0^\nu]$
is the unique solution to
\[\min J_{p,1}(\vect\Rho)\quad\text{s.t.}\quad\vect\Gc\vect\Rho=\mathbf{B}\]
for any $1<p<\infty$.
\end{thm}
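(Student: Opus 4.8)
The plan is to prove that $\vect\Rho_0$ is the unique minimizer by exhibiting a \emph{dual certificate}: a matrix $\vect W\in\partial J_{p,1}(\vect\Rho_0)$ whose rows off the row-support have $\ell_{p'}$-norm strictly below $1$ (with $1/p+1/p'=1$), and which is annihilated by every feasible perturbation. The whole argument rests on the Exact Recovery Condition of Tropp \cite{TROPP04,TROPP06-2}, namely $\max_{j\notin\Lambda_0}\|\vect\Gc_M^+\vect\wg_0(\vect y_j)\|_{\ell_1}<1$, where $\Lambda_0=\{n_1,\dots,n_M\}=\operatorname{rowsupp}(\vect\Rho_0)$ and $\vect\Gc_M^+=(\vect\Gc_M^\ast\vect\Gc_M)^{-1}\vect\Gc_M^\ast$. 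The point I want to make transparent is that this condition involves only the sensing matrix $\vect\Gc$, so it is independent of both the number of illuminations $\nu$ and the exponent $p$; this is precisely why one coherence hypothesis governs the entire family $1<p<\infty$.

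First I would normalize the columns of $\vect\Gc$, which only rescales the entries of any candidate solution and leaves the row-support unchanged, so that the Gram matrix $\vect\Gc_M^\ast\vect\Gc_M$ has unit diagonal and off-diagonal entries bounded by $\mu(\vect\Gc)\le\epsilon$. Writing $\vect\Gc_M^\ast\vect\Gc_M=\mathbf{I}+\mathbf{E}$ with $\|\mathbf{E}\|_{1\to1}\le(M-1)\epsilon<1$, a Gershgorin/Neumann-series argument yields both that $\vect\Gc_M$ has full column rank (restricted injectivity) and that $\|(\vect\Gc_M^\ast\vect\Gc_M)^{-1}\|_{1\to1}\le 1/(1-(M-1)\epsilon)$. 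Since for $j\notin\Lambda_0$ every entry of $\vect\Gc_M^\ast\vect\wg_0(\vect y_j)$ is an off-diagonal inner product bounded by $\epsilon$, I obtain $\|\vect\Gc_M^+\vect\wg_0(\vect y_j)\|_{\ell_1}\le M\epsilon/(1-(M-1)\epsilon)$, which is $<1$ exactly when $(2M-1)\epsilon<1$; the hypothesis $M\epsilon<1/2$ gives $2M\epsilon<1$ and hence this with room to spare, establishing the ERC.

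Next I would build the certificate. On each support row $i\in\Lambda_0$ let $V_{i\cdot}$ be the gradient of $\|\cdot\|_{\ell_p}$ at $(\vect\Rho_0)_{i\cdot}$; by $\ell_p$--$\ell_{p'}$ duality this row satisfies $\|V_{i\cdot}\|_{\ell_{p'}}=1$. Setting $\vect D=(\vect\Gc_M^+)^\ast\vect V$ and $\vect W=\vect\Gc^\ast\vect D$, one checks $W_{i\cdot}=V_{i\cdot}$ on $\Lambda_0$ because $\vect\Gc_M^+\vect\Gc_M=\mathbf{I}$, while for $j\notin\Lambda_0$ one has $W_{j\cdot}=\sum_{i\in\Lambda_0}\overline{(\vect\Gc_M^+\vect\wg_0(\vect y_j))_i}\,V_{i\cdot}$. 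The triangle inequality then collapses the $\ell_{p'}$-norm to $\|W_{j\cdot}\|_{\ell_{p'}}\le\sum_i|(\vect\Gc_M^+\vect\wg_0(\vect y_j))_i|\,\|V_{i\cdot}\|_{\ell_{p'}}=\|\vect\Gc_M^+\vect\wg_0(\vect y_j)\|_{\ell_1}<1$ by the ERC, and this bound holds verbatim for every $p$. Because $\vect W=\vect\Gc^\ast\vect D$, any feasible perturbation $\vect H$ with $\vect\Gc\vect H=0$ obeys $\operatorname{Re}\ltr\vect W,\vect H\rtr=\operatorname{Re}\ltr\vect D,\vect\Gc\vect H\rtr=0$, so the subgradient inequality gives $J_{p,1}(\vect\Rho_0+\vect H)\ge J_{p,1}(\vect\Rho_0)$, proving optimality. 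For uniqueness I would track the slack: equality forces each off-support estimate $\operatorname{Re}\ltr W_{i\cdot},H_{i\cdot}\rtr\le\|W_{i\cdot}\|_{\ell_{p'}}\|H_{i\cdot}\|_{\ell_p}$ to be tight, which by the strict bound $\|W_{i\cdot}\|_{\ell_{p'}}<1$ is impossible unless $H_{i\cdot}=0$ for all $i\notin\Lambda_0$; then $\vect\Gc\vect H=\vect\Gc_M\vect H_{\Lambda_0}=0$ together with restricted injectivity forces $\vect H=0$.

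I expect the main obstacle to be the careful treatment of the general exponent $p$: making the subdifferential $\partial J_{p,1}$ precise for complex matrices, verifying that the support-row gradients genuinely have unit $\ell_{p'}$-norm, and confirming that the certificate bound degenerates uniformly to the single exponent-free quantity $\|\vect\Gc_M^+\vect\wg_0(\vect y_j)\|_{\ell_1}$. Everything else is a direct transcription of the coherence-to-ERC estimates of \cite{TROPP04} and the simultaneous-sparsity argument of \cite{TROPP06-2}.
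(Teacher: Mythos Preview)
Your proposal is correct and follows essentially the same route as the paper. Both construct the dual certificate $\mathbf{H}=\vect\Gc_\Lambda(\vect\Gc_\Lambda^\ast\vect\Gc_\Lambda)^{-1}\operatorname{sign}(\vect\Rho_{0\Lambda})$ (your $\vect D$, with your $\vect V$ coinciding with the paper's row-wise $\operatorname{sign}$), verify the on-support condition trivially, and bound the off-support rows by $\|\vect\Gc_\Lambda^+\vect\wg_0(\vect y_j)\|_{\ell_1}\cdot\max_i\|V_{i\cdot}\|_{\ell_{p'}}\le M\epsilon/(1-(M-1)\epsilon)<1$ via the coherence-to-ERC estimate; the only cosmetic difference is that the paper packages optimality and uniqueness through a trace--H\"older lemma and a contradiction argument, whereas you invoke the subgradient inequality and track the slack directly.
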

Clearly, Theorem~\ref{thm.mmv} is a special case of Theorem~\ref{thm.mmvp}. The proof of Theorem~\ref{thm.mmvp} is an
application of the following result which is a generalization of Theorem~$3.1$ in \cite{ER10}.
\begin{prop}\label{prop:lpcondition}
Let $\Lambda$ be the set of row support of $\vect\Rho_0$, i.e.
$\Lambda=\operatorname{rowsupp}(\vect\Rho_0)$. For any $1<p<\infty$ and a matrix $\mathbf{Q}$, define
\[
\operatorname{sign}(Q_{ij})=\begin{cases}
\frac{|Q_{ij}|^{p-1}\operatorname{sign}(Q_{ij})}{\||Q_{i\cdot}|^{p-1}\|_{\ell_q}},&\|Q_{i,\cdot}\|_{\ell_q}\neq0\\
0,&\|Q_{i,\cdot}\|_{\ell_q}=0,
\end{cases}
\]
where $\operatorname{sign}(x)=0,\pm1$ when $x\in\mR$ and $\operatorname{sign}(x)=\exp(\mathrm{i}\operatorname{angle}(x))$ when $x\in\mC$.
Also assume $\vect\Gc_{\Lambda}$, submatrix of $\vect\Gc$ consisting of columns with indices in $\Lambda$,
is non-singular. Then a sufficient condition under which $\vect\Rho_0$ is the
unique solution to \[\min_{\vect\Rho}J_{p,1}(\vect\Rho)\quad\text{s.t.}\quad\vect\Gc\vect\Rho=\mathbf{B}\]
is that there exists a matrix $\mathbf{H}\in\mC^{N\times\nu}$ satisfying
\begin{equation}\label{lpcondition1}
\vect\Gc_{\Lambda}^\ast\mathbf{H}=\operatorname{sign}(\vect\Rho_{0\Lambda})
\end{equation}
and 
\begin{equation}\label{lpcondition2}
\|\mathbf{H}^\ast\vect\wg_0(\vect y_j)\|_{\ell_q}<1,
\end{equation}
where $1/p+1/q=1$ and $\vect\Rho_{0\Lambda}$ is the submatrix consisting of the rows of $\vect\Rho_0$ in $\Lambda$.
\end{prop}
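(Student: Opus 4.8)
\section{Proof of Proposition~\ref{prop:lpcondition}}

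The plan is to run the standard dual-certificate argument for $\ell_1$-type recovery, now adapted to the joint-sparsity functional $J_{p,1}$. The matrix $\mathbf{H}$ plays the role of a dual certificate, and the object doing the work is $\vect\Gc^\ast\mathbf{H}$: conditions \eqref{lpcondition1}--\eqref{lpcondition2} are exactly what make $\vect\Gc^\ast\mathbf{H}$ a subgradient of $J_{p,1}$ at $\vect\Rho_0$. Throughout I would pair matrices by $\ltr\mathbf{S},\mathbf{D}\rtr=\operatorname{trace}(\mathbf{S}^\ast\mathbf{D})$, for which the adjoint identity $\ltr\vect\Gc^\ast\mathbf{H},\mathbf{D}\rtr=\ltr\mathbf{H},\vect\Gc\mathbf{D}\rtr$ holds, and I would take real parts $\operatorname{Re}\ltr\cdot,\cdot\rtr$ in the subgradient inequalities.

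First I would record the subdifferential of $J_{p,1}$. Since $J_{p,1}(\vect\Rho)=\sum_i\|X_{i\cdot}\|_{\ell_p}$ is separable over rows, a subgradient is assembled row by row. For a nonzero row the $\ell_p$ norm is differentiable (this is where $1<p<\infty$ is used) and its gradient is the unique vector $u$ with $\|u\|_{\ell_q}=1$ attaining $\operatorname{Re}\ltr u,x\rtr=\|x\|_{\ell_p}$; a one-line check using $(p-1)q=p$ identifies this $u$ with the corresponding row of the $\operatorname{sign}$ matrix in the statement. Thus on the support rows the subgradient is forced to equal $\operatorname{sign}(\vect\Rho_{0\Lambda})$, while for a zero row any vector of $\ell_q$-norm at most $1$ is admissible. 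Condition \eqref{lpcondition1} says that $\vect\Gc^\ast\mathbf{H}$ carries the forced value on the rows of $\Lambda$; condition \eqref{lpcondition2} says that its rows off $\Lambda$, namely $\vect\wg_0(\vect y_j)^\ast\mathbf{H}$ of $\ell_q$-norm $\|\mathbf{H}^\ast\vect\wg_0(\vect y_j)\|_{\ell_q}<1$, are strictly interior admissible subgradients. Hence $\mathbf{S}:=\vect\Gc^\ast\mathbf{H}$ is a subgradient of $J_{p,1}$ at $\vect\Rho_0$.

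Next I would take an arbitrary competitor $\vect\Rho_0+\mathbf{D}$ with $\vect\Gc\mathbf{D}=0$ and apply the subgradient inequality in each row, then sum, to get $J_{p,1}(\vect\Rho_0+\mathbf{D})\ge J_{p,1}(\vect\Rho_0)+\operatorname{Re}\ltr\mathbf{S},\mathbf{D}\rtr$. The linear term vanishes because $\ltr\vect\Gc^\ast\mathbf{H},\mathbf{D}\rtr=\ltr\mathbf{H},\vect\Gc\mathbf{D}\rtr=0$, so $\vect\Rho_0$ is a minimizer. For uniqueness I would examine the equality case: equality forces each row inequality to be tight, and for a row $i\notin\Lambda$ the bound $\operatorname{Re}\ltr\mathbf{s}_i,D_{i\cdot}\rtr\le|\ltr\mathbf{s}_i,D_{i\cdot}\rtr|\le\|\mathbf{s}_i\|_{\ell_q}\|D_{i\cdot}\|_{\ell_p}<\|D_{i\cdot}\|_{\ell_p}$, where $\mathbf{s}_i$ is the row $i$ of $\mathbf{S}$ with $\ell_q$-norm strictly below $1$ by \eqref{lpcondition2}, is strict whenever $D_{i\cdot}\neq0$. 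Therefore $\mathbf{D}$ must be supported on the rows of $\Lambda$, the constraint reduces to $\vect\Gc_\Lambda\mathbf{D}_\Lambda=0$, and non-singularity of $\vect\Gc_\Lambda$ forces $\mathbf{D}_\Lambda=0$, hence $\mathbf{D}=0$.

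The main obstacle I anticipate is the bookkeeping in the complex setting: confirming that the $\operatorname{sign}$ matrix is genuinely the gradient of $\|\cdot\|_{\ell_p}$ viewed as a real-differentiable function of complex entries, and that both the subgradient inequality and the H\"older equality-case survive with the real-part pairing and the phase convention $\operatorname{sign}(x)=\exp(\mathrm{i}\operatorname{angle}(x))$. The extremal structure of $\ell_p$--$\ell_q$ duality, in particular the strictness off the support supplied by \eqref{lpcondition2}, is precisely what upgrades ``minimizer'' to ``unique minimizer'', so that is where care is needed. The endpoints are excluded for good reason: at $p=1$ and $p=\infty$ the row norm loses smoothness and the dual normalization degenerates, and the case $p=\infty$ is treated separately in \cite{TROPP06-2}.
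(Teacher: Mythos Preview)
Your argument is correct and is essentially the same dual-certificate argument the paper gives: the paper packages the row-wise H\"older step as a separate trace inequality (Lemma~\ref{lem:traceholder}) and then argues by contradiction, whereas you phrase the same content as ``$\vect\Gc^\ast\mathbf{H}$ is a subgradient of $J_{p,1}$ at $\vect\Rho_0$'' and read off minimality and uniqueness directly from the subgradient inequality and its equality case. The mathematical content---the identity $J_{p,1}(\vect\Rho_0)=\operatorname{Re}\ltr\operatorname{sign}(\vect\Rho_{0\Lambda}),\vect\Rho_{0\Lambda}\rtr$ on $\Lambda$, the passage to a competitor via $\vect\Gc\mathbf{D}=0$, and the strict H\"older bound off $\Lambda$ supplied by \eqref{lpcondition2}---is identical in both.
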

To prove Proposition~\ref{prop:lpcondition}, we need the following lemma.
\begin{lem}\label{lem:traceholder}
For any two matrices $A\in\mC^{m\times l}$ and $B\in\mC^{l\times n}$, we have
\[\big|\operatorname{trace}(AB)\big|\le\max_{k=1,\ldots,l}\|B_{\cdot k}\|_{\ell_q}\,J_{p,1}(A).\]
The strict inequality holds when there exists $k$ such that $\|B_{\cdot k}\|_{\ell_q}<\max_{k=1,\ldots,l}\|B_{\cdot k}\|_{\ell_q}$
and $\|A_{k \cdot}\|_{\ell_p}\neq0$.
\end{lem}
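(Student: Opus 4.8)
The plan is to expand the trace as a double sum and then apply H\"older's inequality twice: once across the columns of $B$ (or rows of $A$) using the conjugate exponents $p$ and $q$, and once across the index that ties the two factors together. First I would write
\[
\operatorname{trace}(AB)=\sum_{k=1}^{l}(AB)_{kk}=\sum_{k=1}^{l}\sum_{i=1}^{?}A_{ki}B_{ik},
\]
being careful about which index runs over what; the cleanest route is to recognize $(AB)_{kk}=\sum_{j}A_{kj}B_{jk}=A_{k\cdot}B_{\cdot k}$, so that
\[
\operatorname{trace}(AB)=\sum_{k=1}^{l}A_{k\cdot}B_{\cdot k}.
\]
For each fixed $k$, the inner product $A_{k\cdot}B_{\cdot k}$ is bounded in absolute value by $\|A_{k\cdot}\|_{\ell_p}\|B_{\cdot k}\|_{\ell_q}$ by H\"older's inequality with $1/p+1/q=1$. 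Summing over $k$ and taking absolute values gives
\[
\big|\operatorname{trace}(AB)\big|\le\sum_{k=1}^{l}\|A_{k\cdot}\|_{\ell_p}\|B_{\cdot k}\|_{\ell_q}.
\]

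The second step is to factor out the largest column norm of $B$. Bounding each $\|B_{\cdot k}\|_{\ell_q}$ by $\max_{k}\|B_{\cdot k}\|_{\ell_q}$ and recognizing that $\sum_{k}\|A_{k\cdot}\|_{\ell_p}=J_{p,1}(A)$ by the definition in \eqref{eq:Jpq}, I obtain exactly
\[
\big|\operatorname{trace}(AB)\big|\le\max_{k=1,\ldots,l}\|B_{\cdot k}\|_{\ell_q}\,J_{p,1}(A),
\]
which is the claimed inequality.

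For the strictness claim, I would track where equality can fail in the chain above. Equality in the final step requires $\|B_{\cdot k}\|_{\ell_q}=\max_{k}\|B_{\cdot k}\|_{\ell_q}$ for every $k$ with $\|A_{k\cdot}\|_{\ell_p}\neq0$ (otherwise that term is strictly dominated). Hence if there is an index $k$ with $\|B_{\cdot k}\|_{\ell_q}<\max_{k}\|B_{\cdot k}\|_{\ell_q}$ and simultaneously $\|A_{k\cdot}\|_{\ell_p}\neq0$, the term $\|A_{k\cdot}\|_{\ell_p}\|B_{\cdot k}\|_{\ell_q}$ is replaced by something strictly smaller than $\|A_{k\cdot}\|_{\ell_p}\max_{k}\|B_{\cdot k}\|_{\ell_q}$ when I pass to the majorant, forcing a strict inequality. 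This hypothesis is precisely the one stated in the lemma, so the conclusion follows. The only mildly delicate point is bookkeeping: making sure that the row/column index conventions of $J_{p,1}$ (which sums $\ell_p$ norms over rows of its argument) line up with the rows of $A$, and that the exponents $p,q$ are attached to the correct factor; this is the step I would double-check, though it is routine rather than a genuine obstacle.
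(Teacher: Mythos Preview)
Your proposal is correct and follows essentially the same route as the paper: expand $\operatorname{trace}(AB)=\sum_k A_{k\cdot}B_{\cdot k}$, apply H\"older's inequality termwise, then bound each $\|B_{\cdot k}\|_{\ell_q}$ by the maximum and recognize $\sum_k\|A_{k\cdot}\|_{\ell_p}=J_{p,1}(A)$. Your treatment of the strictness case is actually more explicit than the paper's, which simply remarks that strict inequality ``apparently holds'' under the stated condition.
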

\begin{proof}
By definition of trace, we have
\begin{equation*}
|\operatorname{trace}(AB)|\le\sum_{k=1}^l|A_{k \cdot}B_{\cdot k}|\le\sum_{k=1}^l\|A_{k \cdot}\|_{\ell_p}\|B_{\cdot k}\|_{\ell_q}\le\max_{k=1,\ldots,l}\|B_{\cdot k}\|_{\ell_q}\,J_{p,1}(A)
\end{equation*}
where we use the H\"{o}lder's inequality in the second to last inequality. The strict inequality apparently holds when the condition is satisfied.
\end{proof}
\begin{proof}[Proof of Proposition~\ref{prop:lpcondition}]
We will show the uniqueness by contradiction. Assume there exists another solution $\widehat{\vect\Rho}$ the
support of which is $\widehat{\Lambda}$ such that $\widehat{\Lambda}\backslash\Lambda\neq\emptyset$.
First of all, notice that \eqref{lpcondition1}, \eqref{lpcondition2} imply that for any column of $\vect\Gc$,
$\|\mathbf{H}^\ast\vect\wg_0(\vect y_j)\|_{\ell_q}\le1$. We have
\[J_{p,1}(\vect\Rho_0)=J_{p,1}(\vect\Rho_{0\Lambda})
=\operatorname{trace}(\operatorname{sign}(\vect\Rho_{0\Lambda})\vect\Rho_{0\Lambda}^\ast)
=\operatorname{trace}(\vect\Gc_{\Lambda}^\ast\mathbf{H}\vect\Rho_{0\Lambda}^\ast).
\]
Since the $\operatorname{trace}$ function is invariant with respect to matrix rotation and transpose operation,
\[J_{p,1}(\vect\Rho_0)
=\operatorname{trace}(\vect\Rho_{0\Lambda}\mathbf{H}^\ast\vect\Gc_{\Lambda})
=\operatorname{trace}(\mathbf{H}^\ast\vect\Gc_{\Lambda}\vect\Rho_{0\Lambda})
=\operatorname{trace}(\mathbf{H}^\ast\vect\Gc\widehat{\vect\Rho}).\]
Applying Lemma~\ref{lem:traceholder}, we have
\[J_{p,1}(\vect\Rho_0)\le\max_{j\in\widehat{\Lambda}}\|\mathbf{H}^\ast\vect\wg_0(\vect y_j)\|_{\ell_q}\,J_{p,1}(\widehat{\vect\Rho})\le J_{p,1}(\widehat{\vect\Rho}).\]
Because $\widehat{\Lambda}\backslash\Lambda\neq\emptyset$, there must exist $j_0\in\widehat{\Lambda}$ such that
$\|\mathbf{H}^\ast\vect\wg_0(\vect y_{j_0})\|_{\ell_q}\neq\max_{j\in\widehat{\Lambda}}\|\mathbf{H}^\ast\vect\wg_0(\vect y_j)\|_{\ell_q}$. On the other hand, since the support of $\widehat{\vect\Rho}$ is $\widehat{\Lambda}$, we have
$\|\widehat\Rho_{j\cdot}\|_{\ell_q}\neq0$ for any $j\in\widehat{\Lambda}$. According to Lemma~\ref{lem:traceholder}, $J_{p,1}(\vect\Rho_0)<J_{p,1}(\widehat{\vect\Rho})$ which contradicts that $\widehat{\vect\Rho}$ is also a solution. Therefore, the solution must be unique.
\end{proof}

Now we will show that the multiplier $\mathbf{H}$ satisfying \eqref{lpcondition1} and \eqref{lpcondition2}
exists under the condition of Theorem~\ref{thm.mmvp}.
\begin{proof}[Proof of Theorem~\ref{thm.mmvp}]
Let $\Lambda=\{n_j, 1\le j\le M\}$ be the set of indices corresponding to the scatterers. Based on the
resolution condition, we have that the inner product of the column vectors of the matrix $\vect\Gc$ satisfies
$\vect\wg_0^\ast(\vect y_i)\vect\wg_0(\vect y_j)=\delta_{ij}+(1-\delta_{ij})\epsilon_{ij}$
with $|\epsilon_{ij}|<\epsilon$, for any $1\le i,j\le M$. Therefore, the submatrix $\vect\Gc_\Lambda$,
composed of the columns $n_1,\ldots,n_M$ of matrix $\vect\Gc$, is full column rank and
satisfies that $\vect\Gc_\Lambda^\ast\vect\Gc_\Lambda$ is full rank and diagonally dominant. 

According to Proposition~\ref{prop:lpcondition}, we need to find a matrix $\mathbf{H}$ satisfying
\eqref{lpcondition1} and \eqref{lpcondition2}.
Let $\mathbf{H}=\vect\Gc_\Lambda(\vect\Gc_\Lambda^\ast\vect\Gc_\Lambda)^{-1}\operatorname{sign}(\vect\Rho_{0\Lambda})$. Then, the first condition is automatically satisfied because $\vect\Gc_\Lambda^\ast\mathbf{H}= \operatorname{sign}(\vect\Rho_{0\Lambda})$. For the second condition, choosing any column $j$ of $\vect\Gc$ not in the submatrix $\vect\Gc_\Lambda$, we have
\begin{eqnarray*}
\|\vect\wg_0^\ast(\vect y_j)\vect\Gc_\Lambda(\vect\Gc_\Lambda^\ast\vect\Gc_\Lambda)^{-1}\operatorname{sign}(\vect\Rho_{0\Lambda})\|_{\ell_q}&\le&\|\operatorname{sign}(\vect\Rho_{0\Lambda})\|_{p\rightarrow\infty}\|(\vect\Gc_\Lambda^\ast\vect\Gc_\Lambda)^{-1}\|_{\ell_1}\|\vect\Gc_\Lambda^\ast\vect\wg_0(\vect y_j)\|_{\ell_1}\\
&\le&\frac{M\epsilon}{1-M\epsilon}<1,
\end{eqnarray*}
where by definition, we have $\|\operatorname{sign}(\vect\Rho_{0\Lambda})\|_{p\rightarrow\infty}\le1$ and the last inequality
is due to the resolution condition $M\epsilon<1/2$.
\end{proof}

\section{Proof of Theorem~\ref{thm.mmvnoise}}\label{proof}
In \cite{TROPP06-2}, the author gives conditions for the MMV problem under which the convex relaxation formulation with functional $J_{\infty,1}$ is robust with respect to the additive noise. In the following, we derive similar conditions for convex relaxation using $J_{2,1}$ instead of $J_{\infty,1}$.
This is done using the techniques developed in \cite{TROPP04}. We first introduce some supporting results.
\begin{definition}\label{def:subdiff}
Let $f$ be a function from the complex matrix space ${\cal M}$ to $\mC$.
The subdifferential of a function $f$ at $\vect\Rho\in{\cal M}$ is defined as
$$
\partial f(\mathbf{X}) = \{
\mathbf{G} \in {\cal M}: f(\mathbf{Y}) \ge f(\mathbf{X}) + \mbox{Re}\ltr\mathbf{Y}-\mathbf{X},\mathbf{G}\rtr, \,\,\forall \,\,\mathbf{Y} \in {\cal M}
\}\, .
$$
\end{definition}

\begin{lem}\label{lemma:subgradJ21}
A matrix $\mathbf{G}$ lies in the subdifferential of $J_{2,1}(\mathbf{X})$ at $\mathbf{X}\in\mC^{K\times\nu}$ if and only if its rows $G_{i\cdot}\in\mC^\nu$ satisfy
\begin{itemize}
\item $G_{i\cdot}=\frac{X_{i\cdot}}{\|X_{i\cdot}\|_{\ell_2}}$ when $X_{i\cdot}\neq0$, and
\item $\|G_{i\cdot}\|_{\ell_2}\le1$ when $X_{i\cdot}=0$.
\end{itemize}
Moreover, $\mathbf{G}$ is called a subgradient of $J_{2,1}(\mathbf{X})$ at $\mathbf{X}$.
\end{lem}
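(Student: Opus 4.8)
The statement to prove is Lemma~\ref{lemma:subgradJ21}, which characterizes the subdifferential of the mixed norm $J_{2,1}(\mathbf{X}) = \sum_i \|X_{i\cdot}\|_{\ell_2}$ at a matrix $\mathbf{X} \in \mathbb{C}^{K\times\nu}$. The key structural observation is that $J_{2,1}$ is \emph{separable across rows}: it is a sum of terms, each depending only on one row. So the subdifferential should decompose row-by-row, and the whole problem reduces to computing the subdifferential of the single-row function $r \mapsto \|r\|_{\ell_2}$ on $\mathbb{C}^\nu$.

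**Plan.** First I would reduce to the one-row case. Using Definition~\ref{def:subdiff}, the inner product $\mathrm{Re}\langle \mathbf{Y}-\mathbf{X},\mathbf{G}\rangle$ splits as $\sum_i \mathrm{Re}\langle Y_{i\cdot}-X_{i\cdot}, G_{i\cdot}\rangle$, and $J_{2,1}(\mathbf{Y})-J_{2,1}(\mathbf{X}) = \sum_i(\|Y_{i\cdot}\|_{\ell_2}-\|X_{i\cdot}\|_{\ell_2})$. Because the $Y_{i\cdot}$ are free and independent, the global subgradient inequality holds for all $\mathbf{Y}$ if and only if it holds row-by-row, i.e. $\mathbf{G}\in\partial J_{2,1}(\mathbf{X})$ iff $G_{i\cdot}\in\partial\|\cdot\|_{\ell_2}(X_{i\cdot})$ for every $i$. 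This reduces everything to identifying the subdifferential of the Euclidean norm $\phi(r)=\|r\|_{\ell_2}$ at a point $r\in\mathbb{C}^\nu$.

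**The two cases.** For $r\neq 0$, $\phi$ is differentiable with gradient $r/\|r\|_{\ell_2}$ (treating $\mathbb{C}^\nu\cong\mathbb{R}^{2\nu}$ and using the real inner product $\mathrm{Re}\langle\cdot,\cdot\rangle$), so the subdifferential is the singleton $\{r/\|r\|_{\ell_2}\}$; this gives the first bullet. For $r=0$, I would show directly that $g\in\partial\phi(0)$ iff $\|g\|_{\ell_2}\le1$. The inequality $\|s\|_{\ell_2}\ge \mathrm{Re}\langle s,g\rangle$ for all $s$ is exactly the statement that $g$ is in the unit ball of the dual norm, which for the $\ell_2$ norm is the $\ell_2$ norm itself (Cauchy--Schwarz gives ``$\subseteq$'', and testing $s=g$ gives ``$\supseteq$''). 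This yields the second bullet. Assembling the per-row characterizations recovers the stated form of $\mathbf{G}$.

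**Main obstacle.** The only genuine subtlety is the careful handling of the complex structure: the subdifferential is defined through the \emph{real part} $\mathrm{Re}\langle\cdot,\cdot\rangle$, so one must verify that the complex gradient $r/\|r\|_{\ell_2}$ is the correct real-gradient representative and that the dual-norm computation for the zero-row case is done with respect to $\mathrm{Re}\langle\cdot,\cdot\rangle$ rather than the Hermitian pairing. Once the identification $\mathbb{C}^\nu\cong\mathbb{R}^{2\nu}$ with the real inner product is fixed, both cases are standard convex-analysis facts; the rest is bookkeeping. I do not anticipate any deep difficulty beyond this conceptual point.
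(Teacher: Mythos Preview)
Your proposal is correct and follows the standard argument: row-separability of $J_{2,1}$ reduces the problem to the subdifferential of the Euclidean norm on $\mathbb{C}^\nu$, which is the singleton $\{r/\|r\|_{\ell_2}\}$ at nonzero $r$ and the closed unit $\ell_2$-ball at $r=0$. Your care with the real inner product $\operatorname{Re}\langle\cdot,\cdot\rangle$ in the complex setting is exactly the point that needs attention, and you handle it properly.

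The paper itself does not give a proof of this lemma; it is stated as a standard convex-analysis fact and used immediately to derive Corollary~\ref{col:subgradientnorm}. So there is no ``paper's own proof'' to compare against --- your argument simply supplies what the paper takes for granted.
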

Then according to the definition of matrix norms, it is easy to see the subgradient of $J_{2,1}$ satisfies the following.
\begin{col}\label{col:subgradientnorm}
Any subgradient $\mathbf{G}$ of $J_{2,1}$ satisfies $\|\mathbf{G}\|_{2\rightarrow\infty}=\|\mathbf{G}^\ast\|_{1,2}\le1$.
\end{col}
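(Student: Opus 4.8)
The plan is to reduce the Corollary to two elementary ingredients: a closed form for the $2\rightarrow\infty$ operator norm of a matrix in terms of its rows, and the standard duality of induced operator norms. Once the first is in place, the bound is an immediate consequence of Lemma~\ref{lemma:subgradJ21}, and the claimed equality is pure bookkeeping of conjugate exponents.

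First I would compute $\|\mathbf{G}\|_{2\rightarrow\infty}$ directly from the definition of the operator norm. Writing the $i^\mathrm{th}$ entry of $\mathbf{G}\vect v$ as $G_{i\cdot}\vect v$, we have
\[
\|\mathbf{G}\|_{2\rightarrow\infty}=\max_{\|\vect v\|_{\ell_2}=1}\|\mathbf{G}\vect v\|_{\ell_\infty}=\max_{\|\vect v\|_{\ell_2}=1}\,\max_i|G_{i\cdot}\vect v|.
\]
I would then interchange the two maxima and observe that, for each fixed row, $\max_{\|\vect v\|_{\ell_2}=1}|G_{i\cdot}\vect v|=\|G_{i\cdot}\|_{\ell_2}$ by the Cauchy--Schwarz inequality, the extremal $\vect v$ being the normalized conjugate transpose $G_{i\cdot}^\ast/\|G_{i\cdot}\|_{\ell_2}$. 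This yields the clean row-wise identity $\|\mathbf{G}\|_{2\rightarrow\infty}=\max_i\|G_{i\cdot}\|_{\ell_2}$.

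Next I would invoke Lemma~\ref{lemma:subgradJ21}: for any subgradient $\mathbf{G}$ of $J_{2,1}$ at $\mathbf{X}$, each row satisfies $\|G_{i\cdot}\|_{\ell_2}=1$ when $X_{i\cdot}\neq0$ and $\|G_{i\cdot}\|_{\ell_2}\le1$ when $X_{i\cdot}=0$. In either case $\|G_{i\cdot}\|_{\ell_2}\le1$, so taking the maximum over $i$ in the identity above gives $\|\mathbf{G}\|_{2\rightarrow\infty}\le1$. Finally, for the equality with $\|\mathbf{G}^\ast\|_{1,2}$, I would read $\|\cdot\|_{1,2}$ as the induced norm $\|\cdot\|_{1\rightarrow2}$ and apply the adjoint identity $\|\mathbf{A}\|_{p\rightarrow q}=\|\mathbf{A}^\ast\|_{q'\rightarrow p'}$ with $1/p+1/p'=1$ and $1/q+1/q'=1$; specializing to $p=2$, $q=\infty$ gives $p'=2$, $q'=1$, hence $\|\mathbf{G}\|_{2\rightarrow\infty}=\|\mathbf{G}^\ast\|_{1\rightarrow2}$. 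I do not expect any genuine obstacle here, since the result is a direct corollary of the preceding lemma; the only steps requiring a little care are the interchange-of-maxima together with the Cauchy--Schwarz argument that produces the row-wise formula, and keeping track of the conjugate exponents in the duality identity.
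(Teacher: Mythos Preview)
Your proposal is correct and is precisely the computation the paper has in mind: the paper itself offers no proof beyond ``according to the definition of matrix norms, it is easy to see,'' and your argument supplies exactly those details---the row-wise identity $\|\mathbf{G}\|_{2\rightarrow\infty}=\max_i\|G_{i\cdot}\|_{\ell_2}$ via Cauchy--Schwarz, the row bound from Lemma~\ref{lemma:subgradJ21}, and the adjoint duality $\|\mathbf{G}\|_{2\rightarrow\infty}=\|\mathbf{G}^\ast\|_{1\rightarrow2}$. Your reading of $\|\cdot\|_{1,2}$ as the induced $1\rightarrow2$ operator norm is the only one consistent with the equality claimed, since the paper reserves $J_{p,q}$ for mixed norms and uses arrows for operator norms elsewhere.
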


We now give a result related to operator norms of matrices which will be used later.
\begin{lem}\label{prop:norm}
For any given matrix $\mathbf{A}$, we have
$\|\mathbf{A}\|_{(2,\infty)\rightarrow F}\le\|\mathbf{A}\|_{\infty\rightarrow2}$.
\end{lem}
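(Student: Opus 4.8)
The plan is to read $\|\mathbf{A}\|_{(2,\infty)\rightarrow F}$ as the operator norm of the left-multiplication map $\mathbf{M}\mapsto\mathbf{A}\mathbf{M}$ acting on matrices, with the domain carrying the mixed norm
\[
\|\mathbf{M}\|_{(2,\infty)}=\Big(\textstyle\sum_{j}\|M_{\cdot j}\|_{\ell_\infty}^{2}\Big)^{1/2}
\]
(the $\ell_2$ aggregation across columns of the columnwise $\ell_\infty$ norms) and the range carrying the Frobenius norm. Under this reading the inequality is essentially immediate: it is the statement that $\mathbf{A}$ cannot amplify a matrix in Frobenius norm by more than it amplifies each of its columns separately, which is governed by the ordinary $\ell_\infty\rightarrow\ell_2$ operator norm. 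The proof is therefore a column-by-column decoupling.

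First I would use that the Frobenius norm splits over columns, noting that the $j$th column of $\mathbf{A}\mathbf{M}$ is exactly $\mathbf{A}M_{\cdot j}$, so that
\[
\|\mathbf{A}\mathbf{M}\|_F^2=\sum_{j}\|\mathbf{A}M_{\cdot j}\|_{\ell_2}^2 .
\]
Next I would bound each summand directly from the definition $\|\mathbf{A}\|_{\infty\rightarrow2}=\max_{\vect v\neq0}\|\mathbf{A}\vect v\|_{\ell_2}/\|\vect v\|_{\ell_\infty}$, which gives $\|\mathbf{A}M_{\cdot j}\|_{\ell_2}\le\|\mathbf{A}\|_{\infty\rightarrow2}\,\|M_{\cdot j}\|_{\ell_\infty}$ for every $j$. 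Summing the squared estimates yields
\[
\|\mathbf{A}\mathbf{M}\|_F^2\le\|\mathbf{A}\|_{\infty\rightarrow2}^2\sum_{j}\|M_{\cdot j}\|_{\ell_\infty}^2=\|\mathbf{A}\|_{\infty\rightarrow2}^2\,\|\mathbf{M}\|_{(2,\infty)}^2 ,
\]
and taking square roots, dividing by $\|\mathbf{M}\|_{(2,\infty)}$, and maximizing over $\mathbf{M}\neq0$ produces the claim.

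The only genuine obstacle I anticipate is not the estimate but pinning down the correct convention for the domain norm, and the columnwise reading above is in fact forced. Were $\|\cdot\|_{(2,\infty)}$ instead the rowwise quantity $\max_i\|M_{i\cdot}\|_{\ell_2}$, the stated bound would be false: the ratio $\max_{\mathbf{M}\neq0}\|\mathbf{A}\mathbf{M}\|_F^2/\max_i\|M_{i\cdot}\|_{\ell_2}^2$ equals the semidefinite value $\max\{\operatorname{trace}(\mathbf{A}^\ast\mathbf{A}\,\mathbf{P}):\mathbf{P}\succeq0,\ P_{ii}\le1\}$, whereas $\|\mathbf{A}\|_{\infty\rightarrow2}^2=\max_{\|\vect v\|_{\ell_\infty}\le1}\|\mathbf{A}\vect v\|_{\ell_2}^2$ is only its rank-one restriction and can be strictly smaller (a Grothendieck-type gap). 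With the columnwise convention no such gap arises; moreover the bound is sharp, being attained by any $\mathbf{M}$ supported on a single column aligned with the maximizer of the $\infty\rightarrow2$ norm, so the lemma is really an equality and only the displayed direction is needed in the sequel.
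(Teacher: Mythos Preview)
Your reading of the domain norm is not the paper's. In its proof the paper writes
$\|\mathbf{A}\|_{(2,\infty)\rightarrow F}=\max_{\mathbf{C}}\|\mathbf{A}\mathbf{C}\|_F/\|\mathbf{C}\|_{2\rightarrow\infty}$
and then identifies $\|\mathbf{C}\|_{2\rightarrow\infty}$ with $\max_{1\le k\le n}\|C_{k\cdot}\|_{\ell_2}$, which is exactly the rowwise quantity you set aside. The paper's argument is row--by--row in $\mathbf{A}$: for each $i$ it expands
$\sum_{j}\bigl|\sum_{k}A_{ik}C_{kj}\bigr|^{2}=\sum_{k,k'}A_{ik}\bar A_{ik'}\langle C_{k\cdot},C_{k'\cdot}\rangle$,
replaces $\langle C_{k\cdot},C_{k'\cdot}\rangle$ by $\|C_{k\cdot}\|_{\ell_2}\|C_{k'\cdot}\|_{\ell_2}$ to obtain $\bigl|\sum_{k}A_{ik}\|C_{k\cdot}\|_{\ell_2}\bigr|^{2}$, and then sums over $i$ to recognise $\|\mathbf{A}\vect c\|_{\ell_2}^{2}$ with $c_k=\|C_{k\cdot}\|_{\ell_2}$ and $\|\vect c\|_{\ell_\infty}\le1$.

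Your Grothendieck objection is on target, and the paper's argument breaks at precisely the displayed replacement: the inequality
$\sum_{k,k'}A_{ik}\bar A_{ik'}\langle C_{k\cdot},C_{k'\cdot}\rangle\le\sum_{k,k'}A_{ik}\bar A_{ik'}\|C_{k\cdot}\|_{\ell_2}\|C_{k'\cdot}\|_{\ell_2}$
is not valid once the coefficients $A_{ik}\bar A_{ik'}$ carry signs. Already with $A_{i\cdot}=(1,-1)$ and $\mathbf{C}=\mathbf{I}_2$ the left side is $2$ and the right side is $0$. Inserting absolute values repairs this line but then delivers $\||\mathbf{A}|\|_{\infty\rightarrow2}$ rather than $\|\mathbf{A}\|_{\infty\rightarrow2}$; your semidefinite rewriting explains why no local patch can recover the stronger bound in general.

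Under your columnwise convention the decoupling proof is correct, but be aware that it establishes a weaker inequality than the paper states: since $\bigl(\sum_{j}\|C_{\cdot j}\|_{\ell_\infty}^{2}\bigr)^{1/2}\ge\max_{k}\|C_{k\cdot}\|_{\ell_2}$, your domain norm is larger and the induced operator norm you bound is therefore smaller than the paper's left--hand side. The downstream application (bounding $\|(\vect\Gc_\Lambda^{\dag})^{\ast}\mathbf{G}\|_F$ for a subgradient $\mathbf{G}$ of $J_{2,1}$) uses only the row information $\|\mathbf{G}\|_{2\rightarrow\infty}\le1$, so it is the paper's rowwise formulation that is actually invoked there; your version does not slot into that step without an extra factor.
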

\begin{proof}
Let $\mathbf{A}$ be an $m\times n$ matrix. By definition, we have
\[\|\mathbf{A}\|_{(2,\infty)\rightarrow F}=\max_{\mathbf{C}\in\mR^{n\times d}}\frac{\|\mathbf{AC}\|_F}{\|\mathbf{C}\|_{2\rightarrow\infty}}.\]
Since we have
\begin{eqnarray*}
\left(\frac{\|\mathbf{AC}\|_F}{\|\mathbf{C}\|_{2\rightarrow\infty}}\right)^2&=&\frac{\sum_{i=1}^m\sum_{j=1}^d\bigg|\sum_{k=1}^nA_{ik}C_{kj}\bigg|^2}{\max_{1\le i\le n}\|C_{i\cdot}\|_{\ell_2}^2}\\
&\le&\frac{\sum_{i=1}^m\bigg|\sum_{k=1}^nA_{ik}\|C_{k\cdot}\|_{\ell_2}\bigg|^2}{\max_{1\le i\le n}\|C_{i\cdot}\|_{\ell_2}^2}\\
&\le&\max_{\vect c\in\mR^n}\left(\frac{\|\mathbf{A}\vect c\|_{\ell_2}}{\|\vect c\|_\infty}\right)^2\\
&\le&\|\mathbf{A}\|^2_{\infty\rightarrow2},
\end{eqnarray*}
it is clearly that $\|\mathbf{A}\|_{(2\rightarrow\infty)\rightarrow F}\le\|\mathbf{A}\|_{\infty\rightarrow2}$.
In the derivation above, the first inequality is true because for each row index $i$, we have
\begin{eqnarray*}
\sum_{j=1}^d\bigg|\sum_{k=1}^nA_{ik}C_{kj}\bigg|^2&=&\sum_{j=1}^d\sum_{k,k'=1}^nA_{ik}C_{kj}\bar{A}_{ik'}\bar{C}_{k'j}\\
&=&\sum_{k,k'=1}^nA_{ik}\bar{A}_{ik'}\sum_{j=1}^dC_{kj}\bar{C}_{k'j}\\
&=&\sum_{k,k'=1}^nA_{ik}\bar{A}_{ik'}\langle C_{k\cdot}, C_{k'\cdot}\rangle\\
&\le&\sum_{k,k'=1}^nA_{ik}\bar{A}_{ik'}\|C_{k\cdot}\|_{\ell_2}\|C_{k'\cdot}\|_{\ell_2}\\
&=&\left|\sum_{k=1}^nA_{ik}\|C_{k\cdot}\|_{\ell_2}\right|^2.
\end{eqnarray*}
\end{proof}

To make the following discussion easier, we assume that the sensing matrix $\vect\Gc$ has normalized columns and we introduce some additional notations used in this appendix only.
Let $\Lambda$ be indexes of a subset of linearly independent columns of $\vect\Gc$, i.e. $\Lambda\subset\{1,2,\ldots,K\}$ such
that the Green's function vectors $\vect\wg_0(\vect y_j)$, with $j\in\Lambda$, are linearly independent.
We denote by $\vect\Gc_{\Lambda}\in\mC^{N\times|\Lambda|}$
the submatrix of $\vect\Gc$ composed of columns with indices in $\Lambda$, by $\mathbf{B}_{\Lambda}\in\mC^{N\times\nu}$
the best Frobenius norm approximation of the data matrix $\mathbf{B}$ over $\Lambda$, and by $\mathbf{X}_{0\Lambda}\in\mC^{|\Lambda|\times\nu}$ the
corresponding coefficent matrix synthesizing $\mathbf{B}_{\Lambda}$, i.e. 
such that $\mathbf{B}_{\Lambda}=\vect\Gc_{\Lambda}\mathbf{X}_{0\Lambda}$.
Note that $\mathbf{X}_{0\Lambda}=\vect\Gc_{\Lambda}^\dag\mathbf{B}_{\Lambda}$, with
$\vect\Gc_{\Lambda}^\dag=(\vect\Gc_{\Lambda}^\ast\vect\Gc_{\Lambda})^{-1}\vect\Gc_\Lambda^\ast$.

Next, we give several results related to the minimizers of the Lagrange function \eqref{MMV-functional}.
The proofs are analogue to those in \cite{TROPP06-2} and will be skipped. Interested readers can refer
to \cite{TROPP06-2} or \cite{TROPP04} for single measurement case. The first lemma is on the condition of the minimizer of \eqref{MMV-functional}.
\begin{lem}\label{lemma:lagrangecondition}
Suppose that the matrix $\mathbf{X}_\star$ is the minimizer of \eqref{MMV-functional} over all
matrices with row-support $\Lambda$. A necessary and sufficient condition for
$\mathbf{X}_\star$ to be such minimizer is that
\begin{equation}\label{eq:uniquelagrangecondition}
\mathbf{X}_{0\Lambda}-\mathbf{X}_\star=\lambda(\vect\Gc_{\Lambda}^\ast\vect\Gc_{\Lambda})^{-1}\mathbf{G},
\end{equation}
where $\mathbf{G}\in\partial J_{2,1}(\mathbf{X}_\star)$. Moreover, the minimizer is unique.
\end{lem}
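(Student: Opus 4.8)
The plan is to observe that, once we restrict attention to matrices whose row-support lies in $\Lambda$, problem \eqref{MMV-functional} becomes an \emph{unconstrained} convex program, and to read off its first-order optimality condition using the subdifferential calculus provided by Lemma~\ref{lemma:subgradJ21}. First I would parametrize the feasible set: any $\vect\Rho$ with $\operatorname{rowsupp}(\vect\Rho)\subseteq\Lambda$ is determined by its $|\Lambda|\times\nu$ block of nonzero rows, which I call $\mathbf{X}$, and on this set one has $\vect\Gc\vect\Rho=\vect\Gc_\Lambda\mathbf{X}$ while $J_{2,1}(\vect\Rho)=J_{2,1}(\mathbf{X})$, since the deleted rows contribute nothing to $J_{2,1}$. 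Thus the objective reduces to $\tfrac12\|\vect\Gc_\Lambda\mathbf{X}-\mathbf{B}\|_F^2+\lambda J_{2,1}(\mathbf{X})$, a sum of a smooth convex term and the convex penalty $\lambda J_{2,1}$ (here $\lambda$ is the regularization parameter, written $\tau$ in \eqref{MMV-functional}).

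Next I would invoke the optimality condition $\mathbf{0}\in\partial L(\mathbf{X}_\star)$. The smooth data-fidelity term is differentiable with (Wirtinger) gradient $\vect\Gc_\Lambda^\ast(\vect\Gc_\Lambda\mathbf{X}_\star-\mathbf{B})$, and by Lemma~\ref{lemma:subgradJ21} the subdifferential of $\lambda J_{2,1}$ at $\mathbf{X}_\star$ consists exactly of the matrices $\lambda\mathbf{G}$ with $\mathbf{G}\in\partial J_{2,1}(\mathbf{X}_\star)$. Hence $\mathbf{X}_\star$ minimizes $L$ over row-support $\Lambda$ if and only if there exists $\mathbf{G}\in\partial J_{2,1}(\mathbf{X}_\star)$ with $\vect\Gc_\Lambda^\ast(\vect\Gc_\Lambda\mathbf{X}_\star-\mathbf{B})+\lambda\mathbf{G}=\mathbf{0}$, the sum of the gradient and subgradient being taken consistently with the real-part inner product of Definition~\ref{def:subdiff}.

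The one nonroutine step is to recognize the identity $\vect\Gc_\Lambda^\ast\mathbf{B}=\vect\Gc_\Lambda^\ast\vect\Gc_\Lambda\mathbf{X}_{0\Lambda}$. This follows because $\mathbf{B}_\Lambda$ is the orthogonal (Frobenius) projection of $\mathbf{B}$ onto the column space of $\vect\Gc_\Lambda$, so $\vect\Gc_\Lambda^\ast(\mathbf{B}-\mathbf{B}_\Lambda)=\mathbf{0}$; and since $\mathbf{B}_\Lambda=\vect\Gc_\Lambda\mathbf{X}_{0\Lambda}$ we get $\vect\Gc_\Lambda^\ast\mathbf{B}=\vect\Gc_\Lambda^\ast\mathbf{B}_\Lambda=\vect\Gc_\Lambda^\ast\vect\Gc_\Lambda\mathbf{X}_{0\Lambda}$. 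Substituting this into the optimality relation collapses it to $\vect\Gc_\Lambda^\ast\vect\Gc_\Lambda(\mathbf{X}_\star-\mathbf{X}_{0\Lambda})=-\lambda\mathbf{G}$, and because the columns of $\vect\Gc_\Lambda$ are linearly independent the Gram matrix $\vect\Gc_\Lambda^\ast\vect\Gc_\Lambda$ is invertible, yielding precisely \eqref{eq:uniquelagrangecondition}.

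Finally, uniqueness follows from strict convexity: with $\vect\Gc_\Lambda$ of full column rank, the quadratic term has positive-definite Hessian $\vect\Gc_\Lambda^\ast\vect\Gc_\Lambda$, so $L$ is strictly convex on the subspace of matrices supported on $\Lambda$, and a strictly convex function admits at most one minimizer. I expect the main obstacle to be purely technical bookkeeping in the complex setting---keeping the gradient of the Frobenius term and the subgradient of $J_{2,1}$ consistent with the real-part pairing of Definition~\ref{def:subdiff}---rather than any genuine difficulty; modulo this, the argument parallels the single-measurement analysis of \cite{TROPP04,TROPP06-2}.
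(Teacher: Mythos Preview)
Your proposal is correct and is precisely the standard argument the paper has in mind: the paper explicitly skips the proof of this lemma, writing that it is ``analogue to those in \cite{TROPP06-2}'' and referring the reader to \cite{TROPP06-2} and \cite{TROPP04}. Your derivation---restricting to row-support $\Lambda$, writing the first-order optimality condition as gradient of the quadratic term plus a subgradient of $\lambda J_{2,1}$, using $\vect\Gc_\Lambda^\ast(\mathbf{B}-\mathbf{B}_\Lambda)=\mathbf{0}$ to replace $\vect\Gc_\Lambda^\ast\mathbf{B}$ by $\vect\Gc_\Lambda^\ast\vect\Gc_\Lambda\mathbf{X}_{0\Lambda}$, and then inverting the Gram matrix---is exactly Tropp's argument adapted to the $J_{2,1}$ penalty, and your uniqueness claim via strict convexity of the quadratic term (positive-definite Hessian $\vect\Gc_\Lambda^\ast\vect\Gc_\Lambda$) is the right one.
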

Using Lemmas~\ref{prop:norm} and ~\ref{lemma:lagrangecondition}, we have the following estimates
on the bound of the minimizer of \eqref{MMV-functional} over a specific support.
\begin{lem}\label{lemma:boundestimate}
Suppose that the matrix $\mathbf{X}_\star$ is the unique minimizer of \eqref{MMV-functional} over all
matrices with support inside $\Lambda$. Then, the following estimates hold:
\begin{eqnarray}
&&\quad \|\mathbf{X}_{0\Lambda}-\mathbf{X}_\star\|_{2\rightarrow\infty}\le\lambda\|(\vect\Gc_{\Lambda}^\ast\vect\Gc_{\Lambda})^{-1}\|_{2\rightarrow\infty}, \label{bound1}\\
&&\quad \|\vect\Gc_\Lambda(\mathbf{X}_{0\Lambda}-\mathbf{X}_\star)\|_F\le\lambda\|\vect\Gc_\Lambda^\dag\|_{2\rightarrow1} \, .\label{bound2}
\end{eqnarray}
\end{lem}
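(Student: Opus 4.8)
The plan is to read off both estimates from the exact optimality identity of Lemma~\ref{lemma:lagrangecondition} together with the subgradient bound of Corollary~\ref{col:subgradientnorm}, with everything else being operator–norm bookkeeping. Lemma~\ref{lemma:lagrangecondition} supplies
\[
\mathbf{X}_{0\Lambda}-\mathbf{X}_\star=\lambda(\vect\Gc_\Lambda^\ast\vect\Gc_\Lambda)^{-1}\mathbf{G},\qquad \mathbf{G}\in\partial J_{2,1}(\mathbf{X}_\star),
\]
and the only analytic fact I feed in is $\|\mathbf{G}\|_{2\rightarrow\infty}\le1$ from Corollary~\ref{col:subgradientnorm}. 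Both \eqref{bound1} and \eqref{bound2} then amount to propagating this single identity through the appropriate matrix norm.

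For \eqref{bound1} I would take the $\|\cdot\|_{2\rightarrow\infty}$ norm of the identity and peel off $\mathbf{G}$ by submultiplicativity. Writing $\mathbf{P}=(\vect\Gc_\Lambda^\ast\vect\Gc_\Lambda)^{-1}$, the step is $\|\mathbf{P}\mathbf{G}\|_{2\rightarrow\infty}\le\|\mathbf{P}\|_{2\rightarrow\infty}\,\|\mathbf{G}\|_{2\rightarrow\infty}$, after which $\|\mathbf{G}\|_{2\rightarrow\infty}\le1$ gives $\|\mathbf{X}_{0\Lambda}-\mathbf{X}_\star\|_{2\rightarrow\infty}\le\lambda\|\mathbf{P}\|_{2\rightarrow\infty}$, which is \eqref{bound1}.

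For \eqref{bound2} I would first massage the left-hand side into a form to which Lemma~\ref{prop:norm} applies. Since $(\vect\Gc_\Lambda^\ast\vect\Gc_\Lambda)^{-1}$ is Hermitian, one has $\vect\Gc_\Lambda(\vect\Gc_\Lambda^\ast\vect\Gc_\Lambda)^{-1}=(\vect\Gc_\Lambda^\dag)^\ast$, so the identity above yields $\vect\Gc_\Lambda(\mathbf{X}_{0\Lambda}-\mathbf{X}_\star)=\lambda(\vect\Gc_\Lambda^\dag)^\ast\mathbf{G}$. Applying Lemma~\ref{prop:norm} with $\mathbf{A}=(\vect\Gc_\Lambda^\dag)^\ast$ and $\mathbf{C}=\mathbf{G}$ bounds $\|(\vect\Gc_\Lambda^\dag)^\ast\mathbf{G}\|_F\le\|(\vect\Gc_\Lambda^\dag)^\ast\|_{\infty\rightarrow2}\,\|\mathbf{G}\|_{2\rightarrow\infty}$, and $\|\mathbf{G}\|_{2\rightarrow\infty}\le1$ removes the last factor. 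It then remains to recognize, by operator-norm duality, that $\|(\vect\Gc_\Lambda^\dag)^\ast\|_{\infty\rightarrow2}=\|\vect\Gc_\Lambda^\dag\|_{2\rightarrow1}$, which delivers \eqref{bound2}.

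The step I expect to require the most care is the submultiplicative estimate feeding \eqref{bound1}. Because $\mathbf{G}$ is matrix valued, $\|\mathbf{P}\mathbf{G}\|_{2\rightarrow\infty}$ must be controlled by the induced norm of left multiplication by $\mathbf{P}$ on the space of matrices carrying the $\|\cdot\|_{2\rightarrow\infty}$ norm, and one has to confirm that this induced norm is the quantity meant on the right-hand side of \eqref{bound1}: the honest computation shows that left multiplication by $\mathbf{P}$ has induced norm equal to the maximum absolute row sum of $\mathbf{P}$, so the symbol $\|(\vect\Gc_\Lambda^\ast\vect\Gc_\Lambda)^{-1}\|_{2\rightarrow\infty}$ should be read in this matrix-to-matrix sense and not as the maximum row $\ell_2$ norm. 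For \eqref{bound2} the only nonroutine points are the Hermitian rewriting of $\vect\Gc_\Lambda(\vect\Gc_\Lambda^\ast\vect\Gc_\Lambda)^{-1}$ and checking that Lemma~\ref{prop:norm} genuinely applies to the conjugated pseudoinverse, both of which are straightforward.
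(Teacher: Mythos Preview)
Your approach is correct and is precisely the one intended: the paper does not supply its own argument for this lemma but defers to \cite{TROPP06-2}, and what you wrote is the $J_{2,1}$ analogue of Tropp's proof---read off both bounds from the optimality identity \eqref{eq:uniquelagrangecondition}, use Corollary~\ref{col:subgradientnorm} for the subgradient bound, and for \eqref{bound2} pass through Lemma~\ref{prop:norm} after rewriting $\vect\Gc_\Lambda(\vect\Gc_\Lambda^\ast\vect\Gc_\Lambda)^{-1}=(\vect\Gc_\Lambda^\dag)^\ast$ and invoking the duality $\|(\vect\Gc_\Lambda^\dag)^\ast\|_{\infty\rightarrow2}=\|\vect\Gc_\Lambda^\dag\|_{2\rightarrow1}$.

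Your caution about \eqref{bound1} is well placed and worth recording. With the paper's own convention that $\|\cdot\|_{p\rightarrow q}$ is a vector-to-vector operator norm, the quantity $\|(\vect\Gc_\Lambda^\ast\vect\Gc_\Lambda)^{-1}\|_{2\rightarrow\infty}$ would be the maximal row $\ell_2$ norm, and the submultiplicative step $\|\mathbf{P}\mathbf{G}\|_{2\rightarrow\infty}\le\|\mathbf{P}\|_{2\rightarrow\infty}\|\mathbf{G}\|_{2\rightarrow\infty}$ is \emph{false} in general under that reading. Your computation that the induced norm of left multiplication by $\mathbf{P}$ on $(\mC^{|\Lambda|\times\nu},\|\cdot\|_{2\rightarrow\infty})$ equals the maximal absolute row sum $\|\mathbf{P}\|_{\infty\rightarrow\infty}$ is correct, and this is the quantity the right-hand side of \eqref{bound1} must denote for the inequality to hold (this is also what Tropp's single-measurement bound carries). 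Since \eqref{bound1} is not used downstream in the proof of Theorem~\ref{thm.mmvnoise}, this is only a notational wrinkle, but you have diagnosed it accurately.
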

The above results are on the bounds of the error between $\mathbf{X}_\star$ and the ``true" solution $\mathbf{X}_{0\Lambda}$
when the search is restricted to a given support $\Lambda$. 
We now give a condition under which the solution $\mathbf{X}_\star$ to \eqref{MMV-functional} will be supported on $\Lambda$.
For this condition, we need to use the {\em Exact Recovery Coefficient}
\begin{equation}\label{eq:ERC}
ERC(\Lambda)=1-\max_{j\not\in\Lambda}\|\vect\Gc_\Lambda^\dag\vect\wg(\vect y_j)\|_{\ell_1},
\end{equation}
introduced in \cite{TROPP04}, which measures the orthogonality between the column vectors used in $\vect\Gc_\Lambda$ and the remaining column vectors. 

\begin{lem}\label{lemma:ERCbound}
Under the same condition as in Lemma~\ref{lemma:boundestimate}, if the following condition holds
\begin{equation}\label{eq:ERCcondition}
\|\vect\Gc^\ast(\mathbf{B}-\mathbf{B}_\Lambda)\|_{2\rightarrow\infty}\le\lambda ERC(\Lambda),
\end{equation}
then the unique minimizer $\mathbf{X}_\star$ of \eqref{MMV-functional} is supported on $\Lambda$.
\end{lem}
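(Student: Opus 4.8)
The plan is to invoke the convex-analytic \emph{correlation condition} argument of Tropp: I will show that the matrix $\mathbf{X}_\star$, which by hypothesis is the unique minimizer of \eqref{MMV-functional} over all matrices with row support inside $\Lambda$, is in fact a global minimizer, so that the (unique) solution of \eqref{MMV-functional} must be supported on $\Lambda$. Since the objective $L$ in \eqref{MMV-functional} (whose penalty parameter I denote $\lambda$, as in Lemma~\ref{lemma:lagrangecondition}) is convex, $\mathbf{X}_\star$ is a global minimizer if and only if $\mathbf{0}\in\partial L(\mathbf{X}_\star)$, i.e.\ if and only if there is a subgradient $\mathbf{G}\in\partial J_{2,1}(\mathbf{X}_\star)$ with $\vect\Gc^\ast(\vect\Gc\mathbf{X}_\star-\mathbf{B})+\lambda\mathbf{G}=\mathbf{0}$. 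On the rows indexed by $\Lambda$ this stationarity relation is already supplied by the restricted optimality condition of Lemma~\ref{lemma:lagrangecondition}; by Lemma~\ref{lemma:subgradJ21} the only thing left to check is the off-support condition $\|\vect\wg_0^\ast(\vect y_j)(\mathbf{B}-\vect\Gc\mathbf{X}_\star)\|_{\ell_2}\le\lambda$ for every $j\notin\Lambda$, which is exactly the requirement $\|G_{j\cdot}\|_{\ell_2}\le1$ on the zero rows of $\mathbf{X}_\star$.

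To verify this off-support bound I would split the residual as $\mathbf{B}-\vect\Gc\mathbf{X}_\star=(\mathbf{B}-\mathbf{B}_\Lambda)+\vect\Gc_\Lambda(\mathbf{X}_{0\Lambda}-\mathbf{X}_\star)$, using $\mathbf{B}_\Lambda=\vect\Gc_\Lambda\mathbf{X}_{0\Lambda}$ and that $\mathbf{X}_\star$ lives on $\Lambda$. Hitting this with $\vect\wg_0^\ast(\vect y_j)$ and applying the triangle inequality produces two terms. The first, $\|\vect\wg_0^\ast(\vect y_j)(\mathbf{B}-\mathbf{B}_\Lambda)\|_{\ell_2}$, is at most $\|\vect\Gc^\ast(\mathbf{B}-\mathbf{B}_\Lambda)\|_{2\rightarrow\infty}\le\lambda\,ERC(\Lambda)$ directly by the hypothesis \eqref{eq:ERCcondition}. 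For the second term I substitute the Lagrange relation $\mathbf{X}_{0\Lambda}-\mathbf{X}_\star=\lambda(\vect\Gc_\Lambda^\ast\vect\Gc_\Lambda)^{-1}\mathbf{G}$ from Lemma~\ref{lemma:lagrangecondition} and regroup, using that the inverse of the Hermitian Gram matrix is Hermitian, to obtain $\vect\wg_0^\ast(\vect y_j)\vect\Gc_\Lambda(\mathbf{X}_{0\Lambda}-\mathbf{X}_\star)=\lambda\,(\vect\Gc_\Lambda^\dag\vect\wg_0(\vect y_j))^\ast\mathbf{G}$.

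The crux is then a single duality estimate: writing $\vect c=\vect\Gc_\Lambda^\dag\vect\wg_0(\vect y_j)$ and expanding $\vect c^\ast\mathbf{G}=\sum_i\bar c_i\,G_{i\cdot}$, I bound $\|\vect c^\ast\mathbf{G}\|_{\ell_2}\le\|\vect c\|_{\ell_1}\,\max_i\|G_{i\cdot}\|_{\ell_2}\le\|\vect c\|_{\ell_1}$, the last step using Corollary~\ref{col:subgradientnorm}. Combined with the definition \eqref{eq:ERC} of the exact recovery coefficient, which gives $\|\vect\Gc_\Lambda^\dag\vect\wg_0(\vect y_j)\|_{\ell_1}\le1-ERC(\Lambda)$, the second term is at most $\lambda\,(1-ERC(\Lambda))$. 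Adding the two contributions yields $\|\vect\wg_0^\ast(\vect y_j)(\mathbf{B}-\vect\Gc\mathbf{X}_\star)\|_{\ell_2}\le\lambda\,ERC(\Lambda)+\lambda\,(1-ERC(\Lambda))=\lambda$, which is precisely the off-support subgradient inequality. Hence $\mathbf{0}\in\partial L(\mathbf{X}_\star)$ and $\mathbf{X}_\star$ minimizes \eqref{MMV-functional} globally while being supported on $\Lambda$; global uniqueness then follows from the restricted uniqueness in Lemma~\ref{lemma:lagrangecondition}, after upgrading the off-support inequality to a strict one (which is where $ERC(\Lambda)>0$ is used to exclude any candidate placing mass off $\Lambda$).

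I expect the main obstacle to be the careful bookkeeping of the mixed matrix norms rather than any single hard inequality: one must correctly recognize that, for the $J_{2,1}$ functional, the off-support optimality condition is the row-wise $\ell_2$ bound $\|\mathbf{G}\|_{2\rightarrow\infty}\le1$ (via Lemma~\ref{lemma:subgradJ21} and Corollary~\ref{col:subgradientnorm}), and that pairing against $\vect\Gc_\Lambda^\dag\vect\wg_0(\vect y_j)$ produces an $\ell_1$ factor --- the exact place where $ERC(\Lambda)$ enters --- in contrast to the $\ell_\infty$ factor appearing in Tropp's $J_{\infty,1}$ analysis. The upstream estimates of Lemma~\ref{lemma:boundestimate}, themselves relying on Lemma~\ref{prop:norm}, are what make the substitution of the Lagrange relation legitimate, so I would keep track of the role of that operator-norm inequality throughout.
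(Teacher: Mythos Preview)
Your argument is correct and tracks the paper's proof closely: both decompose the residual as $(\mathbf{B}-\mathbf{B}_\Lambda)+\vect\Gc_\Lambda(\mathbf{X}_{0\Lambda}-\mathbf{X}_\star)$, substitute the Lagrange relation \eqref{eq:uniquelagrangecondition}, and control the off-support correlation via the $\ell_1$--$\ell_\infty$ duality bound $\|\mathbf{G}^\ast\vect\Gc_\Lambda^\dag\vect\wg_0(\vect y_j)\|_{\ell_2}\le\|\vect\Gc_\Lambda^\dag\vect\wg_0(\vect y_j)\|_{\ell_1}\le 1-ERC(\Lambda)$ together with Corollary~\ref{col:subgradientnorm}. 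The only packaging difference is that the paper argues by perturbation --- it computes $L(\mathbf{X}_\star+\vect\zeta\vect u^\ast)-L(\mathbf{X}_\star)$ for a rank-one off-support update and shows it is strictly positive --- whereas you verify the global first-order condition $\mathbf{0}\in\partial L(\mathbf{X}_\star)$ directly.

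One small point to tighten: your chain of inequalities delivers only $\|\vect\wg_0^\ast(\vect y_j)(\mathbf{B}-\vect\Gc\mathbf{X}_\star)\|_{\ell_2}\le\lambda$, not a strict bound, so the subdifferential condition alone shows $\mathbf{X}_\star$ is \emph{a} global minimizer but does not immediately preclude another minimizer with mass off $\Lambda$. Your closing remark that $ERC(\Lambda)>0$ ``upgrades'' the inequality to strict is not quite how it works in your framing; in the paper the strictness comes for free from the discarded positive quadratic term $\tfrac{1}{2}\|\vect\wg_0(\vect y_j)\vect u^\ast\|_F^2$ in the perturbation expansion. To close the gap within your approach, either invoke that same quadratic term (equivalently, strong convexity of the data-fidelity along $\vect\wg_0(\vect y_j)$), or argue that any second minimizer $\widehat{\mathbf{X}}$ must satisfy $\vect\Gc\widehat{\mathbf{X}}=\vect\Gc\mathbf{X}_\star$ and then appeal to Theorem~\ref{thm.mmv}, as the paper does later in the proof of Theorem~\ref{thm.mmvnoise}.
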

\begin{proof}
By definition, $\mathbf{B}_\Lambda=\vect\Gc_\Lambda\mathbf{X}_{0\Lambda}$. Given any vector $\vect u\in\mC^\nu$,
we have for any $j\not\in\Lambda$,
\begin{equation}\label{eq:innerproduct1}
|\langle\vect\wg^\ast(\vect y_j)(\mathbf{B}-\vect\Gc_\Lambda\mathbf{X}_{0\Lambda}),\vect u\rangle|=|\langle\vect\wg^\ast(\vect y_j)(\mathbf{B}-\mathbf{B}_\Lambda),\vect u\rangle|\le\|(\mathbf{B}-\mathbf{B}_{\Lambda})^\ast\vect\wg(\vect y_j)\|_{\ell_2}\|\vect u\|_{\ell_2}
\end{equation}
and
\begin{equation}\label{eq:innerproduct2}
|\langle\vect\wg^\ast(\vect y_j)\vect\Gc_\Lambda(\mathbf{X}_{0\Lambda}-\mathbf{X}_\star),\vect u\rangle|=|\langle\vect\wg^\ast(\vect y_j)\vect\Gc_\Lambda(\vect\Gc_\Lambda^\ast\vect\Gc_\Lambda)^{-1}\mathbf{G},\vect u\rangle|\le\|\mathbf{G}^\ast\vect\Gc_\Lambda^\dag\vect\wg(\vect y_j)\|_{\ell_2}\|\vect u\|_{\ell_2}.
\end{equation}
Since $\mathbf{X}_\star$ is the unique minimizer among all set of matrices with support included in $\Lambda$, we only
need to show that it is also the optimal solution among matrices with support larger than $\Lambda$.
Let $\vect\zeta\in\mC^K$ be a standard unit vector with support on $\{1,\ldots,K\}\backslash\Lambda$.
Then, $\mathbf{X}_\star+\vect\zeta\vect u^\ast$ is a perturbation by adding a
matrix with row support disjoint from that of $\mathbf{X}_\star$. If we compute the variation 
of \eqref{MMV-functional} with respect to this perturbation, we obtain
\begin{eqnarray*}
L(\mathbf{X}_\star+\vect\zeta\vect u^\ast,\lambda)-L(\mathbf{X}_\star,\lambda)&=&\frac{1}{2}(\|\mathbf{B}-\vect\Gc\mathbf{X}_\star-\vect\wg(\vect y_j)\vect u^\ast\|_F^2-\|\mathbf{B}-\vect\Gc\mathbf{X}_\star\|_F^2)+\\
&&\lambda(J_{2,1}(\mathbf{X}_\star+\vect\zeta\vect u^\ast)-J_{2,1}(\mathbf{X}_\star))\\
&=&\frac{1}{2}\|\vect\wg(\vect y_j)\vect u^\ast\|_F^2-\operatorname{Re}\langle\mathbf{B}-\vect\Gc\mathbf{X}_\star,\vect\wg(\vect y_j)\vect u^\ast\rangle+\lambda\|\vect u\|_{\ell_2}\\
&=&\frac{1}{2}\|\vect\wg(\vect y_j)\vect u^\ast\|_F^2-\operatorname{Re}\langle\vect\wg^\ast(\vect y_j)(\mathbf{B}-\vect\Gc_\Lambda\mathbf{X}_{0,\Lambda}),\vect u^\ast\rangle-\\
&&\operatorname{Re}\langle\vect\wg^\ast(\vect y_j)\vect\Gc_\Lambda(\mathbf{X}_{0,\Lambda}-\mathbf{X}_\star),\vect u^\ast\rangle+\lambda\|\vect u\|_{\ell_2}\\
&>&\lambda\|\vect u\|_{\ell_2}-|\langle\vect\wg^\ast(\vect y_j)(\mathbf{B}-\vect\Gc_\Lambda\mathbf{X}_{0,\Lambda}),\vect u^\ast\rangle|-|\langle\vect\wg^\ast(\vect y_j)\vect\Gc_\Lambda(\mathbf{X}_{0,\Lambda}-\mathbf{X}_\star),\vect u^\ast\rangle|\\
&\ge&\|\vect u\|_{\ell_2}\left(\lambda-\|(\mathbf{B}-\mathbf{B}_{\Lambda})^\ast\vect\wg(\vect y_j)\|_{\ell_2}-\lambda\|\mathbf{G}^\ast\vect\Gc_\Lambda^\dag\vect\wg(\vect y_j)\|_{\ell_2}\right).
\end{eqnarray*}
To show that $L(\mathbf{X}_\star+\vect\zeta\vect u^\ast,\lambda)-L(\mathbf{X}_\star,\lambda)>0$, first observe that condition \eqref{eq:ERCcondition} implies that
\[\|(\mathbf{B}-\mathbf{B}_\Lambda)^\ast\vect\wg(\vect y_j)\|_{\ell_2}
\le 
\|(\mathbf{B}-\mathbf{B}_\Lambda)^\ast\vect\Gc\|_{1\rightarrow2}
=\|\vect\Gc^\ast(\mathbf{B}-\mathbf{B}_\Lambda)\|_{2\rightarrow\infty}
\le
\lambda ERC(\Lambda), \]
and, at the same time, by the definition of $ERC(\Lambda)$ and using Corollary~\ref{col:subgradientnorm}, we obtain
\[\lambda ERC(\Lambda)\le\lambda(1-\|\vect\Gc_\Lambda^\dag\vect\wg(\vect y_j)\|_{\ell_1})
\le
\lambda(1-\|\vect\Gc_\Lambda^\dag\vect\wg(\vect y_j)\|_{\ell_1}\|\mathbf{G}\|_{2\rightarrow\infty})
\le
\lambda(1-\|\mathbf{G}^\ast\vect\Gc_\Lambda^\dag\vect\wg(\vect y_j)\|_{\ell_2}).
\]
Therefore, $L(\mathbf{X}_\star+\vect\zeta\vect u^\ast,\lambda)>L(\mathbf{X}_\star,\lambda)$ which completes the proof.
\end{proof}

With all the supportive results, we are now ready to prove our main result of MMV problem \eqref{MMV21noise}.
\begin{proof}[Proof of Theorem~\ref{thm.mmvnoise}]
Let the support of the solution to \eqref{MMV21noise}, $\mathbf{X}_0$, be $\Lambda_0$ with 
$|\Lambda_0|=M$. We denote the solution by  $\mathbf{X}_{\Lambda_0}$, and the corresponding synthesized data matrix by
$\mathbf{B}_{\Lambda_0}=\vect\Gc\mathbf{X}_{\Lambda_0}$. Since \eqref{MMV21noise} is convex,
the necessary and sufficient condition for it to have a unique solution is that there exists a pair
$(\mathbf{X}_\star,\lambda_\star)$ such that the following KKT conditions are satisfied:
\begin{equation}\label{eq:kkt1}
\mathbf{X}_\star=\argmin_{\mathbf{X}}L(\mathbf{X},\lambda)=\frac{1}{2}\|\mathbf{B}-\vect\Gc\mathbf{X}\|_F^2+\lambda_\star J_{2,1}(\mathbf{X}),
\end{equation}
\begin{equation}\label{eq:kkt2}
\|\mathbf{B}-\vect\Gc\mathbf{X}_\star\|_F=\delta,
\end{equation}
\begin{equation}\label{eq:kkt3}
\lambda_\star>0.
\end{equation}
We first consider the following problem with additional requirement that the support is included in $\Lambda_0$
\begin{equation}\label{eq:supportconstraint}
\min_{\operatorname{rowsupp}(\mathbf{X})\subset\Lambda_0}J_{2,1}(\mathbf{X})\quad\text{s.t.}\,\,\,\|\mathbf{B}-\vect\Gc\mathbf{X}\|_F\le\delta.
\end{equation}
Because $\mathbf{B}_{\Lambda_0}$ is the best Frobenius norm approximation of $\mathbf{B}$,
using Lemma~\ref{lemma:boundestimate} we obtain
\begin{eqnarray*}
\delta^2&=&\|\mathbf{B}-\vect\Gc\mathbf{X}_\star\|_F^2\\
&=&\|\mathbf{B}-\mathbf{B}_{\Lambda_0}\|_F^2+\|\mathbf{B}_{\Lambda_0}-\vect\Gc\mathbf{X}_\star\|_F^2\\
&=&\|\mathbf{B}-\mathbf{B}_{\Lambda_0}\|_F^2+\|\vect\Gc(\mathbf{X}_{\Lambda_0}-\mathbf{X}_\star)\|_F^2\\
&\le&\|\mathbf{B}-\mathbf{B}_{\Lambda_0}\|_F^2+\lambda_\star^2\|\vect\Gc_{\Lambda_0}^\dag\|^2_{2\rightarrow1}.
\end{eqnarray*}
Thus, the second KKT condition \eqref{eq:kkt2} implies that
\[\lambda_\star^2\ge\frac{\delta^2-\|\mathbf{B}-\mathbf{B}_{\Lambda_0}\|_F^2}{\|\vect\Gc_{\Lambda_0}^\dag\|^2_{2\rightarrow1}}.\]
On the other hand, according to Lemma~\ref{lemma:ERCbound}, $\mathbf{X}_\star$ has support on $\Lambda_0$ if
\[\lambda_\star\ge\frac{\|\vect\Gc^\ast(\mathbf{B}-\mathbf{B}_{\Lambda_0})\|_{2\rightarrow\infty}}{ERC(\Lambda_0)}.\]
Therefore, as long as
\[\frac{\delta^2-\|\mathbf{B}-\mathbf{B}_{\Lambda_0}\|_F^2}{\|\vect\Gc_{\Lambda_0}^\dag\|^2_{2\rightarrow1}}\ge\frac{\|\vect\Gc^\ast(\mathbf{B}-\mathbf{B}_{\Lambda_0})\|_{2\rightarrow\infty}^2}{ERC^2(\Lambda_0)},\]
$\mathbf{X}_\star$ is the optimal solution with support included in $\Lambda_0$.
Rearranging the above inequality, we have
\begin{equation}\label{eq:suffcond}
\delta^2\ge\|\mathbf{B}-\mathbf{B}_{\Lambda_0}\|_F^2+\frac{\|\vect\Gc_{\Lambda_0}^\dag\|^2_{2\rightarrow1}\|\vect\Gc^\ast(\mathbf{B}-\mathbf{B}_{\Lambda_0})\|_{2\rightarrow\infty}^2}{ERC^2(\Lambda_0)}.
\end{equation}
By definition,
\[\|\vect\Gc^\ast(\mathbf{B}-\mathbf{B}_{\Lambda_0})\|^2_{2\rightarrow\infty}=\bigg(\max_{1\le j\le K}\|\vect\wg^\ast(\vect y_j)(\mathbf{B}-\mathbf{B}_{\Lambda_0})\|_{\ell_2}\bigg)^2\le
\|\mathbf{B}-\mathbf{B}_{\Lambda_0}\|_F^2.
\]
According to Propositions~$3.7$ and $3.9$ in \cite{TROPP04},
\[\frac{\|\vect\Gc_{\Lambda_0}^\dag\|_{2\rightarrow1}^2}{ERC^2(\Lambda_0)}\le\frac{M(1-(M-1)\epsilon)}{(1-2M\epsilon+\epsilon)^2}.\]
Hence, we have
\[\|\vect{\Ec}\|_F^2\left(1+\frac{M(1-(M-1)\epsilon)}{(1-2M\epsilon+\epsilon)^2}\right)\ge\|\mathbf{B}-\mathbf{B}_{\Lambda_0}\|_F^2+\frac{\|\vect\Gc_{\Lambda_0}^\dag\|^2_{2\rightarrow1}\|\vect\Gc^\ast(\mathbf{B}-\mathbf{B}_{\Lambda_0})\|^2_{2\rightarrow\infty}}{ERC^2(\Lambda_0)}.\]
Therefore, condition \eqref{eq:constraintcondition} is sufficient for \eqref{eq:suffcond} to hold and $\mathbf{X}_\star$ is the unique minimizer to \eqref{MMV21noise} with support inside $\Lambda_0$.

Next we show that this minimizer over the support $\Lambda_0$ is also the global minimizer to \eqref{MMV21noise}.
Assume there exists another coefficient matrix $\widehat{\mathbf{X}}$ which minimizes
\eqref{MMV21noise} and thus also satisfies the KKT conditions, especially \eqref{eq:kkt2}.
Then $\vect\Gc\mathbf{X}_\star=\vect\Gc\widehat{\mathbf{X}}$ must hold. Assume this is not the case. Then
since formulation \eqref{MMV21noise} is convex, any linear combination of solutions will also be a solution.
In particular, $\frac{1}{2}(\mathbf{X}_\star+\widehat{\mathbf{X}})$ is a solution and should satisfies KKT
condition \eqref{eq:kkt2}. This is a contradiction because
$$\|\mathbf{B}-\frac{1}{2}\vect\Gc\mathbf{X}_\star-\frac{1}{2}\vect\Gc\widehat{\mathbf{X}}\|_F<\delta.$$
Now that both $\mathbf{X}_\star$ and $\widehat{\mathbf{X}}$ minimize \eqref{MMV21noise} with the same
value $\vect\Gc\mathbf{X}_\star$. It implies that both solutions satisfy
\[\min_{\mathbf{X}}J_{2,1}(\mathbf{X})\quad\text{s.t.}\,\,\,\vect\Gc\mathbf{X}=\vect\Gc\mathbf{X}_\star.\]
However, due to Theorem~\ref{thm.mmv}, when $M\epsilon<1/2$, the above optimization has a unique solution. We
then prove that $\mathbf{X}_\star=\widehat{\mathbf{X}}$, i.e. the solution to \eqref{MMV21noise} is unique.

Finally, the error bound of the minimizer compared to underlying solution is estimated as follows
\[\|\mathbf{X}_\star-\mathbf{X}_0\|_F=\|(\vect\Gc^\ast\vect\Gc)^{-1}\vect\Gc^\ast\vect\Gc(\mathbf{X}_\star-\mathbf{X}_0)\|_F\le\|\vect\Gc_{\Lambda_0}^\dag\|_{2\rightarrow2}\|\vect\Gc(\mathbf{X}_\star-\mathbf{X}_0)\|_F\le\delta/\sqrt{1-(M-1)\epsilon},\]
where we use the singular value estimate of $\vect\Gc_{\Lambda_0}$ given in \cite{DET06} and \cite{TROPP04}. Note that
if $\|(\mathbf{X}_0)_{i\cdot}\|_{\ell_2}>\delta/\sqrt{1-(M-1)\epsilon}$ for a row $i$, then  $\|(\mathbf{X}_\star)_{i\cdot}\|_{\ell_2}$ cannot be $0$ and, therefore, component $i$ is included in the recovered support.
\end{proof}

\section{Proof of results in \S\ref{sec:arrayconfig}}\label{appendix:innerproduct}
In this section, we will use $\theta$ for azimuthal angle, $\phi$ for polar angle
and $\Omega$ for the area of imaging array. We also assume the size of the array
$a$ is much larger than the distance $h$ between any two neighboring transducers.
\begin{proof}[Proof of Proposition~\ref{prop:spherical array}]
For spherical arrays of radius $L$, given any point $\vect x$ on the array and
$\vect y$ in IW, we have $|\vect x-\vect y|\approx L$. 
With the continuum approximation
\begin{equation*}
\|\vect\wg_0(\vect y)\|^2_{\ell_2}=\sum_{\vect x}\left|\frac{\exp(-\mathrm{i}\kappa|\vect x-\vect y|)}{4\pi|\vect x-\vect y|}\right|^2
\approx\frac{1}{16\pi^2h^2}\int_{\Omega}\frac{\mathrm{d}\vect x}{|\vect x-\vect y|^2}=\frac{1}{16\pi^2h^2L^2}\times(4\pi L^2)=\frac{1}{4\pi h^2},
\end{equation*}
i.e. the norm of Green's function vector is constant under the spherical array.
On the other hand, using continuum approximation, we have for the inner product
of any two Green's function vector at $\vect y_k$ and $\vect y_{k'}$,
\[\vect\wg_0^\ast(\vect y_k)\vect\wg_0(\vect y_{k'})\approx\frac{1}{16\pi^2h^2}\int_{\Omega}\frac{\exp\big(\mathrm{i}\kappa(|\vect x-\vect y_{k'}|-|\vect x-\vect y_k|)\big)}{|\vect x-\vect y_{k'}||\vect x-\vect y_k|}\,\mathrm{d}\vect x,\]
where the integral is taken on the sphere of radius $L$, i.e. $\Omega=\{\vect x: |\vect x|=L\}$.
Let $\widehat{\vect x}=\frac{\vect x}{L}$ so $|\widehat{\vect x}| = 1$ on the integral area. Because $|\vect y|\ll L$, we have the approximation
\[|\vect x-\vect y| = L|\widehat{\vect x}-\frac{\vect y}{L}| = L \sqrt{|\widehat{\vect x}|^2 + \frac{|\vect y|^2}{L^2} - 2 \widehat{\vect x}\cdot\frac{|\vect y|}{L}} \approx L-\widehat{\vect x}^\ast\vect y\, , \]
and therefore
\[|\vect x-\vect y_{k'}|-|\vect x-\vect y_k|\approx\widehat{\vect x}^\ast(\vect y_k - \vect y_{k'}).\]
Using these approximations, and since $|\widehat{\vect x}-\vect y_{k'}/L| \approx |\widehat{\vect x}-\vect y_k/L| \approx 1$, we have
\begin{eqnarray*}
\vect\wg_0^\ast(\vect y_k)\vect\wg_0(\vect y_{k'})&\approx&\frac{1}{16\pi^2h^2}\int_{|\widehat{\vect x}|=1}\frac{\exp\big(\mathrm{i}\kappa\widehat{\vect x}\cdot(\vect y_k-\vect y_{k'})\big)}{|\widehat{\vect x}-(\vect y_{k'}/L)||\widehat{\vect x}-(\vect y_k/L)|}\,\mathrm{d}\widehat{\vect x}\\
&\approx&\frac{1}{16\pi^2h^2}\int_0^{2\pi}\,\mathrm{d}\theta\int_0^\pi\,\exp(\mathrm{i}\kappa|\vect y_{k}-\vect y_{k'}|\cos\phi) \sin\phi \,\mathrm{d}\phi \\
&=&\frac{1}{8\pi h^2}\int_0^\pi\exp(\mathrm{i}\kappa|\vect y_{k}-\vect y_{k'}|\cos\phi)\sin\phi\,\mathrm{d}\phi\\
&=&\frac{1}{4\pi h^2}\frac{\sin\kappa|\vect y_k-\vect y_{k'}|}{\kappa|\vect y_k-\vect y_{k'}|} = \frac{1}{4\pi h^2} \sinc(\kappa|\vect y_k-\vect y_{k'}|) ,
\end{eqnarray*}
where we changed the surface integral to an integral characterized by the angles $\theta$ and $\phi$, with $\phi$ the angle between 
$\vect y_k-\vect y_{k'}$ and $\widehat{\vect x}$. Using the approximate form of the norm of $\vect\wg_0(\vect y)$, we have
\[\frac{\vect\wg_0^\ast(\vect y_k)\vect\wg_0(\vect y_{k'})}{\|\vect\wg_0(\vect y_k)\|_{\ell_2}\|\vect\wg_0(\vect y_{k'})\|_{\ell_2}}\approx\sinc(\kappa|\vect y_k-\vect y_{k'}|).\]
\end{proof}

\begin{proof}[Proof of Proposition~\ref{prop:planar array}]
We first calculate the norm of Green's function vector under the planar array as follows
\begin{equation}
\|\vect\wg_0(\vect y)\|^2_{\ell_2}
\approx\frac{1}{16\pi^2h^2}\int_{\Omega}\frac{\mathrm{d}\vect x}{|\vect x-\vect y|^2}=\frac{1}{16\pi^2h^2}\int_0^{2\pi}\,\mathrm{d}\theta\int_0^{\phi_0}\tan\phi\,\mathrm{d}\phi=-\frac{1}{8\pi h^2}\log(\cos\phi_0),
\end{equation}
where $\phi_0=\arctan(\frac{a}{2L})$ is the maximal polar angle determined by the size $a$ of the imaging array and the distance $L$ from the array to the IW. Using the identity $\cos(\arctan(x)) = 1 / \sqrt{ 1 + x^2}$, we obtain
\begin{equation}
\|\vect\wg_0(\vect y)\|^2_{\ell_2}
\approx
\frac{1}{16\pi h^2}\log\bigg(1+\frac{a^2}{4L^2}\bigg).
\end{equation}
Hence, for planar arrays, the norm depends on $a$ and $L$ and is independent of the pixel size of the IW.

Based on the proof of Proposition~$3.1$ in \cite{CMP13},
when $\vect y_k - \vect y_{k'}\perp\vect y_k$, it can be seen that the inner product
$$|\vect\wg_0^\ast(\vect y_{k'})\vect\wg_0(\vect y_k)|\sim1/\sqrt{\kappa|\vect y_k-\vect y_{k'}|}.$$
Therefore, we only need to show below when $\vect y_k-\vect y_{k'}\parallel\vect y_k$, the inner prodcut decays no worse than
$1/\sqrt{\kappa|\vect y_k-\vect y_{k'}|}$.

According to \cite{CMP13}, when $|\vect y_k-\vect y_{k'}|\ll L$ and $(\vect y_k-\vect y_{k'})\parallel\vect y_k$, we have that
\begin{equation*}
\vect\wg_0^\ast(\vect y_k)\vect\wg_0(\vect y_{k'})\approx\frac{1}{8\pi h^2}\int_{\cos\phi_0}^1\frac{\exp(-\mathrm{i}\kappa\eta z)}{z}\,\mathrm{d}z,
\end{equation*}
where $\eta=|\vect y-\vect y^S|$. When $\kappa\eta\rightarrow\infty$, the integrand oscillates very fast provided that
$1/\kappa\eta\ll \cos\phi_0 \ll 1$. In this case, integration by parts gives the leading asymptotic behavior as $\kappa\eta\rightarrow\infty$. Explicitly,
\[\int_{\cos\phi_0}^1\frac{\exp(-\mathrm{i}\kappa\eta z)}{z}\,\mathrm{d}z=\frac{\mathrm{i}}{\kappa\eta}\bigg(\exp(-\mathrm{i}\kappa\eta)-\frac{\exp(-\mathrm{i}\kappa\eta\cos\phi_0)}{\cos\phi_0}-\int_{\cos\phi_0}^1\frac{\exp(-\mathrm{i}\kappa\eta z)}{z^2}\,\mathrm{d}z\bigg).\]
The integral on the right hand side vanishes more rapidly than the boundary terms as $\kappa\eta\rightarrow\infty$
(to see this, integrate $\int_{\cos\phi_0}^1\frac{\exp(-\mathrm{i}\kappa\eta z)}{z^2}\,\mathrm{d}z$ by parts and notice that it vanishes like $1/\kappa\eta$).
Therefore, neglecting the integral on the right hand side, we obtain
\[
\left|\int_{\cos\phi_0}^1\frac{\exp(-\mathrm{i}\kappa\eta z)}{z}\,\mathrm{d}z\right|
\sim
\frac{1}{\kappa\eta\cos\phi_0}\bigg|
\cos\phi_0 - \exp(-\mathrm{i}\kappa\eta(\cos\phi_0-1)) \bigg|\,\,\text{as}\,\, \kappa\eta\rightarrow\infty\,.
\]
Thus,
\[
\left|\int_{\cos\phi_0}^1\frac{\exp(-\mathrm{i}\kappa\eta z)}{z}\,\mathrm{d}z\right|
\sim
\frac{1}{\kappa\eta\cos\phi_0}\sqrt{\cos^2\phi_0+1-2\cos\phi_0\cos(\kappa\eta(\cos\phi_0 - 1))}\,\,\text{as}\,\,\kappa\eta\rightarrow\infty\,.
\]

For large arrays $a \gg L$, we can approximate $\cos\phi_0=2L/\sqrt{a^2+4L^2}$ by $0$ and obtain
$|\vect\wg_0^\ast(\vect y_k)\vect\wg_0(\vect y_{k'})|\approx 1/(\kappa\eta\cos\phi_0)$
which implies that, for large arrays, the normalized inner product decreases like
$1/(\kappa\eta\cos\phi_0\log(\sec\phi_0))$, as $\kappa\eta\rightarrow\infty$.
This function depends very smoothly respect to $\cos\phi_0$ when $1/\kappa\eta\ll\cos\phi_0\ll1$,
i.e., it is almost independent of $a/L$.

Moreover, we find that
\[\frac{1}{\kappa\eta}\left(\frac{2}{\cos\phi_0}-2\right)\le\left|\int_{\cos\phi_0}^1\frac{\exp(-\mathrm{i}\kappa\eta z)}{z}\,\mathrm{d}z\right|\le\frac{2}{\kappa\eta\cos\phi_0},\]
so we get the following bounds
\[\frac{1}{\kappa\eta\log(\sec\phi_0)}\left(\frac{2}{\cos\phi_0}-2\right)\le\left|\frac{\vect\wg_0^\ast(\vect y_k)\vect\wg_0(\vect y_{k'})}{\|\vect\wg_0(\vect y_k)\|_{\ell_2}\|\vect\wg_0(\vect y_{k'})\|_{\ell_2}}\right|\le\frac{2}{\kappa\eta\cos\phi_0\log(\sec\phi_0)}.\]
Together with the estimate of the cases when $\vect y_k-\vect y_{k'}\perp\vect y_k$, we can see the inner product,
when planar array is used, has decay rate $\frac{1}{\sqrt{\kappa\eta}}$.
\end{proof}

\end{appendices}


\begin{thebibliography}{90}
\bibitem{BTPB02} {\sc L. Borcea, C. Tsogka, G. Papanicolaou and J. Berryman}, {\em Imaging and time reversal in random media}, Inverse Problems. 18 (2002), pp.~1247--1279.
\bibitem{BPT06} {\sc L. Borcea, G. Papanicolaou and C. Tsogka}, {\em Adaptive interferometric imaging in cluter and optimal illuminaiton}, Inverse Problems. 22 (2006), pp.~1405--1436.
\bibitem{BPT07} {\sc L. Borcea, G. Papanicolaou and C. Tsogka}, {\em Optimal waveform design for array imaging}, Inverse Problems. 23(2007), pp.~1973--2021.
\bibitem{CMP11} {\sc A. Chai, M. Moscoso and G. Papanicolaou}, {\em Array imaging using intensity-only measurements}, Inverse Problems. 27 (2011), 015005.
\bibitem{CMP13} {\sc A. Chai, M. Moscoso and G. Papanicolaou}, {\em Robust imaging of localized scatterers using the singular value decomposition and $\ell_1$ optimization}, Inverse Problems. 29 (2013).
\bibitem{CH06} {\sc J. Chen and X. Huo}, {\em Theoretical results on sparse representations of multiple measurement Vectors}, IEEE Trans. Signal Processing. 54 (2006), pp.~4634--4643.
\bibitem{Cotter05}  {\sc S.~F. Cotter, B.~D. Rao, K. Engan and K. Kreutz-Delgado}, {\em Sparse solutions to linear inverse problems with multiple measurement vectors}, IEEE Trans. Signal Process.. 53 (2005), pp.~2477--2488.
\bibitem{DMG05} {\sc A. Devaney, E. Marengo and F. Gruber}, {\em Time-reversal-based imaging and inverse scattering of multiply scattering point targets}, J. Acoust. Soc. Am.. 118 (2005), pp.~3129--3138.
\bibitem{DET06} {\sc D. Donoho, M. Elad and V. Temlyakov}, {\em Stable recovery of sparse overcomplete representations in the presence of noise}, IEEE Trans. Information Theory. 52 (2006), pp.~6--18.
\bibitem{Duarte05} {\sc M. Duarte, S. Sarvotham, D. Baron, M. Wakin and R. Baraniuk}, {\em Distributed Compressed Sensing of Jointly Sparse Signals}, Signals, Systems and Computers, 2005. Conference Record of the Thirty-Ninth Asilomar Conference on, pp.~1537--1541, doi: 10.1109/ACSSC.2005.1600024
\bibitem{ER10} {\sc Y. Eldar and H. Rauhut}, {\em Average Case Analysis of Multichannel Sparse Recovery Using Convex Relaxation}, IEEE Trans. Information Theory. 56 (2010), pp.~505--5019.
\bibitem{F45} {\sc L. Foldy}, {\em The multiple scattering of waves}, Pyhs. Rev.. 67 (1945), pp.~107--119.
\bibitem{Gribonval08} {\sc R Gribonval, H Rauhut, K Schnass and P Vandergheynst}, {\em Atoms of all channels, unite! Average case analysis of multi-channel sparse recovery using greedy algorithms}, Journal of Fourier analysis and Applications. 14 (2008), pp.~655-687.
\bibitem{Gruber04}  {\sc F. Gruber, E. Marengo and A. Devaney}, {\em Time-reversal imaging with multiple signal classification considering multiple scattering between the targets}, J. Acoust. Soc. Am.. 115 (2004), pp.~3042-3047.
\bibitem{L51} {\sc M. Lax}, {\em Multiple scattering of waves}, Rev. Modern. Phys.. 23 (1951), pp.~287--310.
\bibitem{L52} {\sc M. Lax}, {\em Multiple scattering of waves II, The effective field in dense systems}, Phys. Rev.. 85 (1952), pp.~261--269.
\bibitem{MCW05} {\sc D. Malioutov, M. Cetin and A. Willsky}, {\em A sparse signal reconstruction perspective for source localization with sensor arrays}, IEEE Trans. on Signal Processing. 53 (2005), pp.~ 3010--3022.
\bibitem{MG06}  {\sc E. Marengo and F. Gruber}, {\em Noniterative analytical formula for inverse scattering of multiply scattering point targets}, J. Acoust. Soc. Am.. 120 (2006), pp.~3782--3788.
\bibitem{martin06}  {\sc P.~A. Martin}, {\em Multiple Scattering Interaction of Time-Harmonic Waves with N Obstacles}, Cambridge University Press, 2006. 
\bibitem{MNPR12} {\sc M. Moscoso, A. Novikov, G. Papanicolaou and L. Ryzhik}, {\em A differential equations approach to $\ell_1$-minimization with applications to array imaging}, Inverse Problems. 28 (2012).
\bibitem{MQ13} {\sc A. Martinez and Z. Qiao}, {\em Iteratively compensating for multiple scattering in SAR imaging}, Proc. SPIE 8746, Algorithms for Synthetic Aperture Radar Imagery XX, 874603, 2013; doi:10.1117/12.2016281.
\bibitem{MTF04} {\sc G. Montaldo, M. Tanter, and M. Fink}, {\em Revisiting iterative time reversal processing: application to dtection of multiple targets}, J. Acoust. Soc. Am.. 115 (2004), pp.~776--784.
\bibitem{PTF95} {\sc C. Prada, J. Thomas, and M. Fink}, {\em The iterative time reversal process: analysis of the convergence}, J. Acoust. Soc. Am.. 97 (1995), pp.~62--71.
\bibitem{TROPP04} {\sc J. Tropp}, {\em Just relax: Convex programming methods for subset selection and sparse approximation}, ICES Report 0404, The University of Texas at Austin, February 2004.
\bibitem{TROPP06-1} {\sc J. Tropp}, {\em Just relax: Convex programming methods for identifying sparse signals in noise}, IEEE Trans. Information Theory. 52 (2006), pp.~1030--1051.
\bibitem{TROPP06-2-0}  {\sc J. Tropp, A Gilbert, and M Strauss}, {\em Algorithms for simultaneous sparse approximation. Part I: Greedy pursuit}, Signal Processing. 86 (2006), pp.~572--588.
\bibitem{TROPP06-2} {\sc J. Tropp}, {\em Algorithms for simultaneous sparse approximation. Part II: Convex relaxation}, Signal Processing. 86 (2006), pp.~589--602.
\bibitem{Wipf07} {\sc D. Wipf and B. Rao}, {\em An empirical bayesian strategy for solving the simultaneous sparse approximation problem}, IEEE Trans. Signal Process.. 55 (2007), pp.~3704--3716.
\bibitem{Zhang11} {\sc Z. Zhang and B. Rao}, {\em Sparse signal recovery with temporally correlated source vectors using sparse bayesian learning}, IEEE J. Sel. Topics Signal Process.. 5 (2011), pp.~912-926.
\end{thebibliography}
\end{document}